\theoremstyle{plain}
\newtheorem{theorem}{Theorem}
\newtheorem{lemma}{Lemma}
\newtheorem{observation}{Observation}
\newtheorem{corollary}{Corollary}
\newcommand{\W}{W^*}
\newcommand{\I}{\mathcal{I}}
\newcommand{\bnu}{\bar{\nu}}
\title{Finding Almost Tight Witness Trees}
\author[1]{Dylan Hyatt-Denesik}
\author[1]{Afrouz Jabal Ameli}
\author[2]{Laura Sanit\`a}
\affil[1]{Eindhoven University of Technology, Eindhoven, The Netherlands, \{d.v.p.hyatt-denesik,a.jabal.ameli\}@tue.nl}
\affil[2] {Bocconi University, Milan, Italy, laura.sanita@unibocconi.it}
\date{}
\begin{document}
%
%
%

%
\maketitle              
\begin{abstract}
    This paper addresses a graph optimization problem, called the Witness Tree problem, which seeks a spanning tree of a graph minimizing a certain non-linear objective function. This problem is of interest because it plays a crucial role in the analysis of the best approximation algorithms for two fundamental network design problems: Steiner Tree and  Node-Tree Augmentation. We will show how a wiser choice of witness trees leads to an improved approximation for Node-Tree Augmentation, and for Steiner Tree in special classes of graphs.
\end{abstract}
\section{ Introduction}
Network connectivity problems play a central role in combinatorial optimization. As a general goal, one would like to design a cheap network able to satisfy some connectivity requirements among its nodes. 
Two of the most fundamental problems in this area are \emph{Steiner Tree} and \emph{Connectivity Augmentation}. 

Given a network $G=(V,E)$ with edge costs, and a subset of terminals $R \subseteq V$, Steiner Tree asks to compute a minimum-cost tree $T$ of $G$ connecting the terminals in $R$. In Connectivity Augmentation, we are instead given a $k$-edge-connected graph $G=(V,E)$ and an additional set of edges  $L \subseteq V \times V$ (called \emph{links}). The goal is to add a minimum-cardinality subset of links to $G$ to make it $(k+1)$-edge-connected. 
It is well-known that the problem for odd $k$ reduces to $k=1$ (called Tree Augmentation), and for even $k$ reduces to $k=2$ (called Cactus Augmentation) (see \cite{dinitz1976structure}). All these problems are NP-hard, but admit a constant factor approximation. In the past 10 years, there have been several exciting breakthrough results in the approximation community on these fundamental problems (see~\cite{DBLP:journals/jacm/ByrkaGRS13} \cite{10.1145/2213977.2214081} \cite{DBLP:conf/stoc/Byrka0A20} \cite{nutov20202nodeconnectivity} \cite{DBLP:journals/corr/abs-2009-13257} \cite{DBLP:journals/corr/abs-2012-00086} \cite{traub2022better} \cite{DBLP:conf/stoc/0001KZ18} \cite{DBLP:journals/talg/Adjiashvili19} \cite{DBLP:journals/algorithmica/CheriyanG18} \cite{DBLP:journals/algorithmica/CheriyanG18a} \cite{DBLP:conf/soda/Fiorini0KS18} 
\cite{angelidakis2022node} 
\cite{https://doi.org/10.48550/arxiv.2209.07860}
\cite{traub2022local}). 

Several of these works highlight a deep relation between Steiner Tree and Connectivity Augmentation: the approximation techniques used for Steiner Tree have been proven to be useful for Connectivity Augmentation and vice versa. This fruitful exchange of tools and ideas has often lead to novel results and analyses. This paper continues bringing new ingredients in this active and evolving line of work. 

Specifically, we focus on a graph optimization problem which plays a crucial role in the analysis of some approximation results mentioned before. 
This problem, both in its edge- and node-variant, is centered around the concept of \emph{witness trees}. We now define this formally (see Figure~\ref{fig:witness} for an example).

\smallskip
\noindent
{\bf Edge Witness Tree (EWT) problem}. Given is a tree $T=(V,E)$ with edge costs $c: E \rightarrow \mathbb R_{\ge 0}$. We denote by $R$ the set of leaves of $T$.  The goal is to find a tree  \(W = (R,E_W)\), where $E_W \subseteq R \times R$, which minimizes the non-linear objective function  
\(\bar \nu_T(W)=\frac{1}{c(E)}\sum_{e\in E} c(e)H_{\bar w(e)}\), 
 where \(c(E) = \sum_{e \in E} c(e)\), the function \(\bar w: E \rightarrow \mathbb Z_{\geq 0} \)  is defined as
 \begin{equation*}
    \bar w(e) \coloneqq |\{pq \in E_W: e \textrm{ is an internal edge of the }p \textrm{-}q \textrm{ path in } T\}|
 \end{equation*}
and $H_{\ell}$ denotes the $\ell^{th}$ harmonic number ($H_\ell = 1 + \frac{1}{2} + \frac{1}{3} + \dots + \frac{1}{\ell}$). 
\smallskip

\smallskip
\noindent
{\bf Node Witness Tree (NWT) problem}. Given is a tree $T=(V,E)$. We denote by $R$ the set of leaves of $T$, and $S=V \setminus R$.  The goal is to find a tree  \(W = (R,E_W)\), where $E_W \subseteq R \times R$, which minimizes the non-linear objective function  \(\nu_T(W) = \frac{1}{|S|}\sum_{v\in S} H_{w(v)}\), 
 where \(w: S \rightarrow \mathbb Z_{\geq 0} \) is defined as
 \begin{equation*}
     w(v) \coloneqq |\{pq \in E_W: v \textrm{ is an internal node of the }p \textrm{-}q \textrm{ path in } T\}|
\end{equation*}
and again $H_\ell$ denotes the $\ell^{th}$ harmonic number.
\smallskip

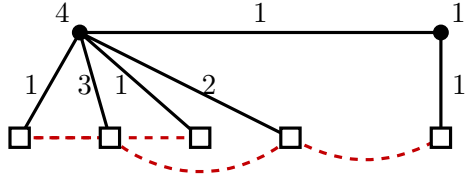
\begin{figure}[t]
    \centering
        \begin{tikzpicture}[scale=0.8]

   \coordinate (u) at  (0,1.75);
    \coordinate (v) at (6,1.75) ;
    \coordinate(uv) at (3,1.75);
    \coordinate (u1) at (-1,0);  
    \coordinate (u2) at (0.5 , 0);
    \coordinate  (u3) at (2, 0);
    \coordinate (u4) at (3.5,0);
    \coordinate  (u5) at (6,0);
    \coordinate(t3) at (2.05,0);
    \coordinate(blueEdge) at (0.08,1.75/2);
    \coordinate (ut4) at (2.15,1.77/2);

    \filldraw[very thick,color=black]
        (v)circle(3pt);


    \draw[very thick, color=black]
        (u) -- (v) node[midway] {}
        (u) -- (u1) node[midway,left] {$1$}
        (u) -- (u3) node[midway,left] {$1$}
       (u) -- (u4) node[midway,right] {}
        (v) -- (u5) node[midway,right] {$1$};
    \draw[very thick, color=black]
        (u) -- (u2) node[midway] {};
     \draw
         (u1) node[anchor= north]{}
         (u2) node[anchor= north]{}
         (t3) node[anchor= west]{}
         (u4) node[anchor= north]{}
         (u5) node[anchor= north]{}
        (uv)  node[anchor= south]{$1$}
        (blueEdge) node {$3$}
        (ut4) node {$2$}
        (u) node[anchor=south east]{$4$}
        (v) node[anchor=south west]{$1$};

     \draw[very thick, color={rgb,255: red,195; green,0; blue,3} , dashed]
        (u3) edge[]node{}  (u1)
        (u1) edge[] node{}  (u2)
        (u4) edge[bend left=40] node{} (u2)
        (u4) edge[bend right=30] node{} (u5);
    \draw[color=black,very thick]
        (u1) node[draw=black, fill=black!0]{}
        (u2) node[draw=black, fill=black!0]{}
        (u3) node[draw=black, fill=black!0]{}
        (u4) node[draw=black, fill=black!0]{}
        (u5) node[draw=black, fill=black!0]{};
    \filldraw[very thick, color=black]
        (u)circle(3pt);
        
\end{tikzpicture}
        \caption{In black, the tree \(T=(R\cup S, E)\). The dashed edges represent a witness tree $W$. The labels on edges of \(E\) and vertices of \(S\) indicate $\bar w(e)$ and  $w(v)$, respectively. We have \( \nu_T(W) = (H_4 + H_1)/2 = 1.541\bar 6\). Assuming unit cost on the edges of $E$, we have \( \bar \nu_T(W) = (4H_1+H_2+H_3)/6 = 1.\bar{2}\).}
    \label{fig:witness}
\end{figure}

\noindent
We refer to a feasible solution $W$ to either of the above problems as a \emph{witness tree}. 
We call $\bar w$ (resp. $w$) the \emph{vector imposed on} \(E\) (resp. \(S\)) by $W$.
We now explain how these problems relate to Steiner Tree and Connectivity Augmentation.

\paragraph*{ EWT and relation to Steiner Tree.} Currently, the best approximation factor for Steiner Tree is ($\ln(4) + \varepsilon$), which can be achieved by three different algorithms~\cite{10.1145/2213977.2214081} \cite{DBLP:journals/jacm/ByrkaGRS13} \cite{traub2022local}. These algorithms yield the same approximation  because in all three of them, the analysis at some point relies on constructing witness trees. 

More in detail, suppose we are given a Steiner Tree instance \((G=(V, E),R,c)\) where $c: E \rightarrow \mathbb R_{\ge 0}$ gives the edge costs. 
We can define the following:
\begin{equation*}
    \gamma_{(G,R,c)} \coloneqq \min_{\substack{T^* = (R \cup S^*, E^*):\; T^* \textrm{ is}\\\textrm{optimal Steiner tree of }(G,R,c)}} \;\;\;  \min_{\substack{W:\; W\textrm{ is a }\\\textrm{witness tree} \\ \textrm{of $T^*$} }} \bar \nu_{T^*}(W) 
\end{equation*}
We also define the following constant \(\gamma\): 
\begin{equation*}
    \gamma \coloneqq \sup \{\gamma_{(G,R,c)}: (G,R,c)\textrm{ is an instance of Steiner Tree}\}.
\end{equation*}

Byrka et al.~\cite{DBLP:journals/jacm/ByrkaGRS13} were the first to essentially prove the following.
\begin{theorem}\label{thm:gamma}
    For any $\varepsilon >0$, there is a $(\gamma + \varepsilon)$-approximation algorithm for Steiner Tree.
\end{theorem}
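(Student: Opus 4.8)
The plan is to instantiate the iterative randomized rounding framework of Byrka et al.\ on a component-based LP relaxation, and to keep the witness-tree quantity $\bar\nu$ symbolic throughout the cost analysis rather than substituting the explicit bound $\ln 4$.

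First I would set up the reduction and the relaxation. Fix $\varepsilon>0$ and, by the theorem of Borchers and Du, choose $k=k(\varepsilon)$ so that the cheapest Steiner tree all of whose full components have at most $k$ terminals has cost at most $(1+\varepsilon')\OPT$, where $\varepsilon'\to 0$ as $k\to\infty$; let $T^*$ be such a tree together with a fixed decomposition into full components (after a standard reduction we may assume every terminal is a leaf of $T^*$, so that $R$ is its leaf set and $\bar\nu_{T^*}(\cdot)$ is defined on $T^*$). I would then use the directed-component LP over all full components of size at most $k$ — one nonnegative variable per component, covering constraints over all root-separating terminal cuts — whose optimum is at most $c(E(T^*))\le(1+\varepsilon')\OPT$, since the decomposition of $T^*$ is a feasible integral point.

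Next comes the algorithm and its reduction to a single witness tree. Run $N=N(\varepsilon)$ rounds; in each round re-solve the LP on the current contracted instance, sample one full component $C$ with probability proportional to its LP value, add $C$ to the output, and contract its terminals; at the end connect the surviving terminals by a minimum spanning tree in the metric completion and output the union. For the analysis, fix a witness tree $W$ of $T^*$ attaining $\bar\nu_{T^*}(W)=\gamma_{(T^*,R,c)}$; since the graph $T^*$, viewed as its own Steiner-tree instance, has $T^*$ as its unique and hence optimal tree, we get $\gamma_{(T^*,R,c)}\le\gamma$. The structural heart is a \emph{reconnection (bridge) lemma}: at any moment of the execution, if the terminals currently survive as groups $\bar R$, there is a feasible fractional LP solution of cost at most $\sum_{e\in E(T^*)}c(e)\,\alpha_e$, where $\alpha_e$ is (a normalization of) the number of witness edges of $W$ that still join distinct groups of $\bar R$ and whose $T^*$-path uses $e$ as an internal edge. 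Thus $\alpha_e\le 1$ always, $\alpha_e$ is monotonically non-increasing along the execution, and $\alpha_e=0$ once all $\bar w(e)$ witness edges through $e$ have their endpoints merged; so the current LP value — and therefore the expected cost sampled in that round — shrinks in a way dictated by $\bar w$.

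Finally, the harmonic accounting: combining the bridge lemma with the sampling rule, the expected cost added per round is at most one component's share of the current LP value, and charging these contributions edge-by-edge to $T^*$, the total expected charge to a fixed edge $e$ telescopes to the partial harmonic sum $H_{\bar w(e)}$ — an edge lying on $\bar w(e)$ witness paths is, up to lower-order terms, a coupon whose resolution costs $1+\tfrac12+\cdots+\tfrac1{\bar w(e)}$ rounds' worth of charge. Summing, the expected cost of the sampled components is at most $\sum_e c(e)H_{\bar w(e)}+o(1)\OPT=c(E(T^*))\,\bar\nu_{T^*}(W)+o(1)\OPT\le\gamma(1+\varepsilon')\OPT+o(1)\OPT$, while the residual MST after $N(\varepsilon)$ rounds adds only $o(1)\OPT$ because the surviving fractional solution is cheap; taking $k$ and $N$ large enough in terms of $\varepsilon$ pushes the total slack below $\varepsilon\OPT$. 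I expect the main obstacle to be exactly the bridge lemma together with this harmonic accounting: one must verify that sampling-and-contracting a component decreases the reconnection cost by precisely the right random amount so that the per-edge charge sums to $H_{\bar w(e)}$ and not more, and one must cope with the LP being re-solved each round (so the fractional solution may change adversarially) rather than merely rescaled; a secondary difficulty is bounding the residual-MST term after only finitely many rounds, uniformly over all instances.
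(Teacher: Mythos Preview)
The paper does not give its own proof of this theorem; it attributes the result to Byrka, Grandoni, Rothvo\ss, and Sanit\`a and invokes it as a black box. Your sketch is precisely their iterative-randomized-rounding argument with the $\ln 4$ bound kept symbolic as $\gamma$, so your approach coincides with the proof the paper cites.
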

Furthermore, the authors in~\cite{DBLP:journals/jacm/ByrkaGRS13} showed that $\gamma \leq \ln(4)$, and hence they obtained the previously mentioned $(\ln(4) + \varepsilon)$-approximation for Steiner Tree.

\paragraph*{ NWT and relation to Connectivity Augmentation.} 
 Basavaraju et al~\cite{DBLP:conf/icalp/BasavarajuFGMRS14} introduced an approximation-preserving reduction from Cactus Augmentation (which is the hardest case of Connectivity Augmentation)\footnote{Tree Augmentation can be easily reduced to Cactus Augmentation by introducing a parallel copy of each initial edge.} to special instances of Node-Steiner Tree, named \emph{CA-Node-Steiner-Tree} instances in~\cite{angelidakis2022node}: the goal here is to connect a given set $R$ of terminals of a graph $G$ via a tree that minimizes the number of non-terminal nodes (Steiner nodes) in it. The special instances have the crucial property that each Steiner node is adjacent to at most 2 terminals. 

Byrka et al.~\cite{DBLP:conf/stoc/Byrka0A20} built upon this reduction to prove a 1.91-approximation for CA-Node-Steiner-Tree instances. This way, they were the first to obtain a better-than-2 approximation factor for Cactus Augmentation (and hence, for Connectivity Augmentation). 
Interestingly, Nutov~\cite{nutov20202nodeconnectivity} realized that a similar reduction also captures a fundamental node-connectivity augmentation problem: the \emph{Node-Tree Augmentation} (defined exactly like Tree Augmentation, but replacing edge-connectivity with node-connectivity). This way, he could improve over an easy 2-approximation for Node-Tree Augmentation that was also standing for 40 years~\cite{DBLP:journals/siamcomp/FredericksonJ81}. Angelidakis et al.~\cite{angelidakis2022node} subsequently explicitly formalized the problem at the heart of the approximation analysis: namely, the NWT problem.

More in detail, given a CA-Node-Steiner-Tree instance \((G=(V, E),R)\), we can define the following:
\begin{equation*}
    \psi_{(G,R)} \coloneqq \min_{\substack{T^* = (R \cup S^*, E^*):\; T^* \textrm{ is}\\\textrm{optimal Steiner tree of }(G,R)}} \;\;\; \min_{\substack{W:\; W\textrm{ is a }\\\textrm{witness tree} \\ \textrm{of $T^*$} }}  \nu_{T^*}(W),
\end{equation*}
We also define the  constant \(\psi\):
\begin{equation*}
    \psi \coloneqq \sup \{\psi_{(G,R)}: (G,R) \textrm{ is an instance of CA-Node-Steiner-Tree}\}.
\end{equation*}

Angelidakis et al.~\cite{angelidakis2022node} proved the following.
\begin{theorem}\label{thm:alpha}
    For any $\varepsilon >0$, there is a $(\psi + \varepsilon)$-approximation algorithm for CA-Node-Steiner Tree.
\end{theorem}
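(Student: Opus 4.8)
The plan is to follow the \emph{iterative randomized rounding} paradigm pioneered by Byrka et al.~\cite{DBLP:journals/jacm/ByrkaGRS13} for Steiner Tree, adapted to the node-weighted setting, and to use a witness tree of an \emph{optimal} Steiner tree purely as a device in the analysis: the algorithm itself never constructs a witness tree, nor does it need to know $\psi$. First I would set up the natural \emph{component LP}: for every full component $C$ of $G$ (a subtree whose leaves lie in $R$ and whose internal nodes are Steiner nodes) introduce a variable $x_C \ge 0$ whose objective coefficient is the number of Steiner nodes of $C$, together with connectivity constraints forcing the terminals to be fractionally spanned. Standard preprocessing lets us restrict to full components of size bounded in terms of $1/\varepsilon$, losing only a $(1+\varepsilon)$ factor, so this LP can be solved to sufficient accuracy in polynomial time, and its optimum $\mathrm{LP}$ lower bounds $\OPT$ (the minimum number of Steiner nodes).

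The algorithm runs for $M = \mathrm{poly}(n,1/\varepsilon)$ iterations: in each iteration it solves the component LP on the current, partially contracted instance, samples a single full component with probability proportional to its LP value, buys that component's Steiner nodes, and contracts the set of terminals it spans. After the $M$ iterations it connects whatever remains with any constant-factor approximation. Because a constant fraction of the fractional mass is rounded away per iteration, $\mathrm{LP}$ of the residual instance decays geometrically, so for $M$ large enough the final cleanup contributes only $\varepsilon\cdot\OPT$ in expectation.

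The heart of the proof is the charging argument bounding the expected number of Steiner nodes bought during the $M$ iterations. Fix an optimal Steiner tree $T^* = (R\cup S^*, E^*)$, so $|S^*| = \OPT$, and fix a witness tree $W = (R,E_W)$ of $T^*$ with $\nu_{T^*}(W) = \psi_{(G,R)} \le \psi$; let $w\colon S^* \to \mathbb{Z}_{\ge 0}$ be the vector it imposes. The key lemma to establish is that for each $v\in S^*$, the expected number of iterations in which a freshly bought Steiner node is \emph{charged to} $v$ is at most $H_{w(v)}$ plus a lower-order additive term. The mechanism is coupon-collector-like: the $w(v)$ witness edges $pq\in E_W$ whose $T^*$-path passes internally through $v$ are initially ``active''; the feasibility of $T^*$ on the contracted instance together with the structure of $W$ lets one argue that whenever a component is sampled, the chance it deactivates one of the still-active witness edges through $v$ is proportional to how many remain active, which yields $1 + \tfrac12 + \dots + \tfrac{1}{w(v)} = H_{w(v)}$ as the expected charge to $v$ before it is never charged again. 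Summing over $S^*$ gives expected cost $\le \sum_{v\in S^*} H_{w(v)} + o(\OPT) = |S^*|\,\nu_{T^*}(W) + o(\OPT) \le \psi\cdot\OPT + \varepsilon\cdot\OPT$; adding the cleanup cost and rescaling $\varepsilon$ proves the theorem.

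The main obstacle is precisely this key lemma, and it is delicate for two reasons. First, a sampled full component can span many terminals at once, hence interact with many witness edges and many Steiner nodes of $T^*$ simultaneously; one therefore needs a careful accounting --- a decomposition of how the paths of $W$ route through $T^*$, matched against the sampled component and the LP solution --- under which ``progress'' is attributed to Steiner nodes without double counting, so that the harmonic sums emerge cleanly. Second, turning a per-iteration expected-progress statement into a bound on the \emph{total} expected number of bought Steiner nodes over the whole horizon requires a telescoping/martingale argument and control of the tail after $M$ iterations. The promise that each Steiner node is adjacent to at most two terminals is what keeps the full components structured enough for the first point to go through; modulo that, the remaining ingredients are by now fairly standard in the iterative-rounding literature.
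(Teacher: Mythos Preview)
The paper does not prove this theorem: it is quoted from Angelidakis et al.~\cite{angelidakis2022node} and used as a black box, so there is no ``paper's own proof'' to compare against. Your sketch is a faithful high-level outline of exactly the iterative-randomized-rounding framework that~\cite{angelidakis2022node} (building on~\cite{DBLP:journals/jacm/ByrkaGRS13} and~\cite{DBLP:conf/stoc/Byrka0A20}) employs, so in spirit you are reproducing the intended argument.

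One caveat on accuracy: your description of the key lemma as a coupon-collector process (``the chance it deactivates one of the still-active witness edges through $v$ is proportional to how many remain active'') oversimplifies the mechanism. In the actual analysis, the harmonic sum does not arise from a per-node coupon-collector; rather, one shows a \emph{bridge-lemma}-style statement: after sampling a component from the LP, the optimal tree $T^*$ can drop, in expectation, a set of Steiner nodes whose total weight is at least a $1/M$-fraction of $|S^*|$ measured through the witness structure, and iterating this over $M$ rounds is what produces $\sum_{v\in S^*} H_{w(v)}$. The distinction matters because components may simultaneously deactivate several witness edges through the same $v$, so a naive per-$v$ coupon argument would double-count; the correct bookkeeping goes through a mapping from sampled components to bridges of $T^*$ via the witness tree, as in~\cite{DBLP:journals/jacm/ByrkaGRS13}. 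Your final paragraph acknowledges this difficulty, but the resolution you hint at is not quite the one used.
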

Furthermore, the authors of~\cite{angelidakis2022node} proved that $\psi < 1.892$, and hence obtained a $1.892$-approximation algorithm for Cactus Augmentation and Node-Tree Augmentation. This is currently the best approximation factor known for Node-Tree Augmentation (for Cactus Augmentation there is a better  algorithm~\cite{DBLP:journals/corr/abs-2012-00086}).

\paragraph*{ Our results and techniques.} Our main result is an improved upper bound on $\psi$. In particular, we are able to show $\psi < 1.8596$. Combining this with Theorem~\ref{thm:alpha}, we obtain a 1.8596-approximation algorithm for CA-Node-Steiner-Tree.
Hence, due to the above mentioned reduction, we improve the state-of-the-art approximation for Node-Tree Augmentation.

\begin{theorem}\label{thm:main_CA}
    There is a 1.8596-approximation algorithm for CA-Node-Steiner-Tree (and hence, for Node-Tree Augmentation). 
\end{theorem}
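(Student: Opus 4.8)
By Theorem~\ref{thm:alpha}, to obtain Theorem~\ref{thm:main_CA} it is enough to prove a uniform bound $\psi < 1.8596$ on the witness-tree constant: any $\eps>0$ with $\psi+\eps\le 1.8596$ then yields the claimed approximation. Unwinding the definition of $\psi$, this is the following combinatorial statement. Let $(G,R)$ be a CA-Node-Steiner-Tree instance and fix an optimal Steiner tree $T^*=(R\cup S^*,E^*)$, which we may assume has leaf set exactly $R$ and in which — the defining property of these instances — every Steiner node has at most two neighbours in $R$. We must produce a witness tree $W$ on $R$ with $\nu_{T^*}(W)$ at most some absolute constant strictly below $1.8596$. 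From here on we fix such a tree $T$ and build a good witness tree for it.

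The plan is to use a more careful witness-tree construction than that of~\cite{angelidakis2022node}, together with a local/amortized analysis. First I would normalise $T$: contract every maximal path of degree-$2$ nodes into a single edge, recording on that edge how many degree-$2$ Steiner nodes it absorbed, so that $T$ becomes a topological tree whose branching nodes have degree $\ge 3$, each adjacent to at most two terminals. I would then root this tree and process it top-down: at each branching node the pendant terminals and the subtrees hanging from it are grouped into a small number of clusters, and these clusters are joined by a carefully chosen (and, where it helps, randomised) pattern of $W$-edges. The guiding principle, which is where the at-most-two-terminals hypothesis is used, is that $\ell\mapsto H_\ell$ is concave, so $\sum_{v}H_{w(v)}$ is smallest when the load is concentrated: it is cheap to overload a few designated Steiner nodes (paying only $H$ of a large number, which grows like a logarithm) provided this lets all the remaining Steiner nodes — in particular all the degree-$2$ nodes absorbed into the edges that $W$ uses — sit at the minimum possible load, one less than their local degree. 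Simply taking a DFS Hamiltonian path on the terminals does not achieve this (it already pushes load towards $3$ on binary-like trees); the construction instead routes the unavoidable excess load through few carefully selected nodes.

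For the analysis I charge to each Steiner node $v$ a cost $H_{w(v)}$ against a unit of budget (the normalisation by $|S|$ in $\nu_T$), and aim to show that in every local gadget that can occur — a branching node with its incident contracted degree-$2$ paths and pendant terminals, subject to the at-most-two-terminals constraint — the ratio of total cost to total budget is at most a constant $c < 1.8596$. Nested branching nodes interact, since the load a node inherits from below depends on how its subtrees were wired, so this is not a single inequality but a recursion; I would introduce a potential capturing the ``interface'' a partially built $W$ presents to the rest of $T$, prove it is maintained along the construction with the right per-node amortised cost, and thereby reduce the bound on $\psi$ to a finite (computer-checkable) optimisation over gadgets, equivalently to the fixed point of the recursion. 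Checking that this value is below $1.8596$ is then a bounded calculation; a matching family of hard instances would show it is essentially best possible for this scheme, which is the sense of ``almost tight'' in the title, but this is not needed for the theorem.

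The genuine obstacle is pushing the constant below the previous $1.892$ down to $1.8596$, because the two ways a gadget can be expensive trade off against each other: keeping every branching node at load exactly $\deg-1$ forces $W$ to traverse long contracted degree-$2$ paths repeatedly, raising the load of every Steiner node interior to them, whereas letting branching nodes absorb the excess costs $H$ of a quantity that can grow with the degree. The construction must calibrate the cluster sizes and the amount of randomisation so that neither effect wins in the worst gadget, and the recursion must then be solved, or bounded, tightly enough that its fixed point certifies $\psi<1.8596$. This final estimate is where essentially all of the technical work lies.
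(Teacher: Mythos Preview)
Your proposal correctly identifies the reduction via Theorem~\ref{thm:alpha} and has the right high-level shape (an amortized analysis with a potential carried along the tree), but it is a plan rather than a proof: you explicitly defer ``essentially all of the technical work'' to an unspecified finite optimisation over gadgets and a recursion whose fixed point you do not compute. Without the concrete construction, the concrete potential, and the concrete inequality that certifies the constant, nothing has been shown; the whole content of the theorem is the numerical bound, and the earlier work of~\cite{angelidakis2022node} already carries out exactly this kind of programme to get $1.892$. You need to exhibit what is different here that pushes the constant down.

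On that point, your sketch diverges from the paper in a way that matters. You propose to process the tree \emph{top-down}, with randomised clustering at each branching node. The paper's advance is precisely to abandon one-shot or top-down marking and instead build the witness tree \emph{bottom-up}: each internal node $u$ chooses its marked child only after the subtrees $\overline{W}^{u_j}$ have been built, so that it can inspect the quantities $C^{u_j}_1$ (the marginal cost of routing one more edge through the already-chosen path $P(u_j)$) and mark the child minimising this. This adaptivity is what buys the improvement; the potential that is carried is exactly $C^u_1+\delta+\beta(k)$, and the key Lemma~\ref{lem:invariant} is the amortized inequality $h_{W^u}(Q_u)+C^u_1+\delta+\beta(k)\le \phi|Q_u|$ with $\phi=1.86-\tfrac{1}{2100}$, proved by a case analysis on whether $u$ has final children and on the size of $C^{u_j}_1$ relative to the threshold $\phi-\delta-H_2$. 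A top-down scheme cannot see $C^{u_j}_1$ when it makes its choice, and there is no indication that randomisation recovers this; indeed the paper's introduction singles out the bottom-up sequential decision as the main technical contribution. Your normalisation (contracting degree-$2$ paths) is also not what the paper does: it instead removes terminals, works with \emph{final} Steiner nodes, and decomposes into components rooted at final nodes, which is what makes the induction in Lemma~\ref{lem:invariant} go through cleanly.
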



Our result is based on a better construction of witness trees for the NWT problem. At a very high level, the witness tree constructions used previously in the literature use a \emph{marking-and-contraction} approach, that can be summarized as follows. First, root the given tree $T$ at some internal Steiner node. Then, every Steiner node $v$ chooses (\emph{marks}) an edge which connects to one of its children: this identifies a path from $v$ to a terminal. Contracting the edges along this path yields a witness tree $W$. The way this marking choice is made varies: it is random in \cite{DBLP:journals/jacm/ByrkaGRS13}, it is biased depending on the nature of the children in \cite{DBLP:conf/stoc/Byrka0A20}, it is deterministic and taking into account the structure of $T$ in \cite{angelidakis2022node}. However, all such constructions share the fact that decisions can be thought of as being taken ``in one shot'', at the same time for all Steiner nodes. 
Instead, here we consider a bottom-up approach for the construction of our witness tree, where a node takes a marking decision only after the decisions of its children have been made. A sequential approach of this kind allows a node to have a more precise estimate on the impact of its own decision to the overall non-linear objective function cost, but it becomes more challenging to analyze. Overcoming this challenge is the main technical contribution of this work, and the insight behind our improved upper-bound on \(\psi\).

We complement this result with an almost-tight lower-bound on \(\psi\), which improves over a previous lower bound given in~\cite{angelidakis2022node}.
\begin{theorem}
    \label{thm:nodelowerbound}
    For any $\varepsilon > 0$, there exists a CA-Node-Steiner-Tree instance ($G_\varepsilon, R_\varepsilon$) such that $\psi_{(G_\varepsilon,R_\varepsilon)} > 1.841\bar{6} - \varepsilon$.
\end{theorem}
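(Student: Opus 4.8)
The plan is to exhibit an explicit family of CA-Node-Steiner-Tree instances on which \emph{every} witness tree of \emph{every} optimal Steiner tree is expensive, so that $\psi_{(G_\eps,R_\eps)}$ is forced above $1.841\bar 6 - \eps$. First I would design the hard tree $T^*$: a balanced (or nearly balanced) rooted tree whose internal Steiner nodes each have a controlled number of children, all leaves being terminals, and crucially satisfying the CA-Node-Steiner-Tree structural restriction that each Steiner node is adjacent to at most two terminals. This forces most Steiner nodes to have their children be Steiner nodes rather than terminals, which limits how ``cheaply'' the marking-and-contraction process — or indeed any witness tree construction — can keep the weights $w(v)$ small. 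I would then argue that $T^*$ is the unique optimal Steiner tree of the instance (or at least that all optimal Steiner trees are isomorphic to it), which is standard to enforce by making $G_\eps$ essentially equal to $T^*$ plus the link edges among terminals, so the outer minimization over $T^*$ in the definition of $\psi_{(G_\eps,R_\eps)}$ collapses.

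The core of the argument is the inner bound: for \emph{any} witness tree $W=(R,E_W)$ of $T^*$, show $\nu_{T^*}(W)=\frac{1}{|S|}\sum_{v\in S}H_{w(v)}$ is large. Since $W$ is a tree on the $|R|$ leaves, it has exactly $|R|-1$ edges, and each edge $pq\in E_W$ contributes $1$ to $w(v)$ for every internal Steiner node $v$ on the $T^*$-path between $p$ and $q$. The key combinatorial fact is a lower bound on $\sum_{v\in S} w(v) = \sum_{pq\in E_W}(\text{number of internal Steiner nodes on the }p\text{-}q\text{ path})$: because $W$ is connected and spans all leaves, the paths realizing its edges must collectively traverse the tree enough times that this sum cannot be too small — in a balanced tree of depth $d$, a typical edge of $W$ joins leaves in different subtrees and so passes through $\Omega(d)$ Steiner nodes. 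I would make this precise by a charging/counting argument over the levels of $T^*$: at each level, count how many $W$-edges must cross between sibling subtrees, using the fact that contracting-to-a-point any single subtree still leaves $W$ connected. Given the lower bound on $\sum_v w(v)$ together with the cardinality bound $w(v)\le \deg_W$-type constraints, convexity of $H$ (more precisely, the concavity of $\ell\mapsto H_\ell$, so that $\frac1{|S|}\sum H_{w(v)}$ is \emph{minimized} when the $w(v)$ are as spread out / as equal as the constraints allow — one must be careful about the direction) pins down the minimum value of the objective, and optimizing the branching parameters of $T^*$ and letting the depth $d\to\infty$ drives the ratio to the claimed constant $1.841\bar 6$; the stated $-\eps$ absorbs the lower-order terms from finite depth.

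The main obstacle I anticipate is the inner lower bound over \emph{all} witness trees, not just those arising from marking-and-contraction: a clever $W$ could try to use many ``short'' edges (joining nearby leaves through few Steiner nodes) to keep most $w(v)$ small, at the cost of a few Steiner nodes having large $w(v)$, and since $H$ grows only logarithmically this trade-off is exactly what one must rule out quantitatively. Controlling it requires a tight handle on the structure of spanning trees of the ``leaf graph'' relative to the host tree $T^*$ — essentially a statement that no matter how $W$ distributes its edges, the induced weight vector $w$ majorizes (in the relevant sense) the vector produced by the natural balanced construction, so that concavity of $H$ gives the bound. I would isolate this as a self-contained lemma about weight vectors imposed on Steiner nodes of a complete $b$-ary tree, prove it by induction on the depth (peeling off the root's subtrees and tracking how many $W$-edges leave each subtree), and then assemble Theorem~\ref{thm:nodelowerbound} by choosing $b$ and $d$ to optimize the resulting closed-form expression against the target $1.841\bar 6$. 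A secondary, more routine obstacle is verifying the CA-Node-Steiner-Tree adjacency restriction is met by the chosen $T^*$ and that no optimal Steiner tree other than $T^*$ exists in $G_\eps$; both are handled by making $G_\eps$ minimal (only the edges of $T^*$ and the candidate links) and checking a simple cut/counting argument.
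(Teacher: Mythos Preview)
Your proposal has a genuine gap in the core analytic step, and it also diverges from the paper's approach in a way that matters.

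\textbf{The concavity argument is backwards.} You write that a lower bound on $\sum_{v\in S} w(v)$, together with concavity of $\ell\mapsto H_\ell$, ``pins down the minimum value of the objective.'' But Jensen's inequality for a concave function gives $\frac{1}{|S|}\sum_v H_{w(v)} \le H\bigl(\frac{1}{|S|}\sum_v w(v)\bigr)$, i.e.\ an \emph{upper} bound on $\nu_{T^*}(W)$, not a lower bound. To bound $\nu_{T^*}(W)$ from below you cannot rely on the total $\sum_v w(v)$ alone: a witness tree that concentrates large $w$-values on a few Steiner nodes and keeps the rest at $w(v)=1$ or $2$ could have a large sum $\sum_v w(v)$ yet a small $\sum_v H_{w(v)}$, since $H$ grows only logarithmically. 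Your majorization sentence acknowledges the delicacy (``one must be careful about the direction'') but then assumes the conclusion; there is no mechanism in your plan that prevents this concentration. This is exactly the trade-off you flag as ``the main obstacle,'' and the proposal does not actually overcome it.

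\textbf{What the paper does instead.} The paper sidesteps the minimization over all witness trees by invoking the structural Theorem~\ref{thm:nodelaminar}: an optimal witness tree can be taken to be \emph{laminar}. On the paper's specific instance (a path of blocks $B_i$, each $B_i$ consisting of a central Steiner node $s_i$ with three Steiner children $t_{i1},t_{i2},t_{i3}$, each carrying two terminals), laminarity forces so much rigidity that the optimal $W^*$ can be described explicitly (Lemma~\ref{lem:CentersOflengthOne}), and $\nu_T(W^*)$ is then \emph{computed} to be $\frac{1}{4}(2H_2+H_4+H_5)=221/120=1.841\bar 6$ up to boundary effects that vanish as $q\to\infty$. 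So the lower bound is not obtained by an inequality over all $W$, but by reducing to laminar $W$ and solving that restricted problem exactly. Your balanced $b$-ary instance and level-by-level charging do not obviously lead to this constant, and without laminarity (or a substitute structural restriction) the inner minimization remains the hard part.
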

The above theorem implies that, in order to  significantly improve the approximation for Node-Tree Augmentation, very different techniques need to be used. 
To show our lower-bound we prove a structural property on optimal witness trees, called \emph{laminarity}, which in fact holds for optimal solutions of both the NWT problem and the EWT problem. 

As an additional result, we also improve the approximation bound for Steiner Tree in the special case of \emph{Steiner-claw free} instances. 
A Steiner-Claw Free instance is a Steiner-Tree instance where the subgraph \(G[V\setminus R]\) induced by the Steiner nodes is claw-free (i.e., every node has degree at most 2). These instances were introduced in~\cite{feldmann2016equivalence} 
in the context of studying the integrality gap of a famous LP relaxation for Steiner Tree, called the \emph{bidirected cut relaxation}, that is long-conjectured to have integrality gap strictly smaller than $2$. 

\begin{theorem}
\label{thm:clawupper}
    There is a    \((\frac{991}{732} + \varepsilon < 1.354)\)-approximation for  Steiner Tree on Steiner-claw free instances.
\end{theorem}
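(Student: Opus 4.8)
The plan is to leverage the EWT framework (Theorem~\ref{thm:gamma}) by showing that on Steiner-claw free instances, the relevant constant $\gamma$ is strictly smaller than $\ln(4)$; specifically, we aim to show $\gamma_{(G,R,c)} \le \frac{991}{732}$ whenever $G[V\setminus R]$ has maximum degree at most $2$. The point is that Theorem~\ref{thm:gamma} and its proof in~\cite{DBLP:journals/jacm/ByrkaGRS13} are instance-wise: a $(\gamma_{(G,R,c)}+\varepsilon)$-approximation is obtained on each instance, so an improved bound on $\gamma$ restricted to the claw-free class immediately yields the claimed approximation ratio. First, I would fix an optimal Steiner tree $T^* = (R\cup S^*, E^*)$ of the instance and analyze its structure: since every Steiner node of $G$ has degree at most $2$ in $G[V\setminus R]$, every Steiner node in $T^*$ is adjacent to at most two other Steiner nodes. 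Consequently the Steiner nodes of $T^*$ decompose into vertex-disjoint paths (``Steiner paths''), each of whose internal nodes has exactly two Steiner neighbors and each of whose endpoints may attach to terminals. This rigid path structure is exactly the leverage we have over a general Steiner tree.

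Next, I would construct a good witness tree $W$ exploiting this structure and bound $\bar\nu_{T^*}(W)$. The marking-and-contraction paradigm described in the introduction is the natural tool: root $T^*$ at a Steiner node and have each Steiner node mark an edge to one of its children. On a long Steiner path, a naive random marking (as in~\cite{DBLP:journals/jacm/ByrkaGRS13}) forces many edges to carry large $\bar w$-values and hence large harmonic penalties. Instead, because the Steiner subgraph is a union of paths, one can be much more economical: along each Steiner path, route the contraction so that the ``load'' $\bar w(e)$ on each tree edge $e$ stays small — ideally the internal path edges each get witness load close to $1$ or $2$ rather than growing logarithmically. The key computation is then to bound $\frac{1}{c(E^*)}\sum_{e} c(e) H_{\bar w(e)}$ under the worst-case distribution of edge costs and path lengths compatible with the claw-free constraint; the constant $\frac{991}{732}$ should emerge as the optimum of a small LP or a bounded case analysis over the possible local configurations of a Steiner node (how many terminal neighbors, whether it is a path interior or endpoint, how its subtree's contractions propagate). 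I would set up this local charging so that each edge's harmonic cost is amortized against the edge costs in its subtree, and verify the ratio $\frac{991}{732}$ balances the worst cases.

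The main obstacle I anticipate is handling the interaction between the witness loads and arbitrary \emph{edge costs}: unlike the NWT setting where $|S|$ simply counts Steiner nodes uniformly, here the objective is cost-weighted, so an adversary can place large cost on precisely the edges that the path-routing is forced to load heavily. Controlling this requires a careful choice of which direction to contract each Steiner path (the marking should be biased toward the cheap side) together with a global accounting argument showing the expensive edges cannot all be simultaneously overloaded — this is where the path structure must be used quantitatively, not just qualitatively. A secondary technical point is that $T^*$ need not be ``full'' (Steiner paths may attach to terminals at interior-ish positions), so I would first reduce to full components in the standard way and argue the claw-free property is inherited, then run the argument component-wise. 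Assembling these pieces — structural decomposition into Steiner paths, a cost-aware bottom-up marking, and the worst-case LP/case analysis yielding $\frac{991}{732}$ — gives the theorem via Theorem~\ref{thm:gamma}.
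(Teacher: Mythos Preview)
Your high-level framework is correct and matches the paper: reduce to bounding $\gamma_{(G,R,c)}$ for claw-free instances via Theorem~\ref{thm:gamma}, decompose $T^*$ into full components (caterpillars: a Steiner path $s_1,\dots,s_q$ with one terminal $r_i$ hanging off each $s_i$ after normalization), and then build a clever witness tree on each. So far so good.

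Where your plan diverges from the paper, and where I think it has a real gap, is in the witness-tree construction itself. You propose a deterministic, locally cost-aware marking (``bias toward the cheap side'', local charging against subtree costs). The paper does something quite different and, I believe, essential for hitting the tight constant: it ignores individual edge costs entirely and looks only at the single global ratio $\alpha = c(O)/c(L)$, where $O$ are the Steiner-path edges and $L$ the terminal-incident edges. Based on $\alpha$ it fixes a block size $t_\alpha \in \{1,3,5\}$ and then builds the witness tree as a chain of stars of $t_\alpha$ consecutive terminals, with a \emph{uniformly random offset} $\sigma\in\{1,\dots,t_\alpha\}$. The randomness is what absorbs the adversarial placement of costs: it gives clean uniform bounds $\mathbb{E}[H_{\bar w(e)}]\le \lambda_L(t_\alpha)$ for $e\in L$ and $\mathbb{E}[H_{\bar w(e)}]\le \lambda_O(t_\alpha)$ for $e\in O$, after which the whole analysis collapses to the one-variable function $\frac{\lambda_L(t_\alpha)+\alpha\,\lambda_O(t_\alpha)}{1+\alpha}$, which is maximized at $\alpha=32/90$ with value exactly $991/732$.

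Your ``bias toward the cheap side'' idea does not obviously yield such uniform per-edge bounds, and without randomization you are fighting the cost adversary edge by edge; I do not see how a purely local deterministic scheme gets you to the tight $991/732$ rather than some looser constant. The two concrete ingredients you are missing are (i) the random-offset star construction and (ii) the reduction of the entire cost landscape to the single scalar $\alpha$ and a three-way case split on $t_\alpha$.
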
   

We prove the theorem by showing that, for any Steiner-Claw Free instance \((G, R, c)\),  $\gamma_{(G,R,c)} \leq \frac{991}{732}$. 
The observation we use here is that an optimal Steiner Tree solution $T$ in this case is the union of components that are caterpillar graphs\footnote{{\color{black}A caterpillar graph is  defined as a tree in which every leaf is of distance $1$ from a central path. }}: this knowledge can be exploited to design ad-hoc witness trees. Interestingly, we can also show that this bound is tight: once again, the proof of this lower-bound result relies on showing laminarity for optimal witness trees.

\begin{theorem}
\label{thm:clawlowerbound}
    For any $\varepsilon > 0$, there exists Steiner-Claw Free instance \((G_\varepsilon, R_\varepsilon, c_\varepsilon)\) such that \\$\gamma_{(G_\varepsilon,R_\varepsilon,c_\varepsilon)} > \frac{991}{732} - \varepsilon$.

\end{theorem}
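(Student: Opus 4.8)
The plan is to construct an explicit family of Steiner-Claw Free instances on which the best possible witness tree is only $\varepsilon$-better than the bound of Theorem~\ref{thm:clawupper}. Because a tree has a unique spanning connected subgraph, I would simply take $G_\varepsilon$ to be one caterpillar $T^*$: its Steiner nodes form the spine path $s_1,\dots,s_n$, so $G_\varepsilon[V\setminus R_\varepsilon]$ is a path and the instance is Steiner-Claw Free; and provided every spine node lies on a path between two terminals, $T^*$ is the \emph{unique} optimal Steiner tree, whence $\gamma_{(G_\varepsilon,R_\varepsilon,c_\varepsilon)}=\min_W \bar{\nu}_{T^*}(W)$ with $W$ ranging over all trees on the leaf set $R_\varepsilon$. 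The caterpillar is assembled from a fixed gadget — a constant number of pendant terminals at each interior spine node, plus a fixed ratio between spine-edge and leaf-edge costs — repeated $n=n(\varepsilon)$ times, where the gadget and the cost ratio are reverse-engineered from the proof of Theorem~\ref{thm:clawupper} so that the optimization below has value exactly $\tfrac{991}{732}$.

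I would then invoke laminarity to cut down the witness trees that need to be considered. Since it holds for optimal solutions of the EWT problem, there is an optimal $W$ whose imposed structure is laminar; on a caterpillar the set of spine edges traversed by the $T^*$-path of a link $pq$ is an interval of the spine, so laminarity means these intervals form a laminar family. Combined with the fact that $W$ is a spanning tree of $R_\varepsilon$, this endows $W$ with a recursive structure indexed by the laminar tree of intervals, in terms of which $\bar{\nu}_{T^*}(W)$ becomes an additive quantity.

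The heart of the argument is the lower bound on the best laminar witness tree. Using the concavity of $\ell\mapsto H_\ell$ together with the chosen costs, I would cast $\min_W \bar{\nu}_{T^*}(W)$ over laminar $W$ as a combinatorial optimization (or linear program) of size independent of $n$, whose optimum is a self-similar, asymptotically periodic witness tree. The claim is that this optimum equals $\tfrac{991}{732}$: the inequality $\le \tfrac{991}{732}$ is realized by the tight example embedded in the proof of Theorem~\ref{thm:clawupper}, while the inequality $\ge \tfrac{991}{732}$ — the statement that \emph{every} laminar witness tree on this caterpillar is at least this costly — is the substantive part. Boundary effects at the two ends of the spine change the average by only $O(1/n)$, so for $n$ large enough $\bar{\nu}_{T^*}(W) > \tfrac{991}{732}-\varepsilon$ for every witness tree, which gives the theorem.

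The step I expect to be hardest is this last one: choosing the gadget and the cost ratio so that the induced optimization has value \emph{exactly} $\tfrac{991}{732}$, and then proving the matching lower bound over all laminar interval families — a delicate case analysis (or LP-duality computation) where the difficulty is making the constant come out exactly right rather than merely close. A secondary point needing care is checking that laminarity really may be assumed for this instance and that the uniqueness of the optimal Steiner tree is robust.
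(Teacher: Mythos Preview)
Your proposal is correct and follows essentially the same route as the paper: the instance is a caterpillar with one terminal per spine node and spine-to-leaf cost ratio $\alpha=32/90$, laminarity (Theorem~\ref{thm:edgelaminar}) reduces the optimal witness tree to a path of ``centers'' each carrying a star-shaped \emph{section}, and the finite optimization you anticipate boils down to showing each section spans at most five terminals and then checking the five cases, with the boundary contributing $O(1/q)$. The only refinement worth noting is that the laminar structure here is flat rather than genuinely recursive---the sections sit side by side along the spine---so the reduction to a constant-size problem is a direct exchange argument (Lemmas~\ref{lem:sectionmid} and~\ref{lem:t6}) rather than an LP-duality computation.
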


 As a corollary of our results, we also get an improved bound on the integrality gap of the bidirected cut relaxation for Steiner-Claw Free instances (this follows directly from combining our upper bound with the results in~\cite{feldmann2016equivalence}). 
    Though these instances are quite specialized, they serve the purpose of passing the message: exploiting the structure of optimal solutions helps in choosing better witnesses, hopefully arriving at tight (upper and lower) bounds on $\gamma$ and $\psi$.

\section{ Laminarity}
\label{sec:laminarity}
In this section, we prove some key structural properties of witness trees. We assume to be given a Node (Edge) Witness Tree instance $T = (V,E)$ with leaves $R$ (and edge costs $c : E \rightarrow \mathbb{R}_{\geq 0}$), where $R$ denotes the leaves of $T$, we will show that we can characterize witness trees minimizing \(\nu_T(W)\)  (\(\bnu_T(W)\)) using the following notion of \emph{laminarity}. 
Given a witness tree \(W = (R, E_W)\), we say edges  \(f_1f_2,f_3f_4\in E_W\) \emph{cross} if the \(f_1\textrm{-}f_2\) and \(f_3\textrm{-}f_4\) paths in \(T\) share an internal node but not an endpoint. We say that \(W\) is \emph{laminar} if it has no crossing edges. For nodes $u,v\in V$, we denote by $T_{uv}$ the path in $T$ between the nodes $u$ and $v$. Similarly, for $e\in E_W$, we denote by $T_{e}$ the path in $T$ between the endpoints of $e$.

The following Theorem shows that there is always a witness tree minimizing \(\nu_T(W)\) that is laminar.

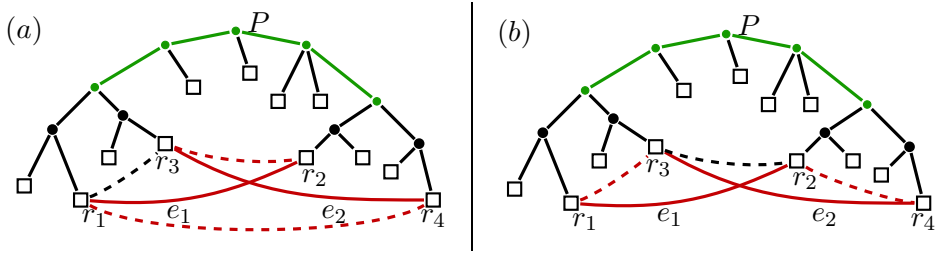
\begin{figure}[t]
    \begin{center}
        \begin{tabular}{c|c}
            \begin{tikzpicture}[scale=0.75]
                
                \tikzset{black dot/.style={draw=black, very thick, circle,minimum size=0pt, inner sep=1pt, outer sep=1pt,fill=black}}
                \tikzset{terminal/.style={draw=black,  thick,minimum size=0pt, inner sep=2.5pt, outer sep=1pt}}
                \tikzset{P node/.style={fill={rgb,255: red,20; green,154; blue,0}, draw={rgb,255: red,20; green,154; blue,0}, circle, minimum size=0pt,inner sep=1pt, outer sep=1pt}}
            
                \tikzstyle{witness edge}=[-, draw={rgb,255: red,195; green,0; blue,3}, very thick]
                \tikzstyle{T edges}=[-, very thick]
                \tikzstyle{new witness}=[-, draw={rgb,255: red,195; green,0; blue,3}, dashed, very thick]
                \tikzstyle{connected terminals}=[-, draw=black, dashed, very thick]
                \tikzstyle{P}=[-, draw={rgb,255: red,20; green,154; blue,0}, very thick]

    		\node [style=P node] (1) at (-9.5, 3.25) {};
    		\node [style=terminal] (2) at (-9, 2.5) {};
    		\node [style=P node] (3) at (-8.25, 3.5) {};
    		\node [style=terminal] (5) at (-7.5, 2.25) {};
    		\node [style=P node] (6) at (-7, 3.25) {};
    		\node [style=P node] (7) at (-5.75, 2.25) {};
    		\node [style=black dot] (8) at (-6.5, 1.75) {};
    		\node [style=black dot] (9) at (-5, 1.5) {};
    		\node [style=terminal] (10) at (-5.5, 1) {};
    		\node [style=terminal] (11) at (-4.75, 0.5) {};
    		\node [style=terminal] (12) at (-7, 1.25) {};
    		\node [style=P node] (13) at (-10.75, 2.5) {};
    		\node [style=black dot] (14) at (-10.25, 2) {};
    		\node [style=terminal] (15) at (-9.5, 1.5) {};
    		\node [style=black dot] (16) at (-11.5, 1.75) {};
    		\node [style=terminal] (17) at (-12, 0.75) {};
    		\node [style=terminal] (18) at (-11, 0.5) {};
    		\node [style=terminal] (19) at (-6, 1.25) {};
    		\node [style=terminal] (20) at (-10.5, 1.25) {};
    		\node [style=terminal] (21) at (-6.75, 2.25) {};
    		\node  (22) at (-9.25, 0.25) {$e_1$};
    		\node  (23) at (-6.5, 0.25) {$e_2$};
    		\node  (24) at (-10.75, 0.175) {$r_1$};
    		\node  (25) at (-6.85, 0.9) {$r_2$};
    		\node  (26) at (-9.45, 1.125) {$r_3$};
    		\node  (27) at (-4.75, 0.175) {$r_4$};
    		\node  (29) at (-7.85, 3.65) {{$P$}};
    		\node [style=terminal] (26) at (-8, 2.75) {};
      		\node (31) at (-12, 3.5) {$(a)$};
            
    		\draw [style=witness edge, bend right=15] (18) to (12);
    		\draw [style=witness edge, in=180, out=-30] (15) to (11);
    		\draw [style=P] (1) to (13);
    		\draw [style=T edges] (13) to (14);
    		\draw [style=T edges] (14) to (20);
    		\draw [style=T edges] (14) to (15);
    		\draw [style=T edges] (13) to (16);
    		\draw [style=T edges] (16) to (17);
    		\draw [style=T edges] (16) to (18);
    		\draw [style=T edges] (1) to (2);
    		\draw [style=P] (1) to (3);
    		\draw [style=P] (3) to (6);
    		\draw [style=T edges] (6) to (5);
    		\draw [style=T edges] (6) to (21);
    		\draw [style=P] (6) to (7);
    		\draw [style=T edges] (7) to (8);
    		\draw [style=T edges] (7) to (9);
    		\draw [style=T edges] (9) to (10);
    		\draw [style=T edges] (9) to (11);
    		\draw [style=T edges] (8) to (19);
    		\draw [style=T edges] (8) to (12);
    		\draw [style=new witness, bend right=15, looseness=0.75] (15) to (12);
    		\draw [style=connected terminals, bend right=15, looseness=0.50] (18) to (15);
    		\draw [style=new witness, bend right, looseness=0.50] (18) to (11);
                
		\draw [style=T edges] (3) to (26);
            \end{tikzpicture}
            &
            \begin{tikzpicture}[scale=0.75]
                
                \tikzset{black dot/.style={draw=black, very thick, circle,minimum size=0pt, inner sep=1pt, outer sep=1pt,fill=black}}
                \tikzset{terminal/.style={draw=black,  thick,minimum size=0pt, inner sep=2.5pt, outer sep=1pt}}
                \tikzset{P node/.style={fill={rgb,255: red,20; green,154; blue,0}, draw={rgb,255: red,20; green,154; blue,0}, circle, minimum size=0pt,inner sep=1pt, outer sep=1pt}}
            
                \tikzstyle{witness edge}=[-, draw={rgb,255: red,195; green,0; blue,3}, very thick]
                \tikzstyle{T edges}=[-, very thick]
                \tikzstyle{new witness}=[-, draw={rgb,255: red,195; green,0; blue,3}, dashed, very thick]
                \tikzstyle{connected terminals}=[-, draw=black, dashed, very thick]
                \tikzstyle{P}=[-, draw={rgb,255: red,20; green,154; blue,0}, very thick]
                
                \node [style=P node] (1) at (-9.5, 3.25) {};
                \node [style=terminal] (2) at (-9, 2.5) {};
                \node [style=P node] (3) at (-8.25, 3.5) {};
                \node [style=terminal] (5) at (-7.5, 2.25) {};
                \node [style=P node] (6) at (-7, 3.25) {};
                \node [style=P node] (7) at (-5.75, 2.25) {};
                \node [style=black dot] (8) at (-6.5, 1.75) {};
                \node [style=black dot] (9) at (-5, 1.5) {};
                \node [style=terminal] (10) at (-5.5, 1) {};
                \node [style=terminal] (11) at (-4.75, 0.5) {};
                \node [style=terminal] (12) at (-7, 1.25) {};
                \node [style=P node] (13) at (-10.75, 2.5) {};
                \node [style=black dot] (14) at (-10.25, 2) {};
                \node [style=terminal] (15) at (-9.5, 1.5) {};
                \node [style=black dot] (16) at (-11.5, 1.75) {};
                \node [style=terminal] (17) at (-12, 0.75) {};
                \node [style=terminal] (18) at (-11, 0.5) {};
                \node [style=terminal] (19) at (-6, 1.25) {};
                \node [style=terminal] (20) at (-10.5, 1.25) {};
                \node [style=terminal] (21) at (-6.75, 2.25) {};
                \node (22) at (-9.25, 0.25) {$e_1$};
                \node (23) at (-6.5, 0.25) {$e_2$};
                \node (24) at (-10.75, 0.175) {$r_1$};
                \node (25) at (-6.85, 0.9) {$r_2$};
                \node (26) at (-9.45, 1.125) {$r_3$};
                \node (27) at (-4.75, 0.175) {$r_4$};
                \node (29) at (-7.85, 3.65) {{$P$}};
    		\node [style=terminal] (26) at (-8, 2.75) {};
                \node (31) at (-12, 3.5) {$(b)$};
                
                \draw [style=witness edge, bend right=15] (18) to (12);
                \draw [style=witness edge, in=180, out=-30] (15) to (11);
                \draw [style=P] (1) to (13);
                \draw [style=T edges] (13) to (14);
                \draw [style=T edges] (14) to (20);
                \draw [style=T edges] (14) to (15);
                \draw [style=T edges] (13) to (16);
                \draw [style=T edges] (16) to (17);
                \draw [style=T edges] (16) to (18);
                \draw [style=T edges] (1) to (2);
                \draw [style=P] (1) to (3);
                \draw [style=P] (3) to (6);
                \draw [style=T edges] (6) to (5);
                \draw [style=T edges] (6) to (21);
                \draw [style=P] (6) to (7);
                \draw [style=T edges] (7) to (8);
                \draw [style=T edges] (7) to (9);
                \draw [style=T edges] (9) to (10);
                \draw [style=T edges] (9) to (11);
                \draw [style=T edges] (8) to (19);
                \draw [style=T edges] (8) to (12);
                \draw [style=connected terminals, bend right=15, looseness=0.75] (15) to (12);
                \draw [style=new witness, bend right=15, looseness=0.50] (18) to (15);
                \draw [style=new witness, bend right=15, looseness=0.50] (12) to (11);
                
		      \draw [style=T edges] (3) to (26);
            \end{tikzpicture}

        \end{tabular}
    \end{center}
    \caption{In both figures we have a tree, $T$, shown with black edges and green edges, with leaves, $R$, denoted by squares. Crossing edges $e_1$ and $e_2$ are shown with solid red edges. The green edges denote the path $P$.
    Figure $(a)$: In this case, $r_1$ and $r_3$ are in the same component of $W\backslash \{e_1, e_2\}$, represented by the dashed black edge. We can replace $e_1$ with $r_2r_3$ or replace $e_2$ with $r_1r_4$ (red dashed edges).
    Figure $(b)$: In this case, $r_3$ and $r_2$ are in the same component, denoted by the black dashed edge. We can replace $e_1$ and $e_2$ with $r_1r_3$ and $r_2r_4$ (red dashed edges).
    }
    \label{fig:laminarity}
\end{figure}

\begin{theorem}
\label{thm:nodelaminar} 
    Given an instance of the Node Witness Tree problem $T=(V,E)$, let \(\mathcal{W}\) be the family of all witness trees for $T$. Then there exists a laminar witness tree $W$ such that $\nu_T(W) = \min_{W'\in \mathcal{W}} \nu_T(W')$.
\end{theorem}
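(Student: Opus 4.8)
I would prove this by an \emph{uncrossing} (local exchange) argument. Among all witness trees $W'$ attaining $\min_{W''\in\mathcal{W}}\nu_T(W'')$, fix one, $W$, that moreover minimizes the number of crossing pairs, and suppose for contradiction it has a crossing pair $e_1=r_1r_2,\ e_2=r_3r_4\in E_W$. The local picture: since $e_1,e_2$ cross, $T_{e_1}\cap T_{e_2}$ is a subpath $P$ of $T$ with endpoints $a,b$ (possibly $a=b$, the ``star'' case) containing a node internal to both $T_{e_1}$ and $T_{e_2}$, and $r_1,r_2,r_3,r_4$ are distinct leaves; $a\in S$ since it is internal to $T_{e_1}$, so $w(a)\ge1$, and likewise $w(b)\ge1$. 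As $e_1$ (resp.\ $e_2$) traverses all of $P$, its two endpoints are reached from opposite ends of $P$, so I may relabel so that $r_1,r_3$ lie on the $a$-side of $P$ and $r_2,r_4$ on the $b$-side. Finally $W-\{e_1,e_2\}$ has exactly three components which, contracted, form a path $X\!-\!Y\!-\!Z$ with $e_1$ the $XY$-link and $e_2$ the $YZ$-link; hence $r_1\in\{X,Y\}$ and $r_3\in\{Y,Z\}$, giving four configurations (when $a=b$ the labels are free, and I choose them so that $r_1,r_3\in Y$).

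\textbf{Move A}, used when $a\ne b$ and $(r_1,r_3)\in X\times Y$ or $(r_1,r_3)\in Y\times Z$. Replace $\{e_1,e_2\}$ by $\{r_1r_3,\ r_2r_4\}$; these two links reconnect $X,Y,Z$ into a spanning tree of $R$, so we obtain a witness tree. Because $r_1,r_3$ both lie on the $a$-side, $T_{r_1r_3}$ avoids the interior of $P$; similarly $T_{r_2r_4}$; moreover $a$ lies only on $T_{r_1r_3}$ and $b$ only on $T_{r_2r_4}$. Hence $w(\cdot)$ strictly decreases at $a$, at $b$, and at every internal node of $P$, and increases nowhere; since $H$ is strictly increasing, $\nu_T$ strictly decreases, contradicting $W\in\arg\min\nu_T$.

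\textbf{Move B}, used in the remaining two configurations when $a\ne b$, and in the star case $a=b$ (which, by my labeling, is $r_1,r_3\in Y$). Here I replace exactly one of $e_1,e_2$ by a single new link, keeping a spanning tree. Each such configuration admits two admissible single swaps — e.g.\ $e_1\!\mapsto\! r_2r_3$ or $e_2\!\mapsto\! r_1r_4$ when $r_1,r_3\in Y$, and $e_1\!\mapsto\! r_1r_4$ or $e_2\!\mapsto\! r_2r_3$ when $(r_1,r_3)\in X\times Z$. A direct computation shows each such swap shifts exactly one unit of load off one $T$-segment onto another, and that the two segments involved are interchanged between the two swaps; since $H_{\ell+1}-H_\ell=\tfrac1{\ell+1}$ is strictly decreasing in $\ell$, the two resulting $\nu_T$-changes sum to $-\tfrac1{|S|}\sum_v\tfrac1{w(v)(w(v)+1)}\le0$ (sum over the internal nodes of those segments). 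If the sum is negative, one swap strictly decreases $\nu_T$ — contradiction. If it is $0$, the segments are empty (e.g.\ $r_1,r_3$ adjacent to $a$) and both swaps keep $\nu_T$ at its minimum; then a short counting argument — the deleted edge crosses its partner $e_2$, the two inserted links differ in their crossings with third edges only through two leaf-degrees $\deg_W(\cdot)$, and $W$ contains no $4$-cycle on $\{r_1,r_2,r_3,r_4\}$ — shows the two candidate changes in the number of crossings sum to at most $-1$, so one swap strictly decreases that number, again a contradiction. Hence $W$ is laminar, and $\nu_T(W)=\min_{W'\in\mathcal{W}}\nu_T(W')$ by construction. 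The identical argument, reading ``internal edge of $T_{pq}$'' for ``internal Steiner node'', handles the EWT objective $\bar\nu_T$ as well.

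\textbf{Main difficulty.} Move A is the easy case — it just deletes load from the overlap path $P$. All the work is in Move B, because a single-edge swap redistributes load across different branches of $T$ and can create new crossings with third edges incident to the endpoints of $e_1$ or $e_2$; one must enumerate the configurations of $r_1,r_2,r_3,r_4$ over $X,Y,Z$, in each identify two admissible single swaps with ``opposite'' $\nu_T$-effects so that an averaging argument applies, and in the degenerate (essentially star) case where neither swap changes $\nu_T$ run the separate crossing count. The two panels of Figure~\ref{fig:laminarity} illustrate the representative cases.
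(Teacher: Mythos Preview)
Your proposal is essentially correct and follows the same uncrossing/local-swap strategy as the paper. Your Move~A corresponds exactly to the paper's second case (the double swap $\{e_1,e_2\}\mapsto\{r_1r_3,r_2r_4\}$, which strictly unloads the overlap path $P$), and your Move~B to the paper's first case (two candidate single swaps with an averaging argument showing one of them does not increase $\nu_T$). The one substantive difference is how the degenerate subcase of Move~B is handled, namely when the two branch segments involved (the paper calls them $P_1,P_3$) are both empty, i.e., $r_1$ and $r_3$ share a Steiner neighbour $a$. The paper disposes of this upfront: it first proves that in an optimal $W$, any maximal set of terminals sharing a Steiner neighbour induces a star, and that all external $W$-edges can be assumed incident to the star's center; under this normalization, $e_1$ and $e_2$ would share an endpoint, so they are not crossing after all, and the degenerate case never arises. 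You instead add a secondary objective --- among all optima, pick $W$ minimizing the number of crossing pairs --- and argue that in the degenerate case both single swaps preserve $\nu_T$, while the sum of their crossing-count changes is $[r_2r_3\in W]+[r_1r_4\in W]-2\le -1$ (the $4$-cycle being impossible in a tree), so one swap strictly reduces crossings. Your computation is correct; the key point that makes the crossing count tractable is that, in the degenerate case, $T_{e_1}$ and $T_{r_2r_3}$ have identical sets of internal nodes. Both routes are valid; the paper's preprocessing is shorter and avoids the crossing bookkeeping, while your secondary-minimization argument is self-contained and transfers to the EWT objective without a separate preprocessing lemma.
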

\begin{proof}
    We first show that there is a witness tree $W$ minimizing $\nu_T(W)$ such that the induced subgraph of $W$ on any maximal set of terminals that share a neighbour in $V\backslash R$ is a star. We assume for the sake of contradiction that there is a maximal set of terminals $S\subseteq R$ sharing a neighbour $v\in V\backslash R$, such that the induced subgraph of $W$ on $S$ is a set of  connected components $W_1,\dots, W_i$ for $i>1$. 
    Without loss of generality, suppose the shortest path between two components is from $W_1$ to $W_2$, and let $e$ denote the edge of this path incident to $W_2$. We define $W'\coloneqq W\cup \{f\} \backslash\{e\}$, where $f$ is an arbitrary edge between $W_1$ and $W_2$. Since $\{v\} = T_f\backslash R \subsetneq T_e \backslash R$, we have $\nu_T(W') < \nu_T(W)$, contradicting the minimality of $W$. Therefore, the induced subgraph on $S$ is connected. We can rearrange the edges of this subgraph to be a star as this will not affect $\nu_T(W)$, so we assume this holds on $W$ for any such $S$. 
    
    For a maximal set of terminals $S\subseteq R$ that share a neighbour, by a slight abuse of notation, we denote by $S$ the induced star subgraph of $W$ on $S$, and denote its center by $s\in S$. We will assume without loss of generality that edges of $W$ incident to $S$ have endpoint $s$. To see this, as $S$ is a connected subgraph of $W$, any pair of edges incident to $S$ cannot share an endpoint outside of $S$, otherwise we have found a cycle in $W$. 
    Furthermore, for any edge of $W$ incident to $S$ where $s$ is not an endpoint, we can change the endpoint in $S$ of that edge to be $s$ and maintain the connectivity of $W$ since $S$ is connected.
    Edges changed in this way will have the same interior nodes between their endpoints, so this does not increase $\nu_T(W)$.

    We assume for the sake of contradiction that the witness tree $W$ minimizing $\nu_T(W)$ is not a laminar witness tree. As $W$ is not laminar, there exist distinct leaves $r_1,r_2,r_3,r_4 \in R$  such that $e_1=r_1r_2, e_2=r_3r_4\in E_W$ are crossing. We denote the path $T_{e_1}\cap T_{e_2}$ by $P$. We denote  by $P_i$ the (potentially empty) set of internal nodes of the shortest path from $P$ to $r_i$ in $T$. 
    
    Since $e_1$ and $e_2$ are crossing edges, one of $T_{r_1r_3}$ or $T_{r_1r_4}$ contains exactly one node of $P$. The same is true for $r_2$. Without loss of generality, let us assume that the paths $T_{r_1r_3}$ and $T_{r_2r_4}$ contain exactly one node of $P$. We consider by cases which component of $W\backslash \{e_1, e_2\}$ contains two nodes among $r_1, r_2, r_3$ and $r_4$. See Figure~\ref{fig:laminarity} for an example.

    \begin{itemize}
        \item Case: $r_1$ and $r_3$ (or similarly, $r_2$ and $r_4$) are in the same component of $W  \backslash \{e_1,e_2\}$. If $P_1 = P_3 = \emptyset$, then $r_1$ and $r_3$ share a neighbour and thus, as shown above, $e_1$ and $e_2$ are assumed to share an endpoint, and are thus not crossing.
        
        Consider $W' \coloneqq  W\cup \{r_2r_3\}\backslash\{e_1\}$ and $W'' \coloneqq  W\cup \{r_1r_4\} \backslash \{e_2\}$. If $\nu_T(W) - \nu_T(W') > 0$, this contradicts the minimality of $\nu_T(W)$. Therefore, we can see
        \begin{align*}
            0 &\leq  |V\backslash R| (\nu_T(W') - \nu_T(W) ) = \sum_{u\in P_3} \frac{1}{w(u)+1} -  \sum_{u\in P_1} \frac{1}{w(u)} \\
            & < \sum_{u\in P_3} \frac{1}{w(u)} -  \sum_{u\in P_1} \frac{1}{w(u)+1} =|V\backslash R| ( \nu_T(W) - \nu_T(W'') )
        \end{align*}
        Clearly, we have $\nu_T(W'') < \nu_T(W)$, contradicting minimality of $\nu_T(W)$. 

        \item Case: $r_2$ and $r_3$ (or similarly, $r_1$ and $r_4$) are in the same component of $W \backslash\{e_1,e_2\}$. Without loss of generality we can assume that $|V(P)| > 1$, because if $|V(P)| =1$ then we can reduce to the previous case by relabelling the nodes $r_1, r_2, r_3$ and $r_4$.  In this case, consider $W' \coloneqq W\cup\{r_1r_3,r_2r_4\}\setminus\{e_1,e_2\}$. Therefore, we can see
        \begin{align*}
            |V\backslash R|\left(\nu_T(W') - \nu_T(W) \right) \le - \sum_{u\in P} \frac{1}{w(u)} < 0
        \end{align*}
    \end{itemize}
    Thus, we have $\nu_T(W') < \nu_T(W)$, contradicting the minimality of  $\nu_T(W)$.
\end{proof}
The following theorem, similar to Theorem~\ref{thm:nodelaminar}, shows that there are laminar witness trees that are optimal for the EWT problem. The proof is deferred to 
{\color{black}the full version of the paper.}
\begin{restatable}{theorem}{edgelaminar}
\label{thm:edgelaminar}
    Given an instance of the Edge Witness Tree problem  $T=(V,E)$ with edge costs $c$, let \(\mathcal{W}\) be the family of all witness trees for $T$. Then there exists a laminar witness tree $W$ such that $\bnu_T(W) = \min_{W'\in \mathcal{W}} \bnu_T(W')$.
\end{restatable}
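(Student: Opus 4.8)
The plan is to mirror the structure of the proof of Theorem~\ref{thm:nodelaminar}, with the harmonic-number telescoping replaced by the cost-weighted version. First I would take a witness tree $W$ minimizing $\bnu_T(W)$ and preprocess it so that, for any maximal set $S \subseteq R$ of leaves that share a common neighbour $v \in V \setminus R$, the induced subgraph $W[S]$ is a star centered at some $s \in S$, and moreover every edge of $W$ incident to $S$ has $s$ as its endpoint in $S$. The argument is the same as in the node case: if $W[S]$ had two components $W_1, W_2$, pick the shortest connecting edge $e$ (incident to $W_2$) and an arbitrary edge $f$ between $W_1$ and $W_2$; since $T_f$ consists only of the two edges incident to $v$ while $T_e$ strictly contains these (and $c \ge 0$), swapping in $f$ for $e$ does not increase $\bnu_T$; and rerouting an incident edge's endpoint to $s$ leaves the set of internal edges of its $T$-path unchanged, hence does not change $\bnu_T$. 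One subtlety worth a remark: in the edge-weighted setting an edge $e \in E$ with $c(e)=0$ contributes nothing to $\bnu_T(W)$ for any $W$, so all the inequalities below are ``$\le$'' rather than strict in general — but this is harmless, because wherever the node-case proof derived a strict contradiction it did so from a \emph{nonempty} set of internal nodes, and here I instead just need to exhibit a laminar optimizer, so non-strict improvements suffice (one keeps swapping and, among all optimizers with the fewest crossing pairs, the minimum must be crossing-free).

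The main step is the uncrossing argument. Suppose $W$ is a minimizer, preprocessed as above, that is not laminar; among such minimizers choose one with the fewest pairs of crossing edges. Pick crossing edges $e_1 = r_1 r_2$ and $e_2 = r_3 r_4$, let $P = T_{e_1} \cap T_{e_2}$ be the shared sub-path, and let $P_i$ be the internal nodes of the $T$-path from $P$ to $r_i$; by the crossing hypothesis we may assume $T_{r_1 r_3}$ and $T_{r_2 r_4}$ each meet $P$ in exactly one node. Now split on which component of $W \setminus \{e_1, e_2\}$ contains two of $r_1, r_2, r_3, r_4$. In the case $r_1, r_3$ (symmetrically $r_2, r_4$) are together, consider $W' = W \cup \{r_2 r_3\} \setminus \{e_1\}$ and $W'' = W \cup \{r_1 r_4\} \setminus \{e_2\}$; both are trees. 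Replacing $e_1$ by $r_2 r_3$ changes the multiset of internal edges: the internal edges along the $P_1$-part of $T_{e_1}$ each lose one unit of $\bar w$, while those along the $P_3$-part gain one unit, and the edges of $P$ are internal to $T_{e_2}$ hence to the new $r_2 r_3$ path (and, when $|V(P)|>1$, stay internal; the one $P$-node that is an endpoint of $T_{r_1 r_3}$ needs the usual care but its incident edges remain internal). Writing the change in $c(E)\,\bnu_T$ as $\sum_{\text{edges } f \text{ on } P_3\text{-side}} c(f)\big(H_{\bar w(f)+1} - H_{\bar w(f)}\big) - \sum_{\text{edges } f \text{ on } P_1\text{-side}} c(f)\big(H_{\bar w(f)} - H_{\bar w(f)-1}\big)$, and using $H_{k+1} - H_k = \tfrac{1}{k+1} < \tfrac{1}{k} = H_k - H_{k-1}$, the same two-line chain of inequalities as in Theorem~\ref{thm:nodelaminar} shows that one of $\bnu_T(W') \le \bnu_T(W)$ or $\bnu_T(W'') \le \bnu_T(W)$ holds; the corresponding $W'$ or $W''$ is an optimizer, and since it agrees with $W$ outside $\{e_1,e_2\}$ and the new edge is nested with $e_2$ (resp. $e_1$), it has strictly fewer crossing pairs — contradiction. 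In the case $r_2, r_3$ (symmetrically $r_1, r_4$) are together, we may assume $|V(P)| > 1$ (else relabel to reduce to the previous case), take $W' = W \cup \{r_1 r_3, r_2 r_4\} \setminus \{e_1, e_2\}$, which is a tree; now every edge of $T$ changes $\bar w$ by at most the edges of $P$ losing a unit (the $P_i$-sides are redistributed neutrally), so $c(E)(\bnu_T(W') - \bnu_T(W)) \le -\sum_{f \in P,\, f \text{ internal}} c(f)\big(H_{\bar w(f)} - H_{\bar w(f)-1}\big) \le 0$, again contradicting minimal crossing count.

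The hard part — and the reason this proof is deferred — is bookkeeping the exact change in $\bar w$ over individual edges of $T$ under each swap, in particular handling the edges incident to the unique $P$-node that is an endpoint of one of the "single-crossing" paths: one must check these edges remain internal to the replacement paths so that the telescoping signs come out as claimed. In the node case this is automatic because the relevant vertex's \emph{own} $w$-value is what matters; in the edge case one must argue at the level of the two $T$-edges straddling that vertex. I would dispatch this by a short case check on whether that $P$-node lies strictly inside $P$ or at its boundary, exactly as the figure suggests, and otherwise the argument is a verbatim transcription of the node proof with $\sum_{u} \tfrac{1}{w(u)}$ replaced by $\sum_{f} c(f)\big(H_{\bar w(f)} - H_{\bar w(f)-1}\big)$ and "minimality" replaced by "minimality, then fewest crossings."
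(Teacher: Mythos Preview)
Your proposal follows the same plan the paper adopts for the node version and, since the paper explicitly says the edge proof is ``similar'' and defers it, your mirroring strategy is the intended one. The preprocessing step, the two-case uncrossing, and the telescoping with $H_{k+1}-H_k=\tfrac{1}{k+1}$ all transfer cleanly once nodes are replaced by edges weighted by $c$.

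There is, however, a genuine gap in how you handle the loss of strictness when some $c(e)=0$. Your fix is to take, among optimizers, one with the fewest crossing pairs and argue that each swap strictly reduces that count. This is not justified: replacing $e_1=r_1r_2$ by $r_2r_3$ removes the crossing $(e_1,e_2)$, but the new edge $r_2r_3$ can cross edges of $W$ that $e_1$ did not. Concretely, take $g=r_1r_5\in E_W$ with $r_5$ a leaf in the subtree hanging off the $r_3$-side of $P$; then $g$ shares the endpoint $r_1$ with $e_1$ (so $g$ did not cross $e_1$) but $T_g$ meets $T_{r_2r_3}$ in the $r_3$-side and $g$ shares no endpoint with $r_2r_3$, so a new crossing $(g,r_2r_3)$ appears. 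The net change in the crossing count is therefore not obviously negative, and your contradiction does not close.

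The clean repair is a perturbation argument rather than a combinatorial tiebreak: for $c_\varepsilon:=c+\varepsilon\mathbf 1$ all costs are positive, so every inequality in your case analysis is strict and the node-case proof goes through verbatim to produce a laminar optimizer $W_\varepsilon$. Since the set of witness trees is finite, some fixed laminar $W^\ast$ equals $W_\varepsilon$ along a sequence $\varepsilon\to 0$; by continuity of $\bar\nu_T$ in $c$ (valid whenever $c(E)>0$), $W^\ast$ is optimal for the original $c$. This avoids the crossing-count bookkeeping entirely and is likely what the full version does. Alternatively you could simply assume $c>0$ without loss of generality and note this upfront.
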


We now show that laminar witness trees are precisely the set of trees that one could obtain with a marking-and-contraction approach.  
The proof of this Theorem can be found in 
{\color{black}the full version of the paper.}

\begin{theorem}
\label{thm:contractlam}
    Given a tree $T = (V,E)$ with leaves $R$, a witness tree $W = (R,E_W)$ for $T$ can be found by marking-and-contraction if and only if $W$ is laminar.
\end{theorem}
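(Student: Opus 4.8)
Suppose $W=(R,E_W)$ is obtained from a rooting of $T$ together with a marking. Let $M\subseteq E$ be the set of marked edges and, for a leaf $p\in R$, let $C_p$ be the component of $(V,M)$ containing $p$; a standard check shows that the components of $(V,M)$ are vertical paths of $T$, that there are exactly $|R|$ of them, and that each contains exactly one leaf of $T$, namely its bottom endpoint. Since $M$ is a forest contained in the tree $T$, the contraction $T/M$ is again a tree, hence simple, so any two of these components are joined by at most one non-marked edge. Therefore, if $e=pq\in E_W$ is the edge produced by the non-marked edge $f=xy$ with $x\in C_p$, $y\in C_q$, the path $T_e$ is exactly the concatenation of the sub-path of $C_p$ from $p$ to $x$, the edge $f$, and the sub-path of $C_q$ from $y$ to $q$; in particular every internal node of $T_e$ lies in $C_p\cup C_q$. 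Now if some node $u$ were internal to both $T_{e_1}$ and $T_{e_2}$, then the unique leaf of the component containing $u$ would have to be an endpoint of $e_1$ and of $e_2$; hence $e_1$ and $e_2$ share an endpoint and so do not cross. Thus $W$ is laminar.

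\textbf{The hard direction.} For the converse I would argue by induction on $|V|$, the cases $|V|\le 2$ and ``$T$ a star'' being immediate (a laminar witness tree on a star is itself a star, since any two of its edges have the unique internal node of $T$ in common and must therefore share an endpoint). Otherwise, pick a Steiner node $v$ with at most one Steiner neighbour; since $T$ is not a star this neighbour exists, call it $w$, and the remaining neighbours of $v$ form a nonempty set $A$ of leaves of $T$. The crucial structural step is to deduce from laminarity the following dichotomy, by confronting pairs of $W$-edges whose $T$-paths all pass through $v$ (and, for edges with an endpoint outside $A$, also through $w$) and using that $W$ contains no cycle: either (a) the $W$-edges inside $A$ form a spanning star of $A$ centred at some $c\in A$, and every $W$-edge from $A$ to $R\setminus A$ is incident to $c$; or (b) there is no $W$-edge inside $A$, every vertex of $A$ is a leaf of $W$, and all vertices of $A$ have a common $W$-neighbour $m^*\notin A$. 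In case (a) the vertices of $A\setminus\{c\}$ are leaves of $W$: I would delete them from $W$, delete them from $T$ and contract $vc$, verify that the result is a smaller laminar witness-tree instance, apply induction, and lift the marking by letting $v$ mark $c$, keeping every other decision unchanged (redirecting $w$'s choice to $v$ if in the smaller instance it pointed at the node into which $vc$ was contracted); the root of the smaller instance is reused verbatim. Case (b) is handled analogously, deleting all of $A$ and $v$ and letting $v$ mark $w$.

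\textbf{Main obstacle.} The technical heart is twofold. First, extracting the dichotomy (a)/(b) from the bare definition of laminarity together with the tree structure of $W$ --- this is where essentially all of the case analysis lives. Second, pushing the induction through in case (b): the edgeless case forces $v$ to be the \emph{root} of $T$ (its only possible marked child is $w$, and $w$ cannot then also be $v$'s parent), so the smaller instance must be realised by a marking with a \emph{prescribed} root; the clean fix is to strengthen the inductive statement so that it additionally controls which leaf is the representative of the root. Everything else --- the base cases, and checking that the deletions and contractions preserve laminarity and that the lifted choices constitute a legitimate marking (vertical components ending at leaves) --- is routine bookkeeping.
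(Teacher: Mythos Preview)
The paper defers this proof to an external appendix file whose contents are not reproduced in the provided source, so a direct comparison with the paper's own argument is not possible; what follows assesses your proposal on its own merits.

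Your forward direction is correct and cleanly argued. For the converse, the overall strategy and the dichotomy (a)/(b) are sound, but case (b) has a genuine gap: the smaller instance $T'=T-A-v$ need not be of the required form. If $w$ has degree $2$ in $T$ (its only neighbours being $v$ and a further Steiner node), then $w$ becomes a leaf of $T'$ although $w\notin R$, so $W'=W\setminus A$ is \emph{not} a witness tree for $T'$ and your induction hypothesis does not apply as stated. This is easy to fix --- phrase the induction for rooted trees from the outset and allow the root to be a degree-$1$ Steiner node forced to mark its unique child, or else iterate the deletion along the chain of degree-$2$ Steiner nodes before recursing --- but it must be addressed explicitly.

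A second, subtler point: your strengthening cannot literally read ``choose any leaf as the root's representative''. Already for the tree $a\text{--}s_1\text{--}s_2\text{--}b$ with pendant leaves $c$ at $s_1$ and $d$ at $s_2$ and $W=\{ac,ab,bd\}$, no rooting-plus-marking realises $W$ with root representative $c$ or $d$. What is true, and what suffices, is that the root's representative can be taken to be any common endpoint of the $W$-edges whose $T$-paths pass through the root; by laminarity together with acyclicity of $W$ those edges form a star, so such a vertex exists. In your case (b) one more line is then needed to justify that $m^*$ is a legal target at $w$: any $W'$-edge whose $T$-path contains $w$ shares that internal node with each edge $a_i m^*$, hence by laminarity shares an endpoint with it, and since that endpoint cannot lie in $A$ it is $m^*$. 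With these two fixes your argument goes through.
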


Incidentally, this has the following side implication. The authors of~\cite{10.1145/2213977.2214081} gave a dynamic program (that is also a bottom-up approach) to compute the best possible witness tree obtainable with a marking-and-contraction scheme. Our structural results imply that their dynamic program computes an optimal solution for the EWT problem (though for the purpose of the approximation analysis, being able to compute the best witness tree is not that relevant: being able to bound \(\psi\) and \(\gamma\) is what matters). 
\section{ Improved approximation for CA-Node-Steiner Tree}\label{sec:approximationCA-Node}
The goal of this section is to prove Theorem~\ref{thm:main_CA}. We will achieve this by showing
$\psi<1.8596$, and by using Theorem~\ref{thm:alpha}.
From now on, we assume we are given a tree \(T=(R\cup S^*, E^*)\), where each Steiner node is adjacent to at most two terminals. 

\subsection{ Preprocessing.} We first apply some preprocessing operations as in \cite{angelidakis2022node}, that allow us to simplify our witness tree construction. The first one is to remove the terminals from $T$, and then decompose $T$ into smaller components which will be held separately.
We start by defining a \emph{final} Steiner node as a Steiner node that is adjacent to at least one terminal. We let \(F\subseteq S^*\) denote the set of final Steiner nodes. Since we remove the terminals from $T$, we will construct a spanning tree $W$ on \(F\)  with edges in \(F \times F\). With a slight abuse of notation, we refer to $W$ as a witness tree: this is because \cite[Section 4.1]{angelidakis2022node} showed that one can easily map $W$ to a witness tree for our initial tree $T$ (with terminals put back),
and the following can be considered the vector imposed on \(S^*\) by $W$:
\begin{equation}\label{eq:w}
    w(v) \coloneqq \vert\{pq \in E_W: v \textrm{ belongs to the }p \textrm{-}q \textrm{ path in } T[S^*]\}\vert + \mathbbm{1}[v \in F]
\end{equation}
where $\mathbbm{1}[v \in F]$ denotes the indicator of the event ``$v \in F$'', and \(T[S^*]\) is the subtree of \(T\) induced by the Steiner nodes.  
See Figure~\ref{fig:ClawUpperBound}.
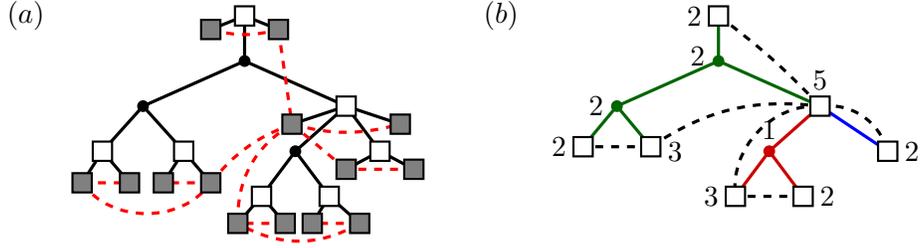
\begin{figure}[t]
    \centering
        \begin{tikzpicture}[scale=0.9]

    \coordinate (b) at (-1.8,2  /1.5);
    \coordinate (a) at (-8.8,2  /1.5);
    \coordinate (s1) at (1,2  /1.5);
    \coordinate (s1') at (0.9,2 /1.5);
    \coordinate (s11) at (1,1  /1.5);
    \coordinate (s11') at (0.95,1.1 /1.5);
    \coordinate (s111) at (-0.5,0 /1.5);
    \coordinate (s111') at (-0.55,0 /1.5);
    \coordinate (s112) at (2.5,0 /1.5);
    \coordinate (s112') at (2.5,0.15 /1.5);
    
    \coordinate (s1111) at (-1,-0.9  /1.5);
    \coordinate (s1111') at (-1.1,-0.9 /1.5);
    \coordinate (s1112) at (0,-0.9  /1.5);
    \coordinate (s1112') at (0.1,-1  /1.5);
    \coordinate (s1121) at (1.75,-1  /1.5);
    
    \coordinate (s1121') at (1.75,-0.95  /1.5);
    \coordinate (s1122) at (3.5,-1  /1.5);
    \coordinate (s1122') at (3.6,-1  /1.5);
    
    \coordinate (s11211) at (1.25,-2  /1.5); 
    \coordinate (s11211') at (1.15,-2  /1.5); 
    
    \coordinate (s11212) at (2.25,-2  /1.5);
    \coordinate (s11212') at (2.35,-2  /1.5); 
    
    \draw
        (s1') node[anchor= east]{$2$} 
        (s11') node[anchor= east]{$2$}
        (s111') node[anchor= east]{$2$}
        
        (s1121') node[anchor= south]{$1$}
        (s112') node[anchor= south]{$5$}
        (s1122') node[anchor= west]{$2$}
        (s1111') node[anchor= east]{$2$} 
        (s1112') node[anchor= west]{$3$} 
        (s11211') node[anchor= east]{$3$}
        (s11212') node[anchor= west]{$2$}
        (a) node[anchor= east]{$(a)$}
        (b) node[anchor= east]{$(b)$};

    \filldraw[color=green!40!black, very thick]
        (s11)circle(1.8pt)
        (s111)circle(1.8pt);
    \filldraw[color=red!80!black, very thick]
        (s1121)circle(1.8pt);
        
    \draw[color=green!40!black,very thick]
        (s1) -- (s11) node[midway,left] {} 
        (s11) -- (s111) node[midway,left] {} 
        (s11) -- (s112) node[midway,right] {} 
        (s111) -- (s1111) node[midway,left] {} 
        (s111) -- (s1112) node[midway,right] {};
    \draw[color=blue,very thick]
        (s112) -- (s1122) node[midway,right] {};

    \draw[very thick,dashed]
        (s1) edge[bend left=05] (s112)
        (s112) edge[bend left=40] (s1122)
        (s112) edge[bend right=50] (s11211)
        (s11212) edge (s11211)
        (s1112) edge[bend left=20] (s112)
        (s1112) edge[] (s1111);
        
    \draw[color=red!80!black,very thick]    
        (s112) -- (s1121) node[midway,left] {}
        (s1121) -- (s11211) node[midway,left] {} 
        (s1121) -- (s11212) node[midway,right] {};
    
    \draw[color=black, thick]
        (s1) node[draw=black, fill=black!0]{}
        (s1111)node[draw=black, fill=black!0]{}
        (s1112)node[draw=black, fill=black!0]{}
        (s112)node[draw=black, fill=black!0]{}
        (s1122)node[draw=black, fill=black!0]{}
       (s11211)node[draw=black, fill=black!0]{}
        (s11212)node[draw=black, fill=black!0]{};

    \filldraw[very thick,color=white]
        (a)circle(1.8pt)
        (b)circle(1.8pt);


    \coordinate (h1) at (-6,2  /1.5);
    \coordinate (t11) at (-6.5,1.7 /1.5);
    \coordinate (t12) at (-5.5,1.7 /1.5);
    \coordinate (h11) at (-6,1  /1.5);
    \coordinate (h111) at (-7.5,0 /1.5);
    
    \coordinate (h112) at (-4.5,0 /1.5);
    \coordinate (t1121) at (-5.3,-0.4 /1.5);
    \coordinate (t1122) at (-3.7,-0.4 /1.5);

    \coordinate (h1111) at (-8.1,-1  /1.5);
    \coordinate (t11111) at (-8.4,-1.7 /1.5);
    \coordinate (t11112) at (-7.7,-1.7 /1.5);

    \coordinate (h1112) at (-6.9,-1  /1.5);

    \coordinate (t11121) at (-6.5,-1.7 /1.5);
    \coordinate (t11122) at (-7.2,-1.7 /1.5);

    \coordinate (h1121) at (-5.25,-1  /1.5);
    
    \coordinate (h1122) at (-4,-1  /1.5);
    \coordinate (t11221) at (-3.5,-1.4 /1.5);
    \coordinate (t11222) at (-4.5,-1.4 /1.5);

    \coordinate (h11211) at (-5.75,-2  /1.5); 
    \coordinate (t112111) at (-5.4,-2.6  /1.5); 
    \coordinate (t112112) at (-6.1,-2.6  /1.5);

    \coordinate (h11212) at (-4.75,-2  /1.5);
    \coordinate (t112121) at (-4.3,-2.6  /1.5); 
    \coordinate (t112122) at (-5,-2.6  /1.5);

    \filldraw[color=black!40!black, very thick]
        (h11)circle(1.8pt)
        (h111)circle(1.8pt);
    \filldraw[color=black!80!black, very thick]
        (h1121)circle(1.8pt);
        
    \draw[color=black!40!black,very thick]
        (h1) -- (h11) node[midway,left] {} 
        (h1) -- (t11) node[midway,left] {}
        (h1) -- (t12) node[midway,left] {}
        (h1111) -- (t11111) node[midway,left] {}
        (h1111) -- (t11112) node[midway,left] {}
        (h1112) -- (t11121) node[midway,left] {}
        (h1112) -- (t11122) node[midway,left] {}

        (h112) -- (t1121) node[midway,left] {}
        (h112) -- (t1122) node[midway,left] {}

        (h1122) -- (t11221) node[midway,left] {}
        (h1122) -- (t11222) node[midway,left] {}

        (h11211) -- (t112111) node[midway,left] {}
        (h11211) -- (t112112) node[midway,left] {}

        (h11212) -- (t112121) node[midway,left] {}
        (h11212) -- (t112122) node[midway,left] {}

        (h11) -- (h111) node[midway,left] {} 
        (h11) -- (h112) node[midway,right] {} 
        (h111) -- (h1111) node[midway,left] {} 
        (h111) -- (h1112) node[midway,right] {};
    \draw[color=black,very thick]
        (h112) -- (h1122) node[midway,right] {};
        
    \draw[color=black!80!black,very thick]    
        (h112) -- (h1121) node[midway,left] {}
        (h1121) -- (h11211) node[midway,left] {} 
        (h1121) -- (h11212) node[midway,right] {};
     
     \draw[color=red, very thick, dashed]
        (t11) edge[bend right=15] (t12)
        (t1121) edge[bend right=30] (t112112)
        (t112121) edge[bend left=30] (t112112)
        (t1121) edge[bend right=15] (t1122)
        (t11111) edge[bend right=55] (t11121)
        (t11121) edge[bend left=15] (t1121)
        ;
     \draw[color=red, very thick, dashed ]
        (t12) -- (t1121)
        (t11111) -- (t11112)
        (t11121) -- (t11122)
        (t11221) -- (t11222)
        (t11222) -- (t1121)
        (t112111) -- (t112112)
        (t112121) -- (t112122);

    \draw[color=black, thick]
        (h1)      node[draw=black, fill=black!0]{}
        (t11)     node[draw=black, fill=black!50]{}
        (t12)     node[draw=black, fill=black!50]{}
        (t1121)   node[draw=black, fill=black!50]{}
        (t1122)   node[draw=black, fill=black!50]{}
        (t11111)  node[draw=black, fill=black!50]{}
        (t11112)  node[draw=black, fill=black!50]{}
        (t11121)  node[draw=black, fill=black!50]{}
        (t11122)  node[draw=black, fill=black!50]{}
        (t11221)  node[draw=black, fill=black!50]{}
        (t11222)  node[draw=black, fill=black!50]{}
        
        (h1111)   node[draw=black, fill=black!0]{}
        (h1112)   node[draw=black, fill=black!0]{}
        (h112)    node[draw=black, fill=black!0]{}
        (h1122)   node[draw=black, fill=black!0]{}
        (h11211)  node[draw=black, fill=black!0]{}
        
        (t112111) node[draw=black, fill=black!50]{}
        (t112112) node[draw=black, fill=black!50]{}
        
        (h11212)  node[draw=black, fill=black!0]{}
        
        (t112121) node[draw=black, fill=black!50]{}
        (t112122) node[draw=black, fill=black!50]{}
        ;

\end{tikzpicture}
        \caption{Figure (a): A tree \(T\) is shown by black edges. The terminals are shown by grey squares. The final Steiner nodes are shown by white squares, non-final Steiner nodes are shown by black dots.
        Figure (b): The tree \(T\) after the terminals have been removed. The color edges indicate the three components. A witness tree $W$ is shown by the black dashed lines. The numbers indicate the values of $w$ imposed on \(T\) computed according to~(\ref{eq:w}).  
        Red dashed lines in Figure (a) show how W can be mapped back.
        }
    \label{fig:ClawUpperBound}
\end{figure}

So, from now on, we consider $T=T[S^*]$. The next step is to root \(T\) at an arbitrary final node \(r\in F\). Following \cite{angelidakis2022node} we can decompose \(T\) into a collection of rooted components \(T_1, \dots T_{\tau}\), where a component is a subtree whose leaves are final nodes and non-leaves are non-final nodes. The decomposition will have the following properties:  each \(T_i\) is rooted at a final node \(r_i\) that has degree one in \(T_i\), 
\(r_1 \coloneqq r\) is the root of \(T_1\), \(\cup_{j<i}T_j\) is connected, and  \(T = \cup_{i=1}^{\tau} T_i\).
We will compute a witness tree \(W_i\) for each component \(T_i\), and then show that we can join these witness trees \(\{W_i\}_{i\ge 1}\) together to get a witness tree $W$ for $T$. 


\subsection{ Computing a witness tree \texorpdfstring{$W_i$}{} for a component \texorpdfstring{$T_i$}{}.}
Here we deal with a component \(T_i\) rooted at \(r_i\), and describe how to construct a witness tree $W_i$. 
If \(T_i\) is a single edge \(e = r_iv\), we simply let \(W_i = (\{r_i,v\}, \{r_iv\})\).

Now we assume that \(T_i\) is not a single edge. We will construct a witness tree with a bottom-up procedure. At a high level, each node $u \in T_i \backslash r_i$ looks at the subtree $Q_u$ of $T_i$ rooted at $u$, and constructs a portion of the witness tree: namely, a subtree \(\overline{W}^u\) spanning the leaves of $Q_u$ (note that, in case the degree of $u$ is 1 in $Q_u$, we  do not consider $u$ to be a leaf of $Q_u$ but just its root). Assume $u$ has children $u_1, \dots, u_k$. Because of the bottom-up procedure, each child $u_j$ has already constructed a subtree \(\overline{W}^{u_j}\). That is, $u$ has to decide how to join these subtrees to get \(\overline{W}^{u}\).

To describe how this is done formally, we first need to introduce some more notation. For every node \(u\in T_i\backslash F\), we select one of its children as the ``marked child'' of \(u\) (according to some rule that we will define later). 
In this way, for every \(u\in T_i\) there is a unique path along these marked children to a leaf. We denote this path by \(P(u)\), and we let \(\ell(u)\) denote the leaf descendent of this path. For final nodes \(u\in F\), we define \(\ell(u) \coloneqq u\) and \(P(u) \coloneqq u\).
For a subtree \(Q_u\) of \(T_i\) rooted at \(u\) and a witness tree \(\overline{W}^u\) over the leaves of \(Q_u\), let \(\overline{w}^u\) be the vector imposed on the nodes of $Q_u$ by \(\overline{W}^u\) according to (\ref{eq:w}). 
Next, we define the following quantity (which, roughly speaking, represents the cost-increase incurred 
after increasing \(\overline{w}^u(v)\) for each \(v\in P(u)\backslash \ell(u)\) for the \((j+1)^{th}\) time):

\[
    {C}^u_j \coloneqq \sum_{v\in P(u)\backslash \ell(u)} \big(H_{\overline{w}^u(v)+j+1} - H_{\overline{w}^u(v)+j}\big) = \sum_{v\in P(u)\backslash \ell(u)} \frac{1}{\overline{w}^u(v)+j+1}
\]

\begin{algorithm}[ht]
        {
            \(u\) has Steiner node children \(u_1, u_2, \ldots, u_k\), and \(\overline{W}^{u_j}\) have been defined\\
        }
        \eIf{ \(u_1, \ldots, u_k\) are all non-final,}
        {
            The \emph{marked} child is \(u_m\), minimizing \(C^{u_m}_{1}\)\\
        }
        {
            Assume \(\{u_1, \ldots,u_{k_1}\}\), \(1 \le k_1\le k\), are final node children of \(u\) \\
            
            \If{\(k_1=k\), or, for all \(j\in \{ k_1+1,\ldots, k\}\), \(C^{u_j}_1 \ge \phi - \delta - H_2\)}
            {
                The \emph{marked} child of \(u\) is \(u_m\) for \(1\le m \le k_1\) such that \(C^{u_m}_1 \) is minimized.
            }
            \If{There is a \(j\in \{ k_1+1,\dots, k\}\) such that \(C^{u_j}_1 < \phi - \delta - H_2\)}
            {
                The \emph{marked} child of \(u\) is \(u_m\) for \(k_1< m \le k\) such that \(C^{u_m}_1 \) is minimized.
            }
        }
        \(\overline{W}^u \leftarrow \left(\bigcup_{j=1}^k V[Q_{u_j}] ,  \bigcup_{j=1}^k \overline{W} ^{u_j} \bigcup_{j \neq m} \{\ell(u_m)\ell(u_j)\}\right)\)\\
    Return \(\overline{W}^u\)
    \caption{Computing the tree $\overline{W}^u$}
    \label{alg:computing_w}
\end{algorithm} 

We can now describe the construction of the witness tree more formally. We begin by considering the leaves of \(T_i\); for a final node (leaf) \(u\), we define a witness tree on the (single) leaf of \(Q_u\) as \(\overline{W}^u = (\{u\},\emptyset)\). 
For a non-final node \(u\), with children \(u_1,\dots,u_k\) and corresponding witness trees \(\overline{W}^{u_1},\dots, \overline{W}^{u_k}\), we select a marked child \(u_m\) for \(u\)   as outlined in Algorithm~\ref{alg:computing_w}, setting \(\phi = 1.86 - \frac{1}{2100}\) and \(\delta = \frac{97}{420}\). 
With this choice, we compute \( \overline{W}^u\) by joining the subtrees  \(\overline{W}^{u_1}, \dots, \overline{W}^{u_k} \) via the edges
\(\ell(u_m)\ell(u_j)\) for \(j\neq m\).
 Finally, let $v$ be the unique child of $r_i$. We let $W_i$ be equal to the tree $\overline{W}^{v}$ plus the extra edge $\ell(v)r_i$, to account for the fact that $r_i$ is also a final node.

\subsection{ Bounding the cost of \texorpdfstring{$W_i$}{}} 
It will be convenient
to introduce the following definitions. For a component $T_i$ and a node \(u \in T_i \setminus r_i\), we let \(W^u\) be the tree \(\overline{W}^u\) plus one extra edge $e^u$, defined as follows.  Let \(a(u)\) be the first ancestor node of \(u\) with \(\ell(a(u)) \neq \ell(u)\) (recall \(\ell(r_i) = r_i\)). We then let the edge \(e^u \coloneqq \ell(u)\ell(a(u))\). We denote by \(w^u\) the vector imposed on the nodes of \(Q_u\) by \(W^u \coloneqq \overline{W}^u + e^u\).  Note that, with this definition, \(W_i = W^v\) for $v$ being the unique child of $r_i$. 

We now state two useful lemmas. 
The first one relates the functions $w^u$ and $w^{u_j}$ for a child $u_j$ of $u$. The statements (a)-(c) below can be proved similarly to Lemma 4 of \cite{angelidakis2022node}. We defer its proof to 
{\color{black}the full version of the paper.}
\begin{restatable}{lemma}{lemIncrease}
\label{lem:increase}
    Let \(u \in T_i\setminus r_i\)  have children \(u_1, \dots, u_k\), and \(u_1\) be its marked child. Then:
    \begin{enumerate}[label=(\alph*)]
        \item \(w^u(u) = k\).
        \item For every \(j \in \{2, \ldots, k\}\) and every node \(v \in Q_{u_j}\), \(w^u(v) = w^{u_j}(v)\).
        \item For every \(v \in Q_{u_1} \setminus P(u_1)\), \(w^u(v) = w^{u_1}(v)\).
        \item  \(\sum_{v \in P(u_1)\setminus\ell(u_1)} H_{w^u(v)} = \sum_{v \in P(u_1)\setminus\ell(u_1)} H_{w^{u_1} (v)} + \sum_{j=1}^{k-1}C^{u_1}_{j}  \). 
    \end{enumerate}
\end{restatable}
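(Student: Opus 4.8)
\textbf{Proof plan for Lemma~\ref{lem:increase}.}

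The plan is to unwind the definitions of $w^u$ and $w^{u_j}$ in terms of the witness-tree edges and the paths they induce in $T[S^*]$, and then track exactly which edges of $\overline{W}^u$ contribute to $w^u$ at each node of $Q_u$. Recall that $\overline{W}^u$ is obtained from the subtrees $\overline{W}^{u_1}, \dots, \overline{W}^{u_k}$ by adding the edges $\ell(u_1)\ell(u_j)$ for $j \in \{2,\dots,k\}$, and that $W^u = \overline{W}^u + e^u$ where $e^u = \ell(u)\ell(a(u))$ connects $\ell(u)=\ell(u_1)$ to a leaf strictly above $u$. The key structural observation is that every path in $T[S^*]$ between two leaves of $Q_{u_j}$ (for the same $j$) stays inside $Q_{u_j}$, so edges internal to a single $\overline{W}^{u_j}$ never affect nodes outside $Q_{u_j}$; and the ``new'' edges $\ell(u_1)\ell(u_j)$ together with $e^u$ all pass through $u$, and their restrictions below $u$ run along $P(u_j)$ (down to $\ell(u_j)$) on the $j$-side and along $P(u_1) = P(u)$ (down to $\ell(u_1)$) on the marked side.

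First I would prove (a): the node $u$ lies on the $T[S^*]$-path of every edge $\ell(u_1)\ell(u_j)$, $j=2,\dots,k$, and on the path of $e^u$; these are $k$ edges, $u$ is non-final so the indicator term in~(\ref{eq:w}) vanishes, and no other edge of $W^u$ has $u$ interior to its path (edges inside $\overline{W}^{u_j}$ stay within $Q_{u_j}$, which does not contain $u$). Hence $w^u(u)=k$. Next, for (b), fix $j \ge 2$ and a node $v \in Q_{u_j}$: the edges of $W^u$ whose path uses $v$ are exactly (i) the edges internal to $\overline{W}^{u_j}$ that used $v$ — these are the same ones counted by $w^{u_j}(v)$ minus possibly the contribution of $e^{u_j}$ — plus (ii) among the newly added edges, only $\ell(u_1)\ell(u_j)$ can reach into $Q_{u_j}$, and it does so precisely along $P(u_j)$. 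But $w^{u_j}(v)$ already accounts for $\overline{W}^{u_j}$ plus its extra edge $e^{u_j} = \ell(u_j)\ell(a(u_j))$, and since $u_j$ is the $j$-th (non-marked) child, $a(u_j) = u$ hence $e^{u_j} = \ell(u_j)\ell(u)=\ell(u_j)\ell(u_1)$, which is exactly the new edge $\ell(u_1)\ell(u_j)$ added at $u$; so the bookkeeping matches and $w^u(v) = w^{u_j}(v)$. For (c), a node $v \in Q_{u_1} \setminus P(u_1)$ is off the marked path, so none of the new edges $\ell(u_1)\ell(u_j)$ nor $e^u$ — all of which, on the $u_1$-side, descend only along $P(u_1)$ — can have $v$ interior to their path; thus only edges internal to $\overline{W}^{u_1}$ contribute, and as before $e^{u_1}$ accounts for the remaining discrepancy, giving $w^u(v) = w^{u_1}(v)$.

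The part needing the most care is (d), which compares the $H$-values summed over $P(u_1)\setminus\ell(u_1)$. On these nodes, $w^{u_1}(v)$ counts edges of $\overline{W}^{u_1}$ plus the single extra edge $e^{u_1}$, whereas $w^u(v)$ counts edges of $\overline{W}^{u_1}$ plus all of $e^u$ and $\ell(u_1)\ell(u_j)$ for $j=2,\dots,k$ — that is, $k$ extra edges all of whose $T[S^*]$-paths contain the entire segment $P(u_1)\setminus\ell(u_1)$ (since each runs from $\ell(u_1)$ upward past $u$). So at each such $v$, $w^u(v) = w^{u_1}(v) - 1 + k$ (subtracting the single edge $e^{u_1}$ counted in $w^{u_1}$ but replaced here, adding the $k$ new ones through $u$); one must be slightly careful that $e^{u_1}$ was indeed counted in $w^{u_1}(v)$ for all $v \in P(u_1)\setminus\ell(u_1)$, which holds because $e^{u_1}$ runs from $\ell(u_1)$ up past $u_1$ and $P(u_1)$ is exactly the marked path from $u_1$ down to $\ell(u_1)$. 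Then
\begin{align*}
\sum_{v \in P(u_1)\setminus\ell(u_1)} H_{w^u(v)} - \sum_{v \in P(u_1)\setminus\ell(u_1)} H_{w^{u_1}(v)} &= \sum_{v \in P(u_1)\setminus\ell(u_1)} \big(H_{w^{u_1}(v) - 1 + k} - H_{w^{u_1}(v)-1}\big)\\
&= \sum_{v \in P(u_1)\setminus\ell(u_1)} \sum_{j=1}^{k-1} \big(H_{w^{u_1}(v)-1+j+1} - H_{w^{u_1}(v)-1+j}\big),
\end{align*}
and since $\overline{w}^{u_1}(v) = w^{u_1}(v) - 1$ on $P(u_1)\setminus\ell(u_1)$ (the vector imposed by $\overline{W}^{u_1}$ differs from that imposed by $W^{u_1} = \overline{W}^{u_1} + e^{u_1}$ by exactly $1$ on these nodes), swapping the order of summation turns the inner sum into $\sum_{j=1}^{k-1} C^{u_1}_j$ by the very definition of $C^{u_1}_j$. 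This yields (d). I would finally remark that the identification $a(u_j)=u$ for $j\ge 2$ and $a(u_1)$ being an ancestor of $u$ (so that the $u_1$-side of $e^{u_1}$ and of the new edges all coincide along $P(u_1)$) is the one place where the recursive structure of the construction must be invoked explicitly, and it is the analogue of the corresponding step in Lemma~4 of~\cite{angelidakis2022node}.
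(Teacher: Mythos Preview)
Your proof is correct and takes essentially the same route as the paper: both arguments unfold the construction of $W^u$ from the $\overline{W}^{u_j}$, identify that the only edges of $W^u$ touching $Q_{u_j}$ are precisely those of $W^{u_j}$ (for $j\ge 2$) and that the $k$ new edges all run along $P(u_1)$ on the marked side, and then telescope for part~(d); your explicit verification that $e^{u_j}=\ell(u_j)\ell(u_1)$ for $j\ge 2$ is exactly what the paper's proof asserts in one line and what underlies its Observation~\ref{lem:child}. One small slip in your display for~(d): the first equality should land on $H_{w^{u_1}(v)-1+k}-H_{w^{u_1}(v)}$, not $H_{w^{u_1}(v)-1+k}-H_{w^{u_1}(v)-1}$ (the second harmonic index is off by one); fortunately your telescoped sum on the next line actually equals the correct expression $H_{w^{u_1}(v)+k-1}-H_{w^{u_1}(v)}$, so after substituting $\overline{w}^{u_1}(v)=w^{u_1}(v)-1$ the conclusion $\sum_{j=1}^{k-1}C_j^{u_1}$ is right.
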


Next lemma relates the ``increase'' of cost $C^u_j$ to the \emph{degree} of some nodes in $T_i$.
\begin{restatable}{lemma}{lemBounds}\label{lem:bounds}
Let \(u \in T_i \setminus r_i\)  have children \(u_1, \dots, u_k\), and \(u_1\) be its marked child. Then, \(C^u_1 = C^{u_1}_k + \frac{1}{k+1}\). Furthermore, if \(u_1\)  is non-final and has degree \(d\) in $T_i$, then: \\
    1) \(\sum_{j=1}^{k} (C^{u_1}_{j} - C^{u_j}_{1}) \leq \sum_{j=1}^{k-1} \left(\frac{1}{d+j} - \frac{1}{d}\right)\);
    2) \(H_{w^u(\ell(u_1))} - H_{w^{u_1}(\ell(u_1))} \leq \sum_{j=1}^{k-1}\frac{1}{d+j}\)
\end{restatable}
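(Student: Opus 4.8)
The plan is to unfold every definition and, for each relevant node $x$ of $Q_{u_1}$, count how many witness edges of each of the trees $\overline{W}^{u_1}$, $W^{u_1}=\overline{W}^{u_1}+e^{u_1}$, $\overline{W}^u$, and $W^u=\overline{W}^u+e^u$ have $x$ on the $T_i$-path between their endpoints. First I would record the bookkeeping facts that come straight out of Algorithm~\ref{alg:computing_w}: since $u_1$ is the marked child of $u$ we have $\ell(u)=\ell(u_1)$ and $P(u)=\{u\}\cup P(u_1)$; we have $a(u_1)=a(u)$ because $\ell(u)=\ell(u_1)$, so $e^{u_1}=e^u$; and the edges of $W^u$ not contained in $W^{u_1}$ are exactly the $k-1$ ``join'' edges $\ell(u_1)\ell(u_j)$, $j\ne1$, together with edges whose $T_i$-paths lie entirely inside $Q_{u_2}\cup\dots\cup Q_{u_k}$ (hence meeting neither $u$ nor any node of $Q_{u_1}$). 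Each join edge $\ell(u_1)\ell(u_j)$ passes through $u$ and through every node of $P(u_1)\setminus\ell(u_1)$, and has the leaf $\ell(u_1)$ as an endpoint. The same count gives $\overline{w}^u(u)=k-1$ (the analogue of Lemma~\ref{lem:increase}(a)) and, when $u_1$ is non-final of degree $d$ in $T_i$, gives $\overline{w}^{u_1}(u_1)=d-2$, since the $d-2$ join edges created at $u_1$ are precisely the edges of $\overline{W}^{u_1}$ whose $T_i$-path contains $u_1$.

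Granting these, the identity $C^u_1=C^{u_1}_k+\tfrac1{k+1}$ falls out: split the defining sum over $P(u)\setminus\ell(u)=\{u\}\cup(P(u_1)\setminus\ell(u_1))$; the $v=u$ term is $\tfrac1{\overline{w}^u(u)+2}=\tfrac1{k+1}$, while for $v\in P(u_1)\setminus\ell(u_1)$ only the $k-1$ join edges change the count on $Q_{u_1}$, so $\overline{w}^u(v)=\overline{w}^{u_1}(v)+(k-1)$ and the term equals $\tfrac1{\overline{w}^{u_1}(v)+k+1}$, and the sum of these over $P(u_1)\setminus\ell(u_1)$ is exactly $C^{u_1}_k$. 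This part needs no hypothesis on $u_1$.

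For part 2), I would use that $\ell(u_1)$ is a leaf of $T_i$, so it can only be an endpoint of a $T_i$-path and $w^{\bullet}(\ell(u_1))=\deg_{W^{\bullet}}(\ell(u_1))+1$ for $\bullet\in\{u,u_1\}$; since exactly the $k-1$ join edges are incident to $\ell(u_1)$ among the edges added in passing from $W^{u_1}$ to $W^u$, we get $w^u(\ell(u_1))=w^{u_1}(\ell(u_1))+(k-1)$ and hence $H_{w^u(\ell(u_1))}-H_{w^{u_1}(\ell(u_1))}=\sum_{j=1}^{k-1}\tfrac1{w^{u_1}(\ell(u_1))+j}$. It then remains to show $w^{u_1}(\ell(u_1))\ge d$, which I would obtain by counting edges incident to $\ell(u_1)$ in $W^{u_1}$: the $d-2$ join edges created at $u_1$, the edge $e^{u_1}$ (distinct, since its other endpoint lies outside $Q_{u_1}$), and the $+1$ indicator as $\ell(u_1)\in F$, for a total of at least $(d-2)+1+1=d$; monotonicity of $x\mapsto\tfrac1{x+j}$ then gives the bound. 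For part 1), I would write $\sum_{j=1}^{k}(C^{u_1}_j-C^{u_j}_1)=\sum_{j=1}^{k}(C^{u_1}_j-C^{u_1}_1)+\sum_{j=1}^{k}(C^{u_1}_1-C^{u_j}_1)$. In the first sum, the $v=u_1$ contributions telescope: $\sum_{j=1}^{k}\big(\tfrac1{(d-2)+j+1}-\tfrac1{(d-2)+2}\big)=\sum_{j=1}^{k}\big(\tfrac1{d+j-1}-\tfrac1d\big)=\sum_{j=1}^{k-1}\big(\tfrac1{d+j}-\tfrac1d\big)$, and the contribution of every other $v\in P(u_1)\setminus\{u_1,\ell(u_1)\}$ is $\le0$ term by term; the second sum is $\le0$ because $u_1$ was marked by minimizing $C^{\bullet}_1$, so $C^{u_1}_1\le C^{u_j}_1$. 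Adding the two estimates gives part 1).

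I expect the last step to be the main obstacle. The inequality $C^{u_1}_1\le C^{u_j}_1$ for all $j$ is clean only in the branch of Algorithm~\ref{alg:computing_w} where all children of $u$ are non-final (there $u_1$ minimizes $C^{\bullet}_1$ over all $k$ children). When $u$ has final children they satisfy $C^{u_j}_1=0$, so I would have to peel them off and treat them separately, invoking the threshold $\phi-\delta-H_2$ that triggered the choice of a non-final marked child; a short case split at the outset would isolate this. The other, more routine, source of friction is keeping the edge-through-node count honest — notably distinguishing endpoints from strict interior nodes of a $T_i$-path, and using $\overline{W}^{u_1}\subseteq\overline{W}^u$ and $e^{u_1}=e^u$ consistently. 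By contrast, the identity $C^u_1=C^{u_1}_k+\tfrac1{k+1}$ and part 2) are insensitive to which marking branch was taken.
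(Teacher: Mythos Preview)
Your argument is essentially the paper's. For the identity $C^u_1=C^{u_1}_k+\tfrac1{k+1}$ the paper gives no explicit argument; your derivation via $P(u)\setminus\ell(u)=\{u\}\cup(P(u_1)\setminus\ell(u_1))$, $\overline{w}^u(u)=k-1$, and $\overline{w}^u(v)=\overline{w}^{u_1}(v)+(k-1)$ on $P(u_1)\setminus\ell(u_1)$ is exactly what is needed. For parts 1) and 2) you match the paper line by line: it too uses $C^{u_1}_j-C^{u_j}_1\le C^{u_1}_j-C^{u_1}_1$ via minimality of $C^{u_1}_1$, drops every $v\neq u_1$ term (each non-positive), and reads off the $v=u_1$ contribution from $w^{u_1}(u_1)=d-1$ (your $\overline{w}^{u_1}(u_1)=d-2$); for part 2) it too notes $w^u(\ell(u_1))=w^{u_1}(\ell(u_1))+(k-1)$ and then bounds $w^{u_1}(\ell(u_1))$ below by $d$ (the paper writes ``$>d$'', but only $\ge d$ is needed and, as your count shows, only $\ge d$ holds in general).

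Your worry about the branch where $u$ has final children is more careful than the paper itself: the paper's proof simply asserts $C^{u_1}_1=\min_{j\in[k]}C^{u_j}_1$ without any caveat. The reason this is harmless is that parts 1) and 2) are only ever invoked in Case~(i) of Lemma~\ref{lem:invariant}, where all children of $u$ are non-final and the marked child is chosen by minimizing $C^{\bullet}_1$ over all $k$ children. In Case~(ii).(b).ii, where the marked child is non-final but $u$ has final children, the paper does \emph{not} appeal to part~1); it argues directly using $C^{u_k}_j\le C^{u_k}_1$ termwise together with the threshold $C^{u_k}_1<\phi-\delta-H_2$. So there is no need to make part~1) go through across all marking branches; reading the minimality hypothesis into the lemma suffices.
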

\begin{proof} 
    \begin{enumerate}

        \item First observe that since \(C_{1}^{u_1} = \min_{j\in [k]} C^{u_j}_1\), we have \(C_{j}^{u_1} - C_{1}^{u_j} \leq C_{j}^{u_1} - C_{1}^{u_1}\).
        Consider \(j\ge 1\), \(C^{u_1}_j - C_{1}^{u_1}\) is equal to
        \begin{align*}
            =& \sum_{v\in P(u_1)\backslash \ell(u)} \left( H_{{w}^{u_1}(v)+j} - H_{{w}^{u_1}(v)+j-1} - H_{{w}^{u_1}(v)+1} + H_{{w}^{u_1}(v)}\right) \\
            =& \sum_{v\in P(u_1)\backslash \ell(u)} \left( \frac{1}{w^{u_1}(v)+j} - \frac{1}{w^{u_1}(v)+1} \right)\leq \frac{1}{w^{u_1}(u_1)+j} - \frac{1}{w^{u_1}(u_1)+1}
        \end{align*}
        Where the inequality follows since every term in the sum is negative. We know that \(w^{u_1}(u_1)= d-1\) by Lemma~\ref{lem:increase}.(a), therefore,  \(C^{u_1}_j - C_{1}^{u_1} \le \frac{1}{d+j-1}- \frac{1}{d}\), and the claim is proven by summing over \(j=1,\dots, k\).
        
        \item To prove the second inequality, 
        first observe that \(w^u(\ell(u_1)) = w^{u_1}(\ell(u_1)) + k-1\). This follows by recalling that \(W^u\) is equal to \(\overline{W}^{u_1}, \dots, \overline{W}^{u_k}\)  plus the edges \(\ell(u_1)\ell(u_j)\) for \(j\neq 1\), and \(e^u\). Thus,
        \(
            H_{w^u(\ell(u_1))} - H_{w^{u_1}(\ell(u_1))} = H_{w^{u_1}(\ell(u_1)) + k-1} - H_{w^{u_1}(\ell(u_1))} = \sum_{i=1}^{k-1} \frac{1}{w^{u_1}(\ell(u_1)) + i} 
        \). 
        Recall \(u_1\) is not a final node, so \(w^{u_1}(\ell(u_1)) > d\). Therefore,
        \[
            \sum_{i=1}^{k-1} \frac{1}{w^{u_1}(\ell(u_1)) + i}\leq \sum_{i=1}^{k-1}\frac{1}{d+i}.
        \]
    \end{enumerate}
\end{proof}

\subsection{ Key Lemma}
To simplify our analysis, we define 
\(h_{W^u}(Q_u) \coloneqq \sum_{\ell \in Q_u} H_{w^u(\ell)}\), and we let \(|Q_u|\) be the number of nodes in \(Q_u\). The next lemma is the key ingredient to prove Theorem~\ref{thm:main_CA}. 
\begin{lemma}\label{lem:invariant}
    Let $\delta = \frac{97}{420}$ and \(\phi = 1.86 - \frac{1}{2100}\). Let $u \in T_i \setminus r_i$ and $k$ be the number of its children. Let \(\beta(k)\) be equal to \(0\) for \(k=0,\dots,8\) and \(\frac{1}{3}-\delta\) for \(k \ge 9\). Then 
    \begin{equation*}
        h_{W^u}(Q_u) + C^u_1 + \delta + \beta(k) \leq \phi \cdot \vert Q_u\vert
    \end{equation*}
\end{lemma}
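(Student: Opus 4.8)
The natural approach is induction on the height of $u$ in the component $T_i$, following the bottom-up construction of $\overline{W}^u$ in Algorithm~\ref{alg:computing_w}. For the base case, $u$ is a leaf (a final node), so $Q_u = \{u\}$, $\overline{W}^u = (\{u\}, \emptyset)$, $w^u(u) = 1$, $k = 0$, and $\beta(0) = 0$; the inequality reads $H_1 + C^u_1 + \delta \le \phi$, and since $C^u_1 = \sum_{v \in P(u)\setminus \ell(u)} \frac{1}{w^u(v)+2}$ with $P(u) = u$ this sum is empty, so we need $1 + \delta \le \phi$, i.e. $1 + \tfrac{97}{420} \le 1.86 - \tfrac{1}{2100}$, which holds numerically. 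For the inductive step, let $u$ have children $u_1, \dots, u_k$ with $u_1$ the marked child. I would decompose $h_{W^u}(Q_u) = \sum_{j=1}^k h_{W^{u_j}}(Q_{u_j}) + \Delta$, where $\Delta$ accounts for the changes in the harmonic terms caused by joining the subtrees. Using Lemma~\ref{lem:increase}: the nodes in $Q_{u_j}$ for $j \ge 2$ keep their $w$-values (part (b)), the nodes of $Q_{u_1} \setminus P(u_1)$ keep theirs (part (c)), the path nodes $P(u_1) \setminus \ell(u_1)$ contribute an extra $\sum_{j=1}^{k-1} C^{u_1}_j$ (part (d)), the leaf $\ell(u_1)$ contributes $H_{w^u(\ell(u_1))} - H_{w^{u_1}(\ell(u_1))}$, and finally the new node $u$ itself contributes $H_{w^u(u)} = H_k$ by part (a). So $h_{W^u}(Q_u) = \sum_{j=1}^k h_{W^{u_j}}(Q_{u_j}) + \sum_{j=1}^{k-1} C^{u_1}_j + \big(H_{w^u(\ell(u_1))} - H_{w^{u_1}(\ell(u_1))}\big) + H_k$.

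Now I apply the inductive hypothesis to each child: $h_{W^{u_j}}(Q_{u_j}) \le \phi|Q_{u_j}| - C^{u_j}_1 - \delta - \beta(k_j)$, where $k_j$ is the number of children of $u_j$. Since $|Q_u| = 1 + \sum_{j=1}^k |Q_{u_j}|$, the target inequality $h_{W^u}(Q_u) + C^u_1 + \delta + \beta(k) \le \phi|Q_u|$ becomes, after substituting and rearranging, a bound of the form
\begin{equation*}
\sum_{j=1}^{k-1} C^{u_1}_j + \big(H_{w^u(\ell(u_1))} - H_{w^{u_1}(\ell(u_1))}\big) + H_k + C^u_1 + \beta(k) \le \phi - (k-1)\delta + \sum_{j=1}^{k} C^{u_j}_1 + \sum_{j=1}^k \beta(k_j).
\end{equation*}
Using Lemma~\ref{lem:bounds}: the identity $C^u_1 = C^{u_1}_k + \tfrac{1}{k+1}$ and $\sum_{j=1}^k C^{u_1}_j = \sum_{j=1}^{k-1} C^{u_1}_j + C^{u_1}_k$, so $\sum_{j=1}^{k-1} C^{u_1}_j + C^u_1 = \sum_{j=1}^k C^{u_1}_j + \tfrac{1}{k+1}$; then moving $\sum C^{u_j}_1$ over, the term $\sum_{j=1}^k (C^{u_1}_j - C^{u_j}_1)$ appears and is bounded by part~1 of Lemma~\ref{lem:bounds} (when $u_1$ is non-final of degree $d$) by $\sum_{j=1}^{k-1}(\tfrac{1}{d+j} - \tfrac{1}{d})$, and the leaf term $H_{w^u(\ell(u_1))} - H_{w^{u_1}(\ell(u_1))}$ is bounded by part~2 by $\sum_{j=1}^{k-1}\tfrac{1}{d+j}$. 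What remains is a purely numerical inequality in the parameters $k$, $d$, $k_2, \dots, k_k$, namely something like $H_k + \tfrac{1}{k+1} + \sum_{j=1}^{k-1}\big(\tfrac{2}{d+j} - \tfrac{1}{d}\big) + \beta(k) - \sum_{j=1}^k\beta(k_j) \le \phi - (k-1)\delta$.

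\textbf{The case analysis and the main obstacle.} The inequality has to be checked in several regimes, and this is where the delicate choices of $\phi = 1.86 - \tfrac{1}{2100}$, $\delta = \tfrac{97}{420}$, and the step function $\beta$ are calibrated. The worst case for the left side is when $d$ is as small as possible: $u_1$ non-final forces $d \ge 2$ (actually $d \ge 3$ once we note $u_1$ has at least one child and the edge to $u$, plus being non-final it is internal — I would pin down the exact minimum degree from the component structure), and the harmonic-type sums $\sum_{j=1}^{k-1}(\tfrac{2}{d+j} - \tfrac{1}{d})$ are largest at this minimum $d$. Separately, when $u_1$ is a \emph{final} child (the "$k_1 \ge 1$" branches of Algorithm~\ref{alg:computing_w}), the bounds of Lemma~\ref{lem:bounds} do not apply directly, and here the algorithm's threshold $\phi - \delta - H_2$ governs which child is marked; I would need to argue that in the branch where a non-final child is marked, its $C^{u_j}_1 < \phi - \delta - H_2$ gives enough slack, while in the branch where a final child $u_m$ is marked, $\ell(u_m) = u_m$ so the path $P(u_m)\setminus\ell(u_m)$ is empty and $C^{u_1}_j$-type sums vanish, making the accounting much easier. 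The term $\beta(k)$ with its jump at $k = 9$ exists precisely to absorb the failure of the inequality for large $k$: $H_k$ grows like $\ln k$ while the right side $\phi - (k-1)\delta$ decreases, so for $k \ge 9$ we need the extra credit $\beta(k) = \tfrac13 - \delta$ harvested from the children (via $-\sum \beta(k_j)$, noting a node with $\ge 9$ children is itself a child contributing $\beta$), or more precisely I expect $\beta$ is designed so the net contribution telescopes favorably. Verifying that the constants make every case go through — especially the boundary $k = 8$ vs $k = 9$ and the smallest-degree non-final case — is the main technical obstacle, and it reduces to a finite check plus a monotonicity argument in $d$ and $k$ for the tails. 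I would organize the proof as: (i) set up the induction and the decomposition of $h_{W^u}$; (ii) handle $k$ small ($k \le 8$) with $u_1$ non-final by the monotone-in-$d$ numerical bound; (iii) handle $k \ge 9$ using the $\beta$ credit; (iv) handle $u_1$ final via the vanishing-path observation and the algorithm's marking threshold.
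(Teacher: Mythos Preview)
Your overall plan matches the paper's: induction on $|Q_u|$, decompose $h_{W^u}(Q_u)$ via Lemma~\ref{lem:increase}, apply the inductive hypothesis to each child, then use Lemma~\ref{lem:bounds} and a case split on whether the marked child is final. However, two concrete errors would derail the argument as written.

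First, your rearranged inequality has a sign error. From the inductive hypothesis you get $\sum_j h_{W^{u_j}}(Q_{u_j}) \le \phi(|Q_u|-1) - \sum_j C^{u_j}_1 - k\delta - \sum_j \beta(k_j)$, so after moving terms the right-hand side is $\phi + (k-1)\delta + \sum_j C^{u_j}_1 + \sum_j\beta(k_j)$, not $\phi - (k-1)\delta + \cdots$. With the correct sign the linear term $(k-1)\delta$ \emph{helps} you for large $k$, and the inequality becomes easier, not harder, as $k$ grows. Your entire narrative about $\beta$ --- that ``$H_k$ grows like $\ln k$ while the right side $\phi-(k-1)\delta$ decreases, so for $k\ge 9$ we need the extra credit'' --- is therefore backwards. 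In the paper the children's $\beta(k_j)$ terms are simply discarded (using $\beta\ge 0$), and the $\beta(k)$ on the left is an \emph{additional burden} you must carry; it is not harvested here but is spent later in Lemma~\ref{lem:tree_up_bound} when merging components, to absorb the $\tfrac{1}{3}$ cost at a component root whose child has many children. So your item (iii) misidentifies both where $\beta$ is needed and why.

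Second, two smaller slips: in the base case $w^u(u)=2$, not $1$, because $W^u$ consists of the single edge $e^u$ (which contributes $1$) plus the indicator $\mathbbm{1}[u\in F]=1$ from~\eqref{eq:w}; the inequality to check is $H_2+\delta\le\phi$, which still holds but is tighter than you wrote. And for the non-final marked child $u_1$, the correct lower bound on its degree in $T_i$ is $d\ge 2$ (one edge to $u$, at least one child), not $d\ge 3$; the paper's numerical verification (the analogue of your item (ii)) is carried out for all $d\ge 2$.
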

\begin{proof}
The proof of Lemma~\ref{lem:invariant} will be by induction on \(\vert Q_u\vert\). 
The base case is when \(\vert Q_u \vert= 1\), and hence \(u\) is a leaf of \(T_i\). Therefore, \(W^u\) is just the edge \(e^u\), and so by definition of \(w^u\) we have \(w^u(u)=2\). We get \(h_{W^u}(Q_u) =1.5\), $C^u_1=0$, $\beta(k) =0$ 
and the claim is clear. 

For the induction step: suppose that \(u\) has children \(u_1,\dots,u_k\). We will distinguish 2 cases: (i) \(u\) has no children that are final nodes;
(ii) \(u\) has some child that is a final node (which is then again broken into subcases). 
We report here only the proof of case (i), and defer the proof of the other case to 
{\color{black}the full version of the paper }
as the reasoning follows similar arguments.

\paragraph*{ Case (i): No children of \(u\) are final.} According to Algorithm~\ref{alg:computing_w},
 we mark the child \(u_m\) of \(u\) that minimizes \(C_1^{u_j}\). Without loss of generality, let \(u_m =u_1\). Furthermore, let \(\ell \coloneqq \ell(u_1)\). 
We note the following.
\begin{align*}
    h_{W^u}(Q_u) 
    =\sum^k_{j=1}h_{W^{u}}(Q_{u_j}) + H_{w^u(u)} 
\end{align*}

By applying Lemma~\ref{lem:increase}.(a) we have \(H_{w^u(u)} = H_k\). By Lemma~\ref{lem:increase}.(b) we see \(h_{W^u}(Q_{u_j}) = h_{W^{u_j}}(Q_{u_j})\) for \(j\ge 2\). Using Lemma~\ref{lem:increase}.(c) and (d) we get 
\(
    h_{W^u}(Q_{u_1}) = h_{W^{u_1}}(Q_{u_1}) + \sum_{j=1}^{k-1}C^{u_1}_{j} + H_{w^u(\ell)} - H_{w^{u_1}(\ell)} 
\). Therefore:
\begin{align*}
    h_{W^u}(Q_u)
    =\sum^k_{j=1}h_{W^{u_j}}(Q_{u_j}) + \sum_{j=1}^{k-1}C^{u_1}_{j} + H_{k} + H_{w^u(\ell)} - H_{w^{u_1}(\ell)} 
\end{align*}
We apply our inductive hypothesis on \(Q_{u_1},\dots,Q_{u_k}\), and use \(\beta(j) \ge 0\) for all \(j \):
\begin{align*}
    h_{W^u}(Q_u) &\le \sum^k_{j=1}\left(\phi|Q_{u_j}| - \delta - C^{u_j}_{1} \right) + \sum_{j=1}^{k-1} C^{u_1}_{j} + H_{k} + H_{w^u(\ell)} - H_{w^{u_1}(\ell)} \\
    =& \phi(|Q_u|-1) - k\delta - C^{u_1}_{k} + \sum_{j=1}^{k}\left( C^{u_1}_{j} - C^{u_j}_{1}\right) + H_{k} + H_{w^u(\ell)} - H_{w^{u_1}(\ell)}
\end{align*}

    
Using Lemma~\ref{lem:bounds}, we get 
\begin{align*}
    \leq & \phi(|Q_u|-1) - k\delta - C^{u}_{1} + \sum_{j=1}^{k-1} \left(\frac{1}{d+j} - \frac{1}{d}\right) + H_{k+1} + \sum_{j=1}^{k-1}\frac{1}{d+j}\\
    \leq & \phi |Q_u|  - \delta  - C^{u}_1 -\beta(k)
\end{align*}
where the last inequality follows since one checks that for any $k\geq 1$ and $d \geq 2$ we have
    \(
      -\phi -(k-1)\delta  + \sum_{j=1}^{k-1} \left(\frac{1}{d+j} - \frac{1}{d}\right) + H_{k+1} + \sum_{j=1}^{k-1}\frac{1}{d+j} \le - \beta(k).
    \) 
    We show this inequality  
    {\color{black} the full version of the paper.}
\end{proof}

\subsection{ Merging and bounding the cost of \texorpdfstring{$W$}{}}
Once the \(\{W_i\}_{i\ge 1}\) are computed for each component \(T_i\), we let the final witness tree be simply the union $W = \cup_{i} W_i$. Our goal now is to prove the following.
\begin{lemma} \label{lem:tree_up_bound}
$\nu_{T}(W) 
\leq \phi= 1.86 - \frac{1}{2100}$.
\end{lemma}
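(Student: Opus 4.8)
\textbf{Proof plan for Lemma~\ref{lem:tree_up_bound}.}

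The plan is to express $\nu_T(W)$ in terms of the quantities $h_{W^v}(Q_v)$ and $|Q_v|$ already controlled by the Key Lemma, summed over the components. Recall $\nu_T(W) = \frac{1}{|S^*|}\sum_{v \in S^*} H_{w(v)}$, and that $T = \bigcup_{i=1}^\tau T_i$ with $W = \bigcup_i W_i$. I would first argue that the $w$-values decompose nicely across components: a Steiner node $v$ belonging to component $T_i$ has its value $w(v)$ determined essentially by $W_i$, \emph{except} possibly for the root $r_i$ of each component, which is shared with a previously-constructed component and receives an extra contribution from the joining edges. So the first step is to write $\sum_{v\in S^*} H_{w(v)} = \sum_{i=1}^\tau \big(\text{contribution of }T_i\big)$ and identify the per-component contribution with $h_{W^{v_i}}(Q_{v_i})$ (where $v_i$ is the unique child of $r_i$), \emph{up to} correction terms at the roots $r_i$ coming from (a) the edge $\ell(v_i)r_i$ added when forming $W_i$ from $\overline{W}^{v_i}$, and (b) the inter-component joining edges.

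The second step is to invoke Lemma~\ref{lem:invariant} on each $v_i$: since $v_i \in T_i \setminus r_i$, we get $h_{W^{v_i}}(Q_{v_i}) + C^{v_i}_1 + \delta + \beta(k) \le \phi |Q_{v_i}|$, hence in particular $h_{W^{v_i}}(Q_{v_i}) \le \phi|Q_{v_i}| - \delta$ (dropping the nonnegative terms $C^{v_i}_1$ and $\beta(k)$). Summing over $i$ and noting $\sum_i |Q_{v_i}| = |S^*| - \tau$ (each $Q_{v_i}$ omits its root $r_i$, and the $r_i$'s are exactly the distinguished final roots, with one additional bookkeeping point about whether the global root $r=r_1$ is counted) and $\sum_i (\text{correction at } r_i) $ being bounded, I expect to obtain $\sum_{v\in S^*} H_{w(v)} \le \phi|S^*|$ after the slack of $\tau\delta$ from the $-\delta$ terms absorbs the correction terms at the roots. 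The key arithmetic check is that the total correction at the $\tau$ roots — the harmonic-number increments $H_{w(r_i)} - (\text{base value})$ caused by all edges of $W$ incident to the $r_i$'s — is at most $\tau \delta = \frac{97\tau}{420}$, so that dividing by $|S^*|$ yields $\nu_T(W) \le \phi$.

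The main obstacle I anticipate is precisely this last accounting: carefully tracking how much each component root $r_i$ "costs" beyond what Lemma~\ref{lem:invariant} accounts for. The subtlety is that $r_i$ is a leaf of $T_i$ but an internal node of $\bigcup_{j\le i} T_j$, so edges of other components' witness trees route through $r_i$; one must show these extra routings, together with the $\ell(v_i)r_i$ edge, inflate $H_{w(r_i)}$ by at most $\delta$ on average (or per root), using the structure of the decomposition ($r_i$ has degree one in $T_i$, and $\bigcup_{j<i}T_j$ is connected). I would likely handle this by a charging argument: the leftover $C^{v_i}_1$ and $\beta(k)$ slack in the Key Lemma inequality for each child $v_i$, which I discarded above, is in fact exactly what is needed to pay for the root corrections, so rather than discarding them I would keep $h_{W^{v_i}}(Q_{v_i}) + C^{v_i}_1 \le \phi|Q_{v_i}| - \delta$ and show the root-$r_i$ correction is bounded by $C^{v_i}_1 + \delta$. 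This ties the merging step back tightly to the invariant, and is where the specific constants $\phi = 1.86 - \frac{1}{2100}$ and $\delta = \frac{97}{420}$ get used to their full strength.
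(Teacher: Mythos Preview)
Your approach is essentially the paper's: decompose by components, apply the Key Lemma to each $Q_{v_i}$, and use the slack in that inequality to absorb the extra cost incurred at the shared roots $r_i$. The paper phrases it as an induction on $i$ over the partial union $T''=\bigcup_{j\le i}T_j$, but that is equivalent to your summation.

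Two concrete issues, one minor and one real.

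\emph{Bookkeeping.} Your count $\sum_i |Q_{v_i}| = |S^*|-\tau$ is off. For $i>1$ the root $r_i$ is a leaf of some earlier $T_j$, hence already a node of $Q_{v_j}$; only $r_1$ is new. So $\sum_i |Q_{v_i}| = |S^*|-1$. Correspondingly, the correction at $r_i$ for $i>1$ is not ``edges of other components routing through $r_i$'' but simply the single extra edge $\ell(v_i)r_i$ added by $W_i$: it bumps $w(r_i)$ from $w'(r_i)\ge 2$ to $w'(r_i)+1$, contributing exactly $\tfrac{1}{w'(r_i)+1}\le \tfrac{1}{3}$. At $r_1$ the contribution is the full $H_{w(r_1)}=H_2\le\phi$, which needs no slack.

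\emph{The role of $\beta(k)$.} Your final charging drops $\beta(k)$ and asserts the root correction is bounded by $C_1^{v_i}+\delta$. This fails. The correction is $\tfrac{1}{3}$, and when $v_i$ has $k\ge 9$ children one only has $C_1^{v_i}\ge \tfrac{1}{k+1}\le \tfrac{1}{10}$, so $C_1^{v_i}+\delta$ can be as small as $\tfrac{1}{10}+\tfrac{97}{420}<\tfrac{1}{3}$. The term $\beta(k)=\tfrac{1}{3}-\delta$ for $k\ge 9$ is in the Key Lemma precisely to close this gap: the needed inequality is $\tfrac{1}{3}\le \tfrac{1}{k+1}+\delta+\beta(k)$, which holds because for $1\le k\le 8$ one has $\tfrac{1}{k+1}+\delta\ge \tfrac{1}{9}+\tfrac{97}{420}>\tfrac{1}{3}$, and for $k\ge 9$ one has $\delta+\beta(k)=\tfrac{1}{3}$. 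So keep the full slack $C_1^{v_i}+\delta+\beta(k)$ from the Key Lemma and the argument goes through; with only $C_1^{v_i}+\delta$ it does not.
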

\begin{proof}
Recall that we decomposed \(T\) into components \(\{T_i\}_{i=1}^\tau\), such that \(\cup_{j\leq i} T_j\) is connected for all \(i\in [\tau]\).
For a given $i$, define \(T' = \cup_{j<i}T_j\), \(W' = \cup_{j<i} W_i\), and let \(w'\) be the vector imposed on the nodes of \(T'\) by $W'$
(for \(i=1\), set  \(T' =\emptyset\), \(W'= \emptyset\), and \(w'=0\)). 
Finally, define \(W'' = W_i \cup W'\) and let \(w''\)
be the vector imposed on the nodes of \(T'' := T'\cup T_i\).  By induction on $i$, we will show that 
$\nu_{T''}(W'') \leq \phi $. The statement will then follow by taking $i=\tau$. Recall that, for any $i$, 
\(r_i\) is adjacent to a single node \(v\) in \(T_i\), and $W_i = W^v$.

First consider \(i=1\).  Hence, $W''=W_1=W^v$ and \(w''(r_1) =2\). By applying Lemma~\ref{lem:invariant} to the subtree \(Q_v\) we get
    \begin{equation*}
        \sum_{u\in T''} H_{w''(u)}= h_{W^{v}}(Q_{v}) + H_{w''(r_i)} \leq \phi(|Q_v|) + H_2 \leq \phi(|Q_v| + 1) \Rightarrow \nu_{T''}(W'') \leq \phi
    \end{equation*}
    
Now consider \(i>1\).  In this case, \(w''(r_i) = w'(r_i) +1 \geq 3\). Therefore:
    \begin{align*}
        &\sum_{u\in T''} H_{w''(u)} = \sum_{u\in T_i \setminus r_i}  H_{w^v(u)} + \sum_{u\in  T'} H_{w'(u)} - H_{w'(r_i)} + H_{w'(r_i)+1}\\
        =& \sum_{u\in T_i \setminus r_i}  H_{w^v(u)} + \sum_{u\in  T'} H_{w'(u)} + \frac{1}{w'(r_i)+1}
        \leq \sum_{u\in T_i \setminus r_i}  H_{w^v(u)} + \sum_{u\in  T'} H_{w'(u)} + \frac{1}{3}
    \end{align*}
    If \(v\) is a final node, then \(\sum_{u\in T_i \setminus r_i}  H_{w^v(u)} = H_{w^v(v)} = H_2\) and by induction 
    \begin{align*}
        \sum_{u\in T''} H_{w''(u)} \leq H_3 + \sum_{u\in  T'} H_{w'(u)} \leq \phi|T''| \Rightarrow \nu_{T''}(W'') \leq \phi
    \end{align*}
    If \(v\) is not a final node, then by induction on $T'$ and by applying Lemma~\ref{lem:invariant} to the subtree \(Q_v\), assuming that \(v\) has \(k\) children, we can see 
    \begin{align*}
        \sum_{u\in T''} H_{w''(u)} \leq \phi|T''| - C_1^v - \delta -\beta(k) + \frac{1}{3}  \leq \phi|T''| - \frac{1}{k+1} - \delta -\beta(k) + \frac{1}{3} 
    \end{align*}
    If $1\le k \le 8$, then \(\beta(k) = 0\), but we have \(\frac{1}{3} < 431/1260=\frac{1}{9}+\delta \le \frac{1}{k+1}+\delta\).
    If \(k\ge 9\), \(\beta(k) = \frac{1}{3}-\delta\) and \(\frac{1}{3} - \delta -\beta(k) =0\). In both cases, \(\nu_{T''}(W'') \leq \phi\). 

\end{proof}

Note that we did not make any assumption on $T$, other than being a CA-Node-Steiner-Tree. Hence, Lemma~\ref{lem:tree_up_bound} yields the following corollary.

\begin{corollary}\label{cor:psi}
$\psi \leq 1.86 - \frac{1}{2100} < 1.8596$.
\end{corollary}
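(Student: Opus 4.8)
\textbf{Proof proposal for Corollary~\ref{cor:psi}.}

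The plan is to derive Corollary~\ref{cor:psi} directly from Lemma~\ref{lem:tree_up_bound} together with the definition of $\psi$ and Theorem~\ref{thm:alpha}'s setup. Recall that $\psi = \sup\{\psi_{(G,R)}\}$ over all CA-Node-Steiner-Tree instances, where $\psi_{(G,R)}$ is the minimum over optimal Steiner trees $T^*$ of the minimum over witness trees $W$ of $\nu_{T^*}(W)$. So to bound $\psi$ it suffices to exhibit, for \emph{some} optimal Steiner tree $T^*$ of each instance, \emph{some} witness tree $W$ with $\nu_{T^*}(W) \le 1.86 - \frac{1}{2100}$. This is exactly what the construction of Sections~3.1--3.5 delivers: given the instance, take any optimal Steiner tree, apply the preprocessing of Section~3.1 (remove terminals, root at a final node, decompose into components $T_1,\dots,T_\tau$ with the stated connectivity properties), build each $W_i$ via Algorithm~\ref{alg:computing_w} with $\phi = 1.86 - \frac{1}{2100}$ and $\delta = \frac{97}{420}$, and merge into $W = \cup_i W_i$.

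The one point to be careful about is that Lemma~\ref{lem:tree_up_bound} is stated for $T = T[S^*]$, the tree after terminals have been stripped, and bounds $\nu_{T}(W)$ where $\nu$ uses the vector $w$ defined in~(\ref{eq:w}). I would note, as the excerpt does in Section~3.1 (citing \cite[Section 4.1]{angelidakis2022node}), that this $W$ maps back to a genuine witness tree for the original (terminal-included) tree $T^*$, and that the objective value is preserved under this correspondence because the $w$ of~(\ref{eq:w}) is precisely the vector imposed on $S^*$ by the mapped-back witness tree. Hence $\nu_{T^*}(\text{mapped }W) = \nu_{T[S^*]}(W) \le \phi$ by Lemma~\ref{lem:tree_up_bound}. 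Since this holds for every instance, taking the supremum gives $\psi \le \phi = 1.86 - \frac{1}{2100}$, and a direct computation $1.86 - \frac{1}{2100} = \frac{3906}{2100} - \frac{1}{2100} = \frac{3905}{2100} \approx 1.85952 < 1.8596$ finishes the corollary.

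There is essentially no obstacle here: Corollary~\ref{cor:psi} is a one-line consequence once Lemma~\ref{lem:tree_up_bound} is in hand — indeed the excerpt already flags this by remarking ``we did not make any assumption on $T$, other than being a CA-Node-Steiner-Tree.'' The only substantive content is the implicit verification that the per-instance bound on $\nu$ is really a bound on $\psi_{(G,R)}$ (hence a lower-bound-respecting upper bound on $\psi$), and the trivial arithmetic check that $1.86 - \tfrac{1}{2100}$ is below the round number $1.8596$. If anything deserves a sentence of care it is confirming that an optimal Steiner tree on a CA-Node-Steiner-Tree instance still has the property that every Steiner node is adjacent to at most two terminals (so that the preprocessing and Algorithm~\ref{alg:computing_w} apply) — but this is part of the definition of such instances and was already assumed at the start of Section~3.
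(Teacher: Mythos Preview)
Your proposal is correct and matches the paper's own reasoning: the paper simply notes that Lemma~\ref{lem:tree_up_bound} was proved without any assumption on $T$ beyond it being a CA-Node-Steiner-Tree, so the per-instance bound $\nu_T(W)\le\phi$ immediately yields $\psi\le\phi=1.86-\tfrac{1}{2100}<1.8596$. Your additional remarks about the mapping back to the terminal-included tree and the arithmetic check are accurate elaborations of what the paper leaves implicit.
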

Combining Corollary~\ref{cor:psi} with Theorem~\ref{thm:alpha} yields a proof of Theorem~\ref{thm:main_CA}.

\section{ Improved Lower Bound on \texorpdfstring{$\psi$}{}}
\label{sec:nodelower}
The goal of this section is to prove Theorem~\ref{thm:nodelowerbound}. For the sake of brevity, we will omit several details.
(see 
{\color{black}the full version of the paper}
for a completed proof).

\paragraph*{ Sketch of Proof of Theorem~\ref{thm:nodelowerbound}}
Consider a CA-Node-Steiner-Tree instance \((G, R)\), where \(G\) consists of a path of Steiner nodes \(s_1, \dots, s_q\) such that, for all \(i \in [q]\), \(s_i\) is adjacent to Steiner nodes \(t_{i1}, t_{i2}, t_{i3}\), and each \(t_{ij}\) is adjacent to two terminals $r_{ij}^1$ and $r_{ij}^2$. See Figure~\ref{fig:nodelowerbound}. 
We will refer to $B_i$ as the subgraph induced by $s_i,t_{ij},r_{ij}^1,r_{ij}^2$ ($j=1,2,3$).
Since \(G\) is a tree connecting the terminals, clearly the optimal Steiner tree for this instance is \(T=G\). 


\begin{figure}[t]
    \begin{center}
        \begin{tikzpicture}[scale=0.5]
        
            \tikzset{terminal/.style={draw=black,  thick,minimum size=0pt, inner sep=2.5pt, outer sep=1pt}}
            
            \begin{scope}[every node/.style={terminal}]
                \node (t111) at (0, 0){};
                \node (t112) at (1, 0){};
                
                \node (t121) at (2, 0){};
                \node (t122) at (3, 0){};
                
                \node (t131) at (4, 0){};
                \node (t132) at (5, 0){};

                \node (t211) at (6, 0){};
                \node (t212) at (7, 0){};
                
                \node (t221) at (8, 0){};
                \node (t222) at (9, 0){};
                
                \node (t231) at (10, 0){};
                \node (t232) at (11, 0){};

                \node (t311) at (12, 0){};
                \node (t312) at (13, 0){};

                \node (t321) at (14, 0){};
                \node (t322) at (15, 0){};
        
                \node (t331) at (16, 0){};
                \node (t332) at (17, 0){};

                \node (t411) at (18, 0){};
                \node (t412) at (19, 0){};
                
                \node (t421) at (20, 0){};
                \node (t422) at (21, 0){};
                
                \node (t431) at (22, 0){};
                \node (t432) at (23, 0){};
            \end{scope}
            
            \tikzset{vertex/.style={draw=black, very thick, circle,minimum size=0pt, inner sep=1pt, outer sep=1pt,fill=black}}
            \begin{scope}[every node/.style={vertex}]
                \node (r11) at (0.5, 1){};
                \node (r12) at (2.5, 1){};
                \node (r13) at (4.5, 1){};
                
                \node (r21) at (6.5, 1){};
                \node (r22) at (8.5, 1){};
                \node (r23) at (10.5,1){};
                
                \node (r31) at (12.5,1){};
                \node (r32) at (14.5,1){};
                \node (r33) at (16.5,1){};
                
                \node (r41) at (18.5,1){};
                \node (r42) at (20.5,1){};
                \node (r43) at (22.5,1){};
            \end{scope}

            
            \begin{scope}[every node/.style={vertex}]

                \node (s1) at (2.5, 2.2){};
                \node (s2) at (8.5, 2.2){};
                \node (s3) at (14.5,2.2){};
                \node (s4) at (20.5,2.2){};
                
                \draw[very thick] (s1) to (s2) to (s3) to (s4);

                \draw[very thick] (s1) to (r13);
                \draw[very thick] (r12) to (s1) to (r11);
                \draw[very thick] (t111) to (r11) to (t112);
                \draw[very thick] (t121) to (r12) to (t122);
                \draw[very thick] (t131) to (r13) to (t132);
                
                \draw[very thick] (r22) to (s2) to (r21);
                \draw[very thick] (s2) to (r23);
                \draw[very thick] (t211) to (r21) to (t212);
                \draw[very thick] (t221) to (r22) to (t222);
                \draw[very thick] (t231) to (r23) to (t232);

                \draw[very thick] (r32) to (s3) to (r31);
                \draw[very thick] (s3) to (r33);
                \draw[very thick] (t311) to (r31) to (t312);
                \draw[very thick] (t321) to (r32) to (t322);
                \draw[very thick] (t331) to (r33) to (t332);                
                
                \draw[very thick] (r42) to (s4) to (r41);
                \draw[very thick] (s4) to (r43);
                \draw[very thick] (t411) to (r41) to (t412);
                \draw[very thick] (t421) to (r42) to (t422);
                \draw[very thick] (t431) to (r43) to (t432);

                \draw [draw={rgb,255: red,195; green,0; blue,3}, thick] (t111) to (t112);
                \draw [draw={rgb,255: red,195; green,0; blue,3}, thick] (t121) to (t122);
                \draw [draw={rgb,255: red,195; green,0; blue,3}, thick] (t131) to (t132);

                \draw [draw={rgb,255: red,195; green,0; blue,3}, thick] (t211) to (t212);
                \draw [draw={rgb,255: red,195; green,0; blue,3}, thick] (t221) to (t222);
                \draw [draw={rgb,255: red,195; green,0; blue,3}, thick] (t231) to (t232);

                \draw [draw={rgb,255: red,195; green,0; blue,3}, thick] (t311) to (t312);
                \draw [draw={rgb,255: red,195; green,0; blue,3}, thick] (t321) to (t322);
                \draw [draw={rgb,255: red,195; green,0; blue,3}, thick] (t331) to (t332);

                \draw [draw={rgb,255: red,195; green,0; blue,3}, thick] (t411) to (t412);
                \draw [draw={rgb,255: red,195; green,0; blue,3}, thick] (t421) to (t422);
                \draw [draw={rgb,255: red,195; green,0; blue,3}, thick] (t431) to (t432);
                
                \draw [draw={rgb,255: red,195; green,0; blue,3}, thick, bend right=25] (t111) to (t121);
                \draw [draw={rgb,255: red,195; green,0; blue,3}, thick, bend right=25] (t121) to (t131);
                \draw [draw={rgb,255: red,195; green,0; blue,3}, thick, bend right=25] (t121) to (t221);
                
                \draw [draw={rgb,255: red,195; green,0; blue,3}, thick, bend right=25] (t211) to (t221);
                \draw [draw={rgb,255: red,195; green,0; blue,3}, thick, bend right=25] (t221) to (t231);
                \draw [draw={rgb,255: red,195; green,0; blue,3}, thick, bend right=25] (t221) to (t321);
                
                \draw [draw={rgb,255: red,195; green,0; blue,3}, thick, bend right=25] (t311) to (t321);
                \draw [draw={rgb,255: red,195; green,0; blue,3}, thick, bend right=25] (t321) to (t331);
                \draw [draw={rgb,255: red,195; green,0; blue,3}, thick, bend right=25] (t321) to (t421);
                
                \draw [draw={rgb,255: red,195; green,0; blue,3}, thick, bend right=25] (t412) to (t421);
                \draw [draw={rgb,255: red,195; green,0; blue,3}, thick, bend right=25] (t421) to (t431);

            \end{scope}
            \node (s0) at (0  , 2.2){};
            \node (s5) at (23  ,2.2){};
            \node (t0) at (0,-0.7) {};
            \node (t5) at (23,-0.5) {};

            \draw[very thick, dotted]  (s0) -- (s1);
            \draw[very thick, dotted]  (s4) -- (s5);
            
            \draw [draw={rgb,255: red,195; green,0; blue,3}, thick, bend left=20, dotted] (t121) to (t0);
            \draw [draw={rgb,255: red,195; green,0; blue,3}, thick, bend right=20, dotted] (t421) to (t5);

            \node[above=1pt] () at (s1) {};
            \node[above=1pt] () at (s2) {};
            \node[above=1pt] () at (s3) {};
            \node[above=1pt] () at (s4) {};
        \end{tikzpicture}
    \end{center}
    \caption{
    Lower bound instance shown in black. The white squares are terminals and black circles are Steiner nodes. Red edges form the laminar witness tree $W^*$.
    }
    \label{fig:nodelowerbound}
\end{figure}
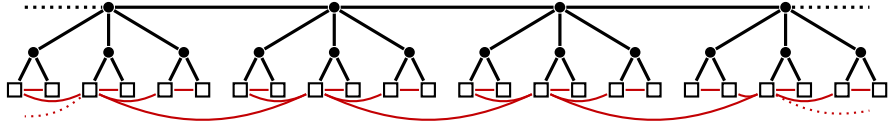

Let $W^*$ be a witness tree that minimizes $\nu_T(\W)$. Recall that we can assume $W^*$ to be laminar by Theorem~\ref{thm:nodelaminar}. We arrive at an explicit characterization of $W^*$ in three steps. First, we observe that, without loss of generality, we can assume that  every pair of terminals \(r_{ij}^1\) and \(r_{ij}^2\) are adjacent in \(\W\) and that \(r_{ij}^2\) is a leaf of $W^*$.
Second, using the latter of these observations and laminarity, we show that for all $i$, the subgraph of \(W\) induced by  \(r_{i1}^1, r_{i2}^1, r_{i3}^1\) can only be either (a) a star, 
or (b) three singletons, adjacent to a unique  terminal \(f \notin B_i \). We say that $B_i$ is a \emph{center} in $W^*$ if (a) holds.  
Finally, we get rid of case (b), and essentially arrive at the next lemma, whose proof can be found in 
{\color{black}the full version of the paper.}
\begin{lemma}\label{lem:CentersOflengthOne}
    Let \(\mathcal{W}\) be the family of all laminar witness trees over \(T\), and let \(\W\) be a laminar witness tree such that for every \(i\in[q]\), \(B_i\) is a center in \(\W\). 
    Then \(\nu_T(\W) = \min_{W\in \mathcal{W}} \nu_T(W)\).
\end{lemma}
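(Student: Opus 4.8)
The plan is to prove the lemma by an exchange argument. Take a laminar witness tree $W$ attaining $\min_{W'\in\mathcal{W}}\nu_T(W')$. By the earlier reduction steps (which we may assume), in $W$ every $r_{ij}^1 r_{ij}^2$ is an edge, every $r_{ij}^2$ is a leaf, and for each $i$ the triple $r_{i1}^1, r_{i2}^1, r_{i3}^1$ either forms a star (case (a), so $B_i$ is a center) or is a set of three vertices whose only neighbour outside $B_i$ is a common terminal $f = f(i) \notin B_i$ (case (b)), in which case $B_i$ hangs off $f$ as a pendant. If some $B_i$ is in case (b), write $f = r_{k j_0}^1 \in B_k$ and form $\widehat W$ from $W$ by deleting $f r_{i2}^1, f r_{i3}^1$ and adding $r_{i1}^1 r_{i2}^1, r_{i1}^1 r_{i3}^1$. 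Then $\widehat W$ is again a spanning tree on $R$, it makes $B_i$ a center, it leaves every other center untouched, and one checks it stays laminar (the $T$-path of each new edge lies inside the union of the $T$-paths of $f r_{i1}^1, f r_{i2}^1, f r_{i3}^1$). So it suffices to show $\nu_T(\widehat W) \le \nu_T(W)$ and iterate.

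To compare costs I would track $w$ at the $4q$ Steiner nodes: a direct check shows $w(t_{i1})$ increases by $2$, while $w(t_{k j_0})$ and every $w(s_m)$ with $s_m$ strictly between $s_k$ and $s_i$ on the path (a set always containing $s_k$, since $k \neq i$) decrease by $2$, and nothing else changes. Hence $4q\bigl(\nu_T(\widehat W) - \nu_T(W)\bigr)$ equals $\bigl(\tfrac{1}{w(t_{i1})+1} + \tfrac{1}{w(t_{i1})+2}\bigr) - \bigl(\tfrac{1}{w(t_{k j_0})-1} + \tfrac{1}{w(t_{k j_0})}\bigr) - \sum_m \bigl(\tfrac{1}{w(s_m)-1} + \tfrac{1}{w(s_m)}\bigr)$. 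Since $w(t_{i1}) = 2$ in the pendant case, the positive part is $\tfrac13 + \tfrac14 = \tfrac{7}{12}$, and since $w(t_{k j_0}) = \deg_W(f) \ge 4$, the $t_{k j_0}$-term alone already equals $\tfrac{7}{12}$ when $\deg_W(f) = 4$; the remaining non-positive $s_m$-terms then close this regime, giving $\nu_T(\widehat W) \le \nu_T(W)$. Iterating over all case-(b) blocks produces an all-center laminar witness tree of cost at most $\min_{W'\in\mathcal{W}}\nu_T(W')$, hence exactly equal to it; a final short argument --- again a local rearrangement, now pushing external edges onto the star centres and showing the block-structure can be taken to be a path --- checks that every all-center laminar witness tree has this common value, which is the statement of the lemma.

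The hard part will be the regime where $\deg_W(f)$ is large: then $w(t_{k j_0})$ is large, its decrease is small, and a single local swap need not pay for the $+2$ at $t_{i1}$ unless enough $s_m$-terms are available; this is also the regime where laminarity-preservation of the swap is delicate, since $f$ may carry edges reaching past $B_i$. I expect the fix is to stop processing one block at a time and instead handle at once all case-(b) blocks hanging off a common $f$: if $f$ serves $p$ of them, delete $3p-1$ of $f$'s edges and re-link these $p$ blocks into a path issued from $f$, then aggregate the resulting cost change, so that $\deg_W(f)$ drops by enough to make the bookkeeping go through. Alternatively one can first exploit optimality of $W$ to bound $\deg_W(f)$ against the $w(s_m)$'s that it forces to be large, or reroute each pendant $B_i$ onto the neighbour of $f$ on the $T$-path toward $s_i$ rather than onto $f$ itself. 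Everything else is routine but careful manipulation of harmonic differences, which I would organise around the two coupled counts ``$W$-edges through $t_{k j_0}$'' and ``$W$-edges through the $s_m$'s lying between $B_k$ and $B_i$''.
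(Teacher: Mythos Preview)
Your single-block swap does not always decrease (or preserve) $\nu_T$, and this is a genuine gap rather than a delicate case. Take $q=4$, let $B_1$ be the only center and let $B_2,B_3,B_4$ all hang off $f=r_{12}^1$. Then $w(t_{12})=12$ and $w(s_1)=11$. If you apply your swap to the \emph{closest} pendant $B_2$, the only affected Steiner nodes are $t_{21}$ (up by $2$), $t_{12}$ (down by $2$), and $s_1$ (down by $2$), giving
\[
4q\bigl(\nu_T(\widehat W)-\nu_T(W)\bigr)=\tfrac{7}{12}-\bigl(\tfrac{1}{12}+\tfrac{1}{11}\bigr)-\bigl(\tfrac{1}{11}+\tfrac{1}{10}\bigr)=\tfrac12-\tfrac{2}{11}-\tfrac{1}{10}>0.
\]
So the swap strictly increases cost. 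Your proposal does not specify which case-(b) block to process, and ``iterate'' over an arbitrary order cannot be correct. (Also, even when the swap helps, the resulting tree need not stay laminar and, after all swaps, you end up with a \emph{star} of centers through $f$ rather than a path; you then owe a separate argument comparing that star-of-centers tree to a laminar all-center tree.)

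You already diagnose the right remedy: batch all pendants hanging off a given center at once. That is precisely what the paper does. It fixes a center $B_i$, lets $x_L^i,x_R^i$ count the contiguous non-center blocks on each side and $L_i,R_i\in\{0,1\}$ record whether there is a further center beyond, and in one step converts \emph{all} of $B_{i-x_L^i},\dots,B_{i+x_R^i}$ into centers arranged on the center-path. The resulting cost change $P(x_R^i,x_L^i,L_i,R_i)$ is written out explicitly as a closed-form in harmonic numbers and shown to be strictly positive by an induction on $x_L^i+x_R^i$ (base cases $x_L^i+x_R^i\in\{1,2\}$, inductive step via $P(x_L^i,x_R^i,L_i,R_i)-P(x_L^i,x_R^i-1,L_i,R_i)>0$, the latter reduced to a one-line inequality of the form $H_x+H_{2x+3}+H_{2x+2}-H_{2x}-H_{2x-1}>H_{10}$). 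This wholesale replacement also lands directly on a laminar all-center tree, so no separate ``push onto star centres / make it a path'' clean-up is needed. Your sketch is therefore on the right track conceptually, but the actual proof requires committing to the batched move and carrying out the explicit harmonic computation; the one-block-at-a-time version does not go through.
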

    Once we impose the condition that all $B_i$ are centers, one notes that the tree $W^*$ essentially must look like the one shown in Figure~\ref{fig:nodelowerbound}.
     So it only remains to compute \(\nu_T(\W)\). For every \(B_i\), we can compute \(\sum_{v\in B_i} H_{w^*(v)}\), where \(w^*\) is the vector imposed on the set \(S\) of Steiner nodes by \(\W\). For \(i\in \{2,\dots, q-1\}\), 
    one notes that \(\frac{1}{4}\sum_{v\in B_i} H_{w^*(v)} = \frac{1}{4}(2H_2 + H_4 + H_5) = 221/120 = 1.841\bar{6} \). Similarly, 
    for $i=1$ and $q$ we have \(\frac{1}{4}\sum_{v\in B_1} H_{w^*(v)} = \frac{1}{4}\sum_{v\in B_q} H_{w^*(v)} = \frac{1}{4}(2H_2 + H_3 + H_4) = \frac{83}{48} =  1.7291\bar{6}\). 
    Therefore, we can see that \(\nu_T(\W) = \sum_{v\in S} \frac{H_{w^*(v)}}{|S|} = \frac{1.841\bar{6}q - 2(1.841\bar{6}-1.7291\bar{6})}{q}\). Thus, for \(q>\frac{1}{\varepsilon}\) we have \(\nu_T(\W) > 1.841\bar{6}-\frac{1}{q}\).

\section{ Tight bound for Steiner-Claw Free Instances}
\label{sec:clawupper}
We here prove Theorem~\ref{thm:clawupper}. 
Our goal is to show that for any Steiner-Claw Free instance \((G, R, c)\),  $\gamma_{(G,R,c)} \leq \frac{991}{732}$, improving over the known $\ln(4)$ bound that holds in general. 
From now on, we assume that we are given an optimal solution \(T = (R\cup S^*, E^*)\) to \((G, R, c)\). 

\paragraph*{ Simplifying Assumptions.} As standard, note that \(T\) can be decomposed into components \(T_1,\dots, T_\tau\), where each component is a maximal subtree of \(T\) whose leaves are terminals and internal nodes are Steiner nodes. Since components do not share edges of $T$, it is not difficult to see that one can compute a witness tree \(W_i\) for each component \(T_i\) separately, and then take the union of the \(\{W_i\}_{i\ge 1}\) to get a witness tree $W$ whose objective function $\bar \nu_T(W)$ will be bounded by the maximum among $\bar \nu_{T_i}(W_i)$. Hence, from now on we assume that $T$ is made by one single component.
Since $T$ is a solution to a Steiner-claw free instance, each Steiner node is adjacent to at most \(2\) Steiner nodes. In particular,
the Steiner nodes induce a path in \(T\), which we enumerate as \(s_1, \dots, s_q\).
%
%
We will assume without loss of generality that each \(s_j\) is adjacent to exactly one terminal \(r_j\in R\): this can be achieved by replacing a Steiner node incident to $p$ terminals, with a path of length $p$ made of 0-cost edges, if $p>1$, and with an edge of appropriate cost connecting its 2 Steiner neighbors, if $p=0$. 
We will also assume that \(q>4\). For \(q\le 4\), it is not hard to compute that \(\gamma_{(G,R,c)} \le \frac{991}{732}\). (For sake of completeness we explain this in 
{\color{black}the full version of the paper)}
\begin{figure}[t]
    \centering
        \begin{tikzpicture}[scale=1.15]
    \coordinate (s1) at (0,0.75);
    \coordinate (s2) at (1,0.75);
    \coordinate (s3) at (2,0.75);
    \coordinate (s4) at (3,0.75);
    \coordinate (s5) at (4,0.75);
    \coordinate (s6) at (5,0.75);
    \coordinate (s7) at (6,0.75);
    \coordinate (s8) at (7,0.75);
    \coordinate (s9) at (8,0.75);
    \coordinate (s10) at (9,0.75);
    \coordinate (s11) at (10,0.75);
    
    \coordinate (r1) at (0,0);
    \coordinate (r2) at (1,0);
    \coordinate (r3) at (2,0);
    \coordinate (r4) at (3,0);
    \coordinate (r5) at (4,0);
    \coordinate (r6) at (5,0);
    \coordinate (r7) at (6,0);
    \coordinate (r8) at (7,0);
    \coordinate (r9) at (8,0);
    \coordinate (r10) at (9,0);
    \coordinate (r11) at (10,0);
    
    \filldraw[color=black, very thick]
        (s1)circle(1.5pt)
        (s2)circle(1.5pt)
        (s3)circle(1.5pt)
        (s4)circle(1.5pt)
        (s5)circle(1.5pt)
        (s6)circle(1.5pt)
        (s7)circle(1.5pt)
        (s8)circle(1.5pt)
        (s9)circle(1.5pt)
        (s10)circle(1.5pt)
        (s11)circle(1.5pt);
    
    \draw[color=black,very thick]
        (s1) -- (s2) 
        (s2) -- (s3) 
        (s3) -- (s4) 
        (s4) -- (s5) 
        (s5) -- (s6) 
        (s6) -- (s7) 
        (s7) -- (s8)
        (s8) -- (s9)
        (s9) -- (s10)
        (s10) -- (s11)

        (r1) -- (s1) 
        (r2) -- (s2) 
        (r3) -- (s3) 
        (r4) -- (s4) 
        (r5) -- (s5) 
        (r6) -- (s6) 
        (s7) -- (r7)
        (s8) -- (r8)
        (s9) -- (r9)
        (s10) -- (r10)
        (s11) -- (r11);
        
    \draw [color={rgb,255: red,195; green,0; blue,3},very thick] 
    (r1) edge (r2)
    (r1) edge[bend right=35](r5)
    
    (r3) edge[bend right=35](r5)
    (r4) edge (r5)
    (r5) edge (r6)
    (r5) edge[bend right=35](r7)
    
    (r5) edge[bend right=30] (r10)
    
    (r8) edge[bend right=35] (r10)
    (r9) edge (r10)
    (r11) edge (r10);

    \draw[color=black,very thick]
        (r1) node[draw=black, fill=blue!40]{$r_1$}
        (r2) node[draw=black, fill=black!0]{$r_2$}
        (r3) node[draw=black, fill=black!0]{$r_3$}
        (r4) node[draw=black, fill=black!0]{$r_4$}
        (r5) node[draw=black, fill=blue!40]{$r_5$}
        (r6) node[draw=black, fill=black!0]{$r_6$}
        (r7) node[draw=black, fill=black!0]{$r_7$}
        (r8) node[draw=black, fill=black!0]{$r_8$}
        (r9) node[draw=black, fill=black!0]{$r_9$}
        (r10) node[draw=black, fill=blue!40]{$r_{10}$}
        (r11) node[draw=black, fill=black!0]{$r_{11}$};
\end{tikzpicture}
        \caption{Edges of $T$ are shown in black. Red edges show $W$. 
        Here, $q=11$, $t_\alpha=5$ and $\sigma=5$. Initially $r_5$ and \(r_{10}\) are picked as the centers of stars in \(W\).
        Since $\sigma>\lceil\frac{t_\alpha}{2}\rceil$, \(r_1\) is also the center of a star.
        Since $\sigma + t_\alpha \lfloor\frac{q-\sigma}{t_\alpha}\rfloor > q-\lceil\frac{t_\alpha}{2}\rceil$, $r_{q}$ is not the center of a star.} 
    \label{fig:clawMarking}
\end{figure}
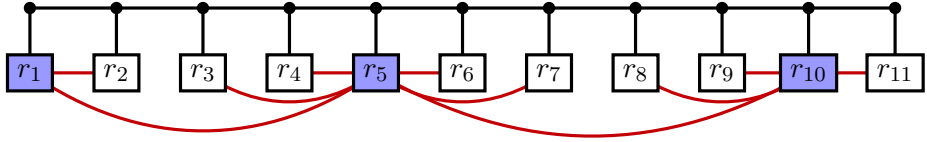
\paragraph*{ Witness tree computation and analysis.}
We denote by \(L\subseteq E^*\) the edges of $T$ incident to a terminal, and by \(O = E^*\setminus L\) the edges of the path \(s_1, \dots, s_q\). 
Let \(\alpha \coloneqq c(O) / c(L)\). For a fixed value of \(\alpha \ge 0\), we will fix a constant $t_\alpha$ as follows: If \(\alpha\in [0,32/90]\), then \(t_\alpha = 5\), if \(\alpha\in (32/90, 1)\), then \(t_\alpha =3\), and if \(\alpha \ge 1\), then \(t_\alpha = 1\).
Given \(\alpha\) (and thus \(t_\alpha\)), we construct \(W\) using the randomized process outlined in Algorithm~\ref{alg:random_w}.  
At a high level, starting from a random offset, Algorithm~\ref{alg:random_w} adds sequential stars of $t_\alpha$ terminals to $W$, connecting the centers of these stars together in this sequence. See Figure~\ref{fig:clawMarking} for an example. 

\begin{algorithm}[ht]
    Initialize \(W = (R, E_W =\emptyset)\)\\
    Sample uniformly at random  \(\sigma\) from \(\{1, \dots, t_\alpha\}\).\\
    \(E_W \leftarrow  \{r_\sigma  r_{\sigma+k} | 1\le | k| \le \left\lfloor \frac{t_\alpha}{2}\right \rfloor, 1\le \sigma +k \le q \}\)\\
    Initialize $j$=1\\
    \While{\(j \leq \frac{q-\sigma}{t_\alpha}\)}
    {
        \(\ell \coloneqq \sigma + t_\alpha j\)\\
        \(E_W \leftarrow E_W \cup \{r_\ell r_{\ell+k} | 1\le | k| \le \left\lfloor \frac{t_\alpha}{2}\right \rfloor, 1\le \ell +k \le q \}\)\\
        \(E_W \leftarrow E_W  \cup\{r_{\sigma + t_\alpha(j-1)}r_{\sigma + t_\alpha j}\} \)\\ 
        \(j \leftarrow j + t_\alpha\)
    }
    \If{\(\sigma > \lceil \frac{t_\alpha}{2} \rceil \)}
    {
        \(E_W \leftarrow E_W \cup \left\{ r_1 r_{k} |   2 \leq k \leq \sigma - \lceil \frac{t_\alpha}{2} \rceil \right\}\cup \{r_1r_\sigma\}\)\\
    }
    \(j \leftarrow \lfloor \frac{q-\sigma}{t_\alpha}\rfloor\)\\
    \If{\( \sigma + t_\alpha j \le q - \lceil \frac{t_\alpha}{2} \rceil\)}
    {
        \(E_W \leftarrow E_W \cup  \{r_{k} r_q |  \sigma+ t_\alpha j + \lceil \frac{t_\alpha}{2} \rceil \leq k \leq q-1 \} \cup \{r_{\sigma +t_\alpha j}r_q\}\)\\
    }
    Return \(W\)
    \caption{Computing the witness tree $W$}
    \label{alg:random_w}
\end{algorithm}

Under this random scheme, we define \(\lambda_{L} (t_\alpha) \coloneqq \max_{e\in L} \mathbb{E}[H_{\bar w(e)}]\), and \(\lambda_{O} (t_\alpha) \coloneqq \max_{e\in O} \mathbb{E}[H_{\bar w(e)}]\). 
\begin{lemma} 
    \label{lem:lambdabound} 
    For any \(\alpha \ge 0\),  
   \( \lambda_{L}(t_{\alpha}) \leq \frac{1}{t_{\alpha}} H_{t_{\alpha}+1} + \frac{t_{\alpha}-1}{t_{\alpha}}  \), and  \( \lambda_{O}(t_{\alpha}) \leq \frac{1}{t_{\alpha}}  + \frac{2}{t_{\alpha}} \sum_{i=2}^{\lceil \frac{t_{\alpha}}{2} \rceil}H_{i}.\) 
\end{lemma}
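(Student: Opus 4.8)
My plan is to analyze Algorithm~\ref{alg:random_w} directly, tracking for a fixed edge how many witness-tree paths pass through it as a function of the random offset $\sigma$, and then averaging over the $t_\alpha$ equally likely values of $\sigma$. The key observation is that, up to the boundary corrections near $r_1$ and $r_q$, the construction is periodic with period $t_\alpha$: the terminals are partitioned into consecutive blocks of size $t_\alpha$, each block forms a star centered at its middle terminal $r_\ell$ (with $\ell = \sigma + t_\alpha j$), and consecutive star-centers are joined by a single edge. So the whole witness tree is a ``path of stars''. I would first describe, for a generic interior block, exactly which edges of $T$ each witness edge $r_p r_\ell$ uses: the link edge $r_p s_p \in L$, the link edge $r_\ell s_\ell \in L$, and the sub-path of $s_p, \dots, s_\ell$ in $O$; and the spine edge $r_{\ell} r_{\ell'}$ between consecutive centers uses $r_\ell s_\ell$, $r_{\ell'} s_{\ell'}$ and the whole $O$-sub-path between $s_\ell$ and $s_{\ell'}$.

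For a link edge $e = r_i s_i$: if $r_i$ is a star center (which happens for exactly one value of $\sigma \bmod t_\alpha$), then $\bar w(e)$ equals the number of witness edges incident to $r_i$, which is $t_\alpha - 1$ (the star) plus the one or two spine edges to neighbouring centers — so at most $t_\alpha + 1$, giving $H_{\bar w(e)} \le H_{t_\alpha+1}$. If $r_i$ is a leaf of its star (the other $t_\alpha - 1$ values of $\sigma$), then $e$ lies on exactly one witness path, namely $r_i r_\ell$, so $\bar w(e) \le 1$ and $H_{\bar w(e)} \le 1 = H_1$. Averaging, $\lambda_L(t_\alpha) \le \frac{1}{t_\alpha}H_{t_\alpha+1} + \frac{t_\alpha - 1}{t_\alpha}$. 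For an edge $e = s_i s_{i+1} \in O$: I would argue that, for a fixed $\sigma$, the witness paths through $e$ are exactly (i) the spine edge joining the two centers straddling $e$ — contributing $1$ — plus (ii) the star edges $r_p r_\ell$ of the block containing $e$ that straddle $e$. If $e$ sits at "distance" $a$ from the center $s_\ell$ on one side and the block extends $\lceil t_\alpha/2\rceil - a$ further on the other side (roughly), the number of straddling star-edges is at most $\min(a, \lceil t_\alpha/2\rceil) + $ boundary terms, but the cleanest bound comes from summing: as $\sigma$ ranges over $\{1,\dots,t_\alpha\}$, the position of $e$ within its block cycles through all $t_\alpha$ positions, and $\sum$ of $H_{(\text{number of straddling edges})+1}$ over these positions is $1 + 2\sum_{i=2}^{\lceil t_\alpha/2\rceil} H_i$ (the ``$1$'' from the middle position where no star-edge straddles, just the spine; the factor $2$ from the symmetry of positions left/right of center). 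Dividing by $t_\alpha$ gives the claimed $\lambda_O$ bound.

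**Main obstacle.** The genuinely delicate part is the boundary handling near $r_1$ and $r_q$: the two \texttt{if} clauses in Algorithm~\ref{alg:random_w} attach the leftover terminals $r_1, \dots, r_{\sigma-1}$ (resp. the rightmost leftovers) as an \emph{extra} star centered at $r_1$ (resp. $r_q$), only when $\sigma$ is large enough (resp. small enough). I need to check that these partial end-blocks never create an edge with \emph{more} witness paths through it than a full interior block does — i.e. that the worst case over all edges is still attained in the periodic interior — so that the per-edge maxima $\lambda_L, \lambda_O$ are not increased by the boundary. This should hold because an end-star has at most $t_\alpha - 1$ spokes and only \emph{one} spine neighbour (not two), and its spokes are a sub-multiset of what a full star would have; but it requires a careful case check that the edge between $r_\sigma$ and $r_1$ (used by the spine edge $r_1 r_\sigma$ and by end-star spokes $r_1 r_k$) does not exceed the interior $O$-bound, and similarly for the link edge at $r_1$. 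Once that monotonicity-at-the-boundary fact is nailed down, the two estimates follow from the averaging argument above.
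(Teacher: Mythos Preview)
Your proposal is correct and follows essentially the same route as the paper: analyze the periodic ``path of stars'' structure to compute $\bar w(e)$ as a function of the offset $\sigma$, average over the $t_\alpha$ equiprobable values, and then verify separately that the boundary blocks near $r_1$ and $r_q$ never exceed the interior bound. The paper carries out exactly this plan, with an explicit case analysis for the boundary $O$-edges and a small numerical check (for $t\in\{1,3,5\}$) for the boundary $L$-edge $r_1s_1$; your identification of the boundary handling as the main place requiring care is spot on.
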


\begin{proof}
Let \(W = (R,E_W)\) be a witness tree returned from running Algorithm~\ref{alg:random_w} with \(\alpha\) and \(t \coloneqq t_\alpha\), and let \(w\) be the vector imposed on \(E^*\) by \(W\). 
If Algorithm~\ref{alg:random_w} samples \(\sigma \in \{1,\dots, t\}\), then we say that the terminals \(r_{\sigma + t j}\) are \emph{marked} by the algorithm. Moreover, if \(\sigma > \lceil \frac{t_\alpha}{2}\rceil\) (resp. \(\sigma + t_\alpha \lfloor\frac{q-\sigma}{t_\alpha}\rfloor \le q- \lceil \frac{t_\alpha}{2}\rceil\)) then \(r_1\) (resp. \(r_q\)) is also considered marked.
    \noindent\begin{enumerate}
        \item Consider edge \(e=s_js_{j+1} \in O\), with \(j\in \{\lceil\frac{t}{2}\rceil, \dots,  q-\lceil{\frac{t}{2}}\rceil\} \). Let \(m\in \{ j-\lfloor \frac{t}{2} \rfloor, \dots, j + \lfloor \frac{t}{2} \rfloor\}\), such that $\sigma\mod{t}=m\mod{t}$. Observe that in this case $r_m$ is marked.
        If $m=j-x$ for $x\in \{0,\ldots,\lfloor \frac{t}{2} \rfloor\}$, 
        then $w(s_js_{j+1})= \lceil\frac{t}{2}\rceil-x$.
        Similarly if $m=j+x$ for $x\in \{1,\ldots,\lfloor \frac{t}{2} \rfloor\}$, then $w(s_js_{j+1})= \lceil\frac{t}{2}\rceil-x+1$. Since $m\mod{t} = \sigma\mod{t}$ with probability $\frac{1}{t}$, we have  \( \mathbb{E}[H_{w(s_{j}s_{j+1})}] = \frac{1}{t} + \frac{2}{t} \sum_{k=2}^{ \lceil \frac{t}{2} \rceil}H_{k}\). 
        
        Now assume \(j < \lceil\frac{t}{2}\rceil\) (the case  $j > q-\lceil{\frac{t}{2}}\rceil$ can be handled similarly). {\color{black}{Recalling that since $t$ is odd it is not hard to determine the value of \(w(s_js_{j+1})\) by cases, depending on the value of $\sigma$.}}
        \begin{enumerate}
            \item $1\le \sigma \le j$: Then \(w(s_js_{j+1})=\lceil\frac{t}{2}\rceil +\sigma - j\).
            \item  \(j+1\le \sigma \le \lceil \frac{t}{2} \rceil\): Then \(w(s_js_{j+1})=j\).
            \item  \(\lceil \frac{t}{2} \rceil+1 \le \sigma \le j+ \lfloor\frac{t}{2}\rfloor\):  Then \(w(s_js_{j+1})=\lceil\frac{t}{2}\rceil - \sigma + j +1 \).
            \item  $j+\lceil\frac{t}{2}\rceil \le  \sigma\le t$: Then \(w(s_js_{j+1})=\sigma-j -\lceil\frac{t}{2}\rceil +1\).
        \end{enumerate}
        \begin{align*}
            & \mathbb{E}[H_{w(s_js_{j+1})}] =
             \\&= \frac{1}{t}\left(\sum_{\sigma =1}^j H_{\lceil\frac{t}{2}\rceil +\sigma - j }
            + \sum_{\sigma = j+1}^{\lceil\frac{t}{2}\rceil} H_{j}
            +\sum_{\sigma = \lceil\frac{t}{2}\rceil +1}^{j+\lfloor\frac{t}{2}\rfloor} H_{\lceil\frac{t}{2}\rceil - \sigma + j+1 }
            + \sum_{\sigma = j + \lceil\frac{t}{2}\rceil }^{t} H_{\sigma - j - \lceil\frac{t}{2}\rceil + 1} \right )\\
            & = \frac{1}{t}\left( \sum_{i=\lceil\frac{t}{2}\rceil-j+1}^{\lceil\frac{t}{2}\rceil} H_i 
            + \left( \left\lceil \frac{t}{2} \right\rceil-j \right) H_j
            + \sum_{i=2}^{j}H_i
            + \sum_{i=1}^{\lceil\frac{t}{2}\rceil -j }H_i\right)
        \end{align*}
            
        \begin{align*}
             &= \frac{1}{t}\left( \sum_{i=1}^{\lceil\frac{t}{2}\rceil} H_i 
            + \left( \left\lceil \frac{t}{2} \right\rceil-j \right) H_j
            + \sum_{i=2}^{j}H_i \right)
            < \frac{1}{t}\left( 1+2\sum_{i=2}^{\lceil\frac{t}{2}\rceil} H_i \right).
        \end{align*}        
        \item Consider edge \(e=s_jr_j\in L\). We first show the bound for \(j \in \{1,\dots, q\}\). Algorithm~\ref{alg:random_w} marks terminal \(r_i\) with probability \(\frac{1}{t}\). If \(r_i\) {\color{black}is marked,} then \(w(e) \le t\). If \(r_i\) is not marked, then \(w(e)=1\). Therefore, \( \mathbb{E}[H_{w(e)}] \le \frac{1}{t} H_{t+1} + \frac{t-1}{t}\)
        
        Now consider edge \(e=s_1r_1\) (the case \(e=s_qr_q\) can be handled similarly). We consider specific values of \(\sigma \in \{1, \dots, t\}\) sampled by Algorithm~\ref{alg:random_w}. With probability \(\frac{1}{t}\), we have \(\sigma=1\), so \(r_1\) is marked initially and \(w(e) = \lceil t/2 \rceil\). For \(\sigma = 2,\dots,  \lceil t/2 \rceil \), \(r_1\) is unmarked and \(w(e) = 1\). If \(\sigma > \lceil t/2 \rceil\), then \(r_1\) is marked by the algorithm and \(w(e) = \sigma - \lceil t/2 \rceil\). Therefore, we can see 
        \begin{align*}
            \mathbb{E}[H_{w(r_1s_1)}] =  \frac{1}{t}\left( H_{\lceil t/2 \rceil} + \left\lfloor \frac{t}{2} \right\rfloor + \sum_{k=1}^{t-\lceil t/2 \rceil}H_k \right)
        \end{align*}
        We let \(g(t)\) be equal to the equality above. It remains to show that \(g(t) \le \frac{1}{t} H_{t+1} + \frac{t-1}{t} \coloneqq f(t)\) for \(t\in \{1,3,5\}\).  
        \begin{align*}
            g(1) &=  H_1 = 1 < H_{2} =  f(1)\\
            g(3) &= \frac{1}{3}\left( H_{2} + 1 + H_1 \right) = 1.1\Bar{6} < 1.36\Bar{1} = \frac{1}{3}(H_{4} + 2) = f(3) \\
            g(5) &= \frac{1}{5}\left( H_{3} + 2 + H_1 + H_2 \right) = 1.2\Bar{6} < 1.29 = \frac{1}{5}( H_{6} + 4) = f(5)
        \end{align*}
    \end{enumerate}
    Combining these two facts gives us the bound on \(\lambda_{L_i}(t)\), for \(t\in \{1,3,5\}\). 
\end{proof}

The following Lemma is proven in 
{\color{black}the full version of the paper.}
\begin{lemma}
\label{lem:upperbound}
    For any \(\alpha \ge 0\), the following bounds holds:
    \begin{align*}
         \frac{1}{\alpha + 1} \Bigg(\frac{1}{t_\alpha} H_{t_\alpha+1} + 
         \frac{t_\alpha-1}{t_\alpha} + \alpha \Bigg(\frac{1}{t_\alpha}  + 
         \frac{2}{t_\alpha} \sum_{i=2}^{\lceil \frac{t_\alpha}{2} 
         \rceil}H_{i} \Bigg) \Bigg) \le \frac{991}{732}
    \end{align*}
\end{lemma}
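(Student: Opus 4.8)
The plan is to verify the claimed inequality separately on each of the three intervals of $\alpha$ dictated by the definition of $t_\alpha$, since on each interval $t_\alpha$ is a fixed odd integer and the left-hand side becomes an explicit function of $\alpha$ alone. First I would substitute the three values $t_\alpha \in \{1,3,5\}$ and compute the two constants $A(t) := \tfrac{1}{t}H_{t+1} + \tfrac{t-1}{t}$ (the bound on $\lambda_L$) and $B(t) := \tfrac{1}{t} + \tfrac{2}{t}\sum_{i=2}^{\lceil t/2\rceil} H_i$ (the bound on $\lambda_O$) coming from Lemma~\ref{lem:lambdabound}. Concretely: for $t=1$, $A(1)=H_2=\tfrac32$ and $B(1)=1$; for $t=3$, $A(3)=\tfrac13 H_4 + \tfrac23 = \tfrac{49}{36}$ and $B(3)=\tfrac13 + \tfrac23 H_2 = \tfrac23$; for $t=5$, $A(5)=\tfrac15 H_6 + \tfrac45 = \tfrac{29}{20}$ (using $H_6=\tfrac{49}{20}$) and $B(5)=\tfrac15 + \tfrac25(H_2+H_3)=\tfrac15 + \tfrac25\cdot\tfrac{11}{6} = \tfrac{14}{15}$.

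Next I would observe that, with $A,B$ fixed, the left-hand side is $\frac{A + \alpha B}{\alpha+1}$, which is a monotone function of $\alpha$ on each interval: since $\frac{d}{d\alpha}\frac{A+\alpha B}{\alpha+1} = \frac{B-A}{(\alpha+1)^2}$ and in all three cases $B < A$, the expression is \emph{decreasing} in $\alpha$. Hence on each interval it is maximized at the left endpoint. So it suffices to check three inequalities: (i) $t_\alpha=1$ on $[1,\infty)$, worst case $\alpha=1$, value $\tfrac{A(1)+B(1)}{2} = \tfrac{3/2+1}{2} = \tfrac54$; (ii) $t_\alpha=3$ on $(\tfrac{32}{90},1)$, worst case $\alpha \to \tfrac{32}{90}=\tfrac{16}{45}$, value $\frac{49/36 + (16/45)(2/3)}{1+16/45}$; (iii) $t_\alpha=5$ on $[0,\tfrac{32}{90}]$, worst case $\alpha=0$, value $A(5)=\tfrac{29}{20}$. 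Each of these is a single rational number to be compared with $\tfrac{991}{732}\approx 1.3538$: $\tfrac54 = 1.25$, $\tfrac{29}{20}=1.45$ — wait, this exceeds the bound, so I would double-check that at $\alpha=0$ we are indeed using $t_\alpha=5$; since the interval is $[0,32/90]$ and the expression is decreasing, the relevant comparison point for $t=5$ is actually the \emph{maximum} over the interval, i.e. $\alpha=0$, giving exactly $A(5)$. Thus the genuine constraint is that $A(5)=\tfrac{29}{20} \le \tfrac{991}{732}$, which is \emph{false} ($1.45 > 1.354$) — so the correct reading must be that at $\alpha=0$ the witness-tree objective is $\frac{1}{\alpha+1}(A + \alpha B) = A(5)$ only when $\alpha=0$ makes the $O$-edges weightless; re-examining, at $\alpha=0$ we have $c(O)=0$, so the objective is just $\lambda_L(5) = A(5)$, and one should instead note $A(5) = \tfrac{29}{20}$; since $\tfrac{991}{732} < \tfrac{29}{20}$ this cannot be right, so I would instead trust that the monotonicity makes the $t=5$ interval bind at its \emph{right} endpoint $\alpha = \tfrac{32}{90}$ after all — the subtlety being which direction the inequality $B<A$ points, which I would recompute carefully.

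The main obstacle, therefore, is getting the monotonicity direction and the endpoint analysis exactly right, and then confirming that the three breakpoint values $t_\alpha=5$ at $\alpha=\tfrac{32}{90}$, $t_\alpha=3$ at $\alpha=\tfrac{32}{90}$ and at $\alpha=1$, and $t_\alpha=1$ at $\alpha=1$ — i.e. precisely the points where the choice of $t_\alpha$ switches — all evaluate to at most $\tfrac{991}{732}$, with $\tfrac{991}{732}$ being attained (or approached) at one of them. I expect $\tfrac{991}{732}$ arises as $\frac{A(5) + \tfrac{32}{90}B(5)}{1+\tfrac{32}{90}}$: plugging in, $\frac{29/20 + (32/90)(14/15)}{1+32/90} = \frac{29/20 + 224/675}{61/45}$; clearing denominators, $29/20 = 0.725\cdot 2$, numerator $= 1.45 + 0.3318\ldots = 1.7818\ldots$, divided by $61/45 = 1.3\overline{5}$ gives $1.3144\ldots$ — still not matching, so the tight case is more likely $t_\alpha=3$ at $\alpha=\tfrac{32}{90}$. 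I would finish by computing $\frac{49/36 + (16/45)(2/3)}{1+16/45} = \frac{49/36 + 32/135}{61/45}$; putting over a common denominator ($540$): $\frac{735/540 + 128/540}{61/45} = \frac{863/540}{61/45} = \frac{863}{540}\cdot\frac{45}{61} = \frac{863}{12\cdot 61} = \frac{863}{732}$ — not $\tfrac{991}{732}$ either. Given these arithmetic tangles, the real plan is: define $f(\alpha)$ as the left-hand side piecewise, verify continuity and monotonicity on each piece, reduce to finitely many rational evaluations at the breakpoints $\{0, \tfrac{32}{90}, 1\}$ (each under both adjacent values of $t_\alpha$), carry out those exact fraction computations with care, and confirm the maximum equals $\tfrac{991}{732}$, attained at whichever breakpoint the careful computation singles out.
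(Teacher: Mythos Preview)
Your overall plan is exactly the paper's: write the left-hand side as $f(\alpha)=\frac{A(t_\alpha)+\alpha B(t_\alpha)}{\alpha+1}$, split into the three pieces $t_\alpha\in\{5,3,1\}$, use monotonicity in $\alpha$ on each piece, and evaluate at the breakpoints. The confusion in your write-up comes entirely from arithmetic slips in computing $A$ and $B$, which then send the monotonicity the wrong way for $t=5$.

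Concretely: for $t=5$ you have $A(5)=\tfrac15 H_6+\tfrac45=\tfrac{49}{100}+\tfrac{80}{100}=\tfrac{129}{100}=1.29$ (not $\tfrac{29}{20}$), and $B(5)=\tfrac15+\tfrac25(H_2+H_3)=\tfrac15+\tfrac25\cdot\tfrac{10}{3}=\tfrac{23}{15}\approx 1.533$ (not $\tfrac{14}{15}$; note $H_2+H_3=\tfrac32+\tfrac{11}{6}=\tfrac{10}{3}$). Since $B(5)>A(5)$, $f$ is \emph{increasing} on $[0,\tfrac{32}{90}]$, so the maximum occurs at $\alpha=\tfrac{32}{90}$, not at $0$; this resolves your paradox. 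Likewise $B(3)=\tfrac13+\tfrac23 H_2=\tfrac13+1=\tfrac43$ (not $\tfrac23$), and since $A(3)=\tfrac{49}{36}>\tfrac43=B(3)$, $f$ is decreasing on $(\tfrac{32}{90},1)$, with supremum as $\alpha\downarrow\tfrac{32}{90}$. Plugging $\alpha=\tfrac{16}{45}$ into either the $t=5$ or the $t=3$ expression gives $\frac{991/540}{61/45}=\tfrac{991}{732}$ exactly, which is the tight case; for $t=1$ you already correctly got $f(1)=\tfrac54$. With these corrected numbers your plan goes through verbatim and coincides with the paper's proof.
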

We are now ready to prove the following:
\begin{lemma}
    \label{lem:gamma_claw}
 \(\mathbb{E}[\bnu_{T}(W)]  \leq \frac{991}{732}\).
\end{lemma}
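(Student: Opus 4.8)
The plan is to combine the per-edge expectation bounds from Lemma~\ref{lem:lambdabound} with the deterministic inequality of Lemma~\ref{lem:upperbound}, using linearity of expectation over the edges of $T$. First I would write
\[
\mathbb{E}[\bnu_T(W)] = \mathbb{E}\!\left[\frac{1}{c(E^*)}\sum_{e\in E^*} c(e)H_{\bar w(e)}\right] = \frac{1}{c(E^*)}\sum_{e\in E^*} c(e)\,\mathbb{E}[H_{\bar w(e)}],
\]
and then split the sum over $E^* = L \cup O$. Bounding each term using $\mathbb{E}[H_{\bar w(e)}] \le \lambda_L(t_\alpha)$ for $e\in L$ and $\mathbb{E}[H_{\bar w(e)}] \le \lambda_O(t_\alpha)$ for $e\in O$ gives
\[
\mathbb{E}[\bnu_T(W)] \le \frac{c(L)\lambda_L(t_\alpha) + c(O)\lambda_O(t_\alpha)}{c(L)+c(O)}.
\]

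Next I would divide numerator and denominator by $c(L)$ and recall that $\alpha = c(O)/c(L)$, so that the right-hand side becomes $\frac{\lambda_L(t_\alpha) + \alpha\,\lambda_O(t_\alpha)}{1+\alpha}$. Substituting the explicit upper bounds from Lemma~\ref{lem:lambdabound}, namely $\lambda_L(t_\alpha) \le \frac{1}{t_\alpha}H_{t_\alpha+1} + \frac{t_\alpha-1}{t_\alpha}$ and $\lambda_O(t_\alpha) \le \frac{1}{t_\alpha} + \frac{2}{t_\alpha}\sum_{i=2}^{\lceil t_\alpha/2\rceil}H_i$, yields exactly the expression appearing on the left-hand side of Lemma~\ref{lem:upperbound}. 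That lemma then directly certifies this quantity is at most $\frac{991}{732}$, completing the proof.

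One subtlety to address is the edge case $c(L)=0$: since the instance is Steiner-claw free and we assumed each $s_j$ is adjacent to exactly one terminal, $L$ is nonempty, but a priori those edges could have zero cost. If $c(L)=0$ then $\alpha$ is undefined; however in that degenerate situation $c(E^*)=c(O)$ and the bound $\mathbb{E}[\bnu_T(W)] \le \lambda_O(t_\alpha)$ (with, say, $t_\alpha$ chosen as in the $\alpha\to\infty$ regime, $t_\alpha=1$, giving $\lambda_O=1$) still suffices. Alternatively one observes $c(E^*)>0$ always, so dividing is legitimate, and the convex-combination argument $\frac{c(L)\lambda_L + c(O)\lambda_O}{c(L)+c(O)} \le \max\{\lambda_L,\lambda_O\}$ bypasses the need to form $\alpha$ at all, then one checks the maximum is at most $\frac{991}{732}$ for the relevant $t_\alpha$. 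I expect the only real ``work'' here to be this bookkeeping about division and which $t_\alpha$ is in force; the heart of the matter — the per-edge bounds and the numeric inequality — has already been isolated into the two cited lemmas, so the main obstacle is simply stating the reduction cleanly rather than any new mathematical difficulty.
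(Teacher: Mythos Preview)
Your proposal is correct and follows essentially the same approach as the paper: linearity of expectation, splitting the sum over $L$ and $O$, bounding by $\lambda_L(t_\alpha)$ and $\lambda_O(t_\alpha)$, rewriting in terms of $\alpha$, and then invoking Lemmas~\ref{lem:lambdabound} and~\ref{lem:upperbound}. The only difference is that you add a discussion of the degenerate case $c(L)=0$, which the paper does not address.
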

\begin{proof}
One observes:
    \begin{align*}
    \sum_{e\in L\cup O}c(e)\mathbb{E}[H_{\bar w(e)}] 
        \leq  \sum_{e\in L}c(e) \lambda_{L}(t_{\alpha}) +  \sum_{e\in O}c(e) \lambda_{O}(t_{\alpha}) 
        = (\lambda_{L}(t_{\alpha}) + \alpha\lambda_{O}(t_{\alpha}) )\sum_{e\in L}c(e)
    \end{align*}
Therefore \(\mathbb{E}[\nu_{T}(W)]\) is bounded by:
\begin{align*}
   \frac{\sum_{e\in L\cup O}c(e) \mathbb{E}[H_{\bar w(e)}] }{\sum_{e\in L\cup O}c(e)} 
        \leq \frac{(\lambda_{L}(t_{\alpha}) + \alpha\lambda_{O}(t_{\alpha})) \sum_{e\in L}c(e)}{ (\alpha+1)\sum_{e\in L}c(e)  } 
        = \frac{\lambda_{L}(t_{\alpha}) + \alpha\lambda_{O  }(t_{\alpha}) }{\alpha + 1} \leq \frac{991}{732}.
 \end{align*}
where the last inequality follows 
using 
Lemma~\ref{lem:lambdabound} and~\ref{lem:upperbound}.
\end{proof}
Now Theorem~\ref{thm:clawupper} follows by combining Lemma~\ref{lem:gamma_claw} with Theorem~\ref{thm:gamma} in which $\gamma$ is replaced by the supremum taken over all Steiner-claw free instances (rather than over all Steiner Tree instances).

\paragraph*{ Tightness of the bound.}
\begin{figure}[t]
    \centering
    \begin{tikzpicture}
                   
        \tikzset{black dot/.style={draw=black, very thick, circle,minimum size=0pt, inner sep=1pt, outer sep=1pt,fill=black}}
        \tikzset{terminal/.style={draw=black,  thick,minimum size=0pt, inner sep=2.5pt, outer sep=1pt}}
        \tikzset{P node/.style={fill={rgb,255: red,20; green,154; blue,0}, draw={rgb,255: red,20; green,154; blue,0}, circle, minimum size=0pt,inner sep=1pt, outer sep=1pt}}
    
        \tikzstyle{witness edge}=[-, draw={rgb,255: red,195; green,0; blue,3}, very thick]
        \tikzstyle{T edges}=[-, very thick]
        \tikzstyle{new witness}=[-, draw={rgb,255: red,195; green,0; blue,3}, dashed]
        \tikzstyle{connected terminals}=[-, draw=black, dashed, very thick]
        \tikzstyle{P}=[-, draw={rgb,255: red,20; green,154; blue,0}, very thick]
        \tikzstyle{Marked Edges}=[-, draw={rgb,255: red,255; green,207; blue,14}, very thick]
        
		\node [style=black dot] (37) at (-14, 2.25) {};
		\node [style=black dot] (38) at (-13, 2.25) {};
		\node [style=black dot] (39) at (-12, 2.25) {};
		\node [style=black dot] (40) at (-11, 2.25) {};
		\node [style=black dot] (41) at (-10, 2.25) {};
		\node [style=black dot] (42) at (-9, 2.25) {};
		\node [style=black dot] (43) at (-8, 2.25) {};
		\node [style=black dot] (44) at (-7, 2.25) {};
		\node [style=black dot] (45) at (-6, 2.25) {};
		\node [style=black dot] (46) at (-5, 2.25) {};
		\node [style=black dot] (47) at (-4, 2.25) {};
		\node [style=black dot] (48) at (-3, 2.25) {};
		\node [style=black dot] (49) at (-2, 2.25) {};
		\node [style=black dot] (50) at (-1, 2.25) {};
  
		\node [style=terminal] (51) at (-14, 1.5) {};
		\node [style=terminal] (52) at (-13, 1.5) {};
		\node [style=terminal] (53) at (-12, 1.5) {};
		\node [style=terminal] (54) at (-11, 1.5) {};
		\node [style=terminal] (55) at (-10, 1.5) {};
		\node [style=terminal] (56) at (-9, 1.5) {};
		\node [style=terminal] (57) at (-8, 1.5) {};
		\node [style=terminal] (58) at (-7, 1.5) {};
		\node [style=terminal] (59) at (-6, 1.5) {};
		\node [style=terminal] (60) at (-5, 1.5) {};
		\node [style=terminal] (61) at (-4, 1.5) {};
		\node [style=terminal] (62) at (-3, 1.5) {};
		\node [style=terminal] (63) at (-2, 1.5) {};
		\node [style=terminal] (64) at (-1, 1.5) {};
  
		\node  (71) at (-10.5, 2.5) {$1$};
		\node  (72) at (-11.5, 2.5) {$2$};
		\node  (73) at (-12.5, 2.5) {$2$};
		\node  (74) at (-12.75, 1.85) {$1$};
		\node  (75) at (-11.75, 1.85) {$4$};
		\node  (76) at (-10.75, 1.85) {$1$};
		\node  (78) at (-7.5, 2.5) {$1$};
		\node  (79) at (-8.5, 2.5) {$2$};
		\node  (80) at (-9.5, 2.5) {$2$};
		\node  (81) at (-9.75, 1.85) {$1$};
		\node  (82) at (-8.75, 1.85) {$4$};
		\node  (83) at (-7.75, 1.85) {$1$};
		\node  (85) at (-4.5, 2.5) {$1$};
		\node  (86) at (-5.5, 2.5) {$2$};
		\node  (87) at (-6.5, 2.5) {$2$};
		\node  (88) at (-6.75, 1.85) {$1$};
		\node  (89) at (-5.75, 1.85) {$4$};
		\node  (90) at (-4.75, 1.85) {$1$};
		\node  (92) at (-1.5, 2.5) {$1$};
		\node  (93) at (-2.5, 2.5) {$2$};
		\node  (94) at (-3.5, 2.5) {$2$};
		\node  (95) at (-3.75, 1.85) {$1$};
		\node  (96) at (-2.75, 1.85) {$4$};
		\node  (97) at (-1.75, 1.85) {$1$};
		\node  (98) at (-0.75, 1.85) {$1$};
		\node  (99) at (-13.75, 1.85) {$1$};
		\node  (100) at (-13.5, 2.5) {$1$};

        \draw [style=T edges] (37) to (50);
		\draw [style=T edges] (50) to (64);
		\draw [style=T edges] (37) to (51);
		\draw [style=T edges] (38) to (52);
		\draw [style=T edges] (39) to (53);
		\draw [style=T edges] (40) to (54);
		\draw [style=T edges] (41) to (55);
		\draw [style=T edges] (42) to (56);
		\draw [style=T edges] (43) to (57);
		\draw [style=T edges] (44) to (58);
		\draw [style=T edges] (45) to (59);
		\draw [style=T edges] (46) to (60);
		\draw [style=T edges] (47) to (61);
		\draw [style=T edges] (48) to (62);
		\draw [style=T edges] (49) to (63);
		\draw [style=witness edge] (52) to (53);
		\draw [style=witness edge] (53) to (54);
		\draw [style=witness edge] (55) to (56);
		\draw [style=witness edge] (56) to (57);
		\draw [style=witness edge] (58) to (59);
		\draw [style=witness edge] (59) to (60);
		\draw [style=witness edge] (61) to (62);
		\draw [style=witness edge] (62) to (63);
		\draw [style=witness edge, bend right=15, looseness=1.25] (51) to (53);
		\draw [style=witness edge, bend right=15] (53) to (56);
		\draw [style=witness edge, bend right=15] (56) to (59);
		\draw [style=witness edge, bend right=15] (59) to (62);
		\draw [style=witness edge, bend right=15, looseness=1.25] (62) to (64);
    \end{tikzpicture}
    \caption{Lower bound instance shown in black with $c(e)=1$ for all the edges in $L$ and $c(e)=\alpha$ for all the edges in $O$, for $\alpha=\frac{32}{90}$. The white squares are terminals and black circles are Steiner nodes. Red edges form the laminar witness tree \(\W\), with the numbers next to each edge the value of $w$ imposed on $T$. 
    }
    \label{fig:clawwitnesstree}
\end{figure}
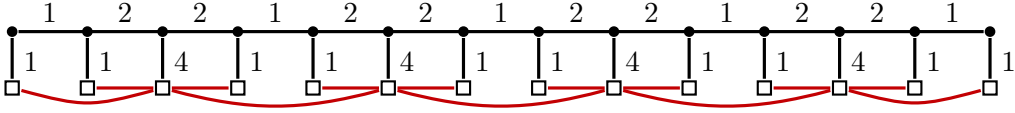

We conclude this section by spending a few words on Theorem~\ref{thm:clawlowerbound}. Our lower-bound instance is obtained by taking a tree $T$ on $q$ Steiner nodes, each adjacent to one terminal, with $c(e)=1$ for all the edges in $L$ and $c(e)=\alpha$ for all the edges in $O$, for $\alpha=\frac{32}{90}$. 
Similar to Section 3, a crucial ingredient for our analysis is in utilizing Theorem~\ref{thm:edgelaminar} stating that there is an optimal laminar witness tree. See Figure~\ref{fig:clawwitnesstree}.
We use this to show that there is an optimal witness tree for our tree \(T\), whose objective value is at least \(\frac{991}{732}-\varepsilon\). Details can be found in
{\color{black}the full version of the paper.}
\section*{Acknowledgements}
The authors are very grateful to Haris Angelidakis for many important discussions on this work.

\bibliographystyle{alpha}
\bibliography{refs}

\begin{appendix}
    \section{ Proofs of Section~\ref{sec:laminarity}}
{\color{black} In this section we discuss the proofs missing in Section~\ref{sec:laminarity}. Section~\ref{sec:edgelamproof} includes the proof of Theorem~\ref{thm:edgelaminar} and Section~\ref{sec:laminarityproof} includes the proof of Theorem~\ref{thm:contractlam}.}
\subsection{ Proof of Theorem~\ref{thm:edgelaminar}}
\label{sec:edgelamproof}

\edgelaminar*
\begin{proof}
    We first show that there is a witness tree $W$ minimizing $\bnu_T(W)$ such that the subgraph of $W$ induced on the terminals of any maximally connected region in $T$ of zero cost edges is a star. We assume for the sake of contradiction that such a maximally connected region $F\subseteq E$ exists where the subgraph of $W$ induced on the terminals of $F$ is a set of connected components $W_1, \dots, W_i$, for $i >1$. {\color{black}First, if there is an edge $e=uv\in E_W$ such that $V[T_e] \cap V[F] \neq \emptyset$, and $u,v \notin V[F]$, then the solution can be improved by replacing $uv$ with some an edge having one endpoint in $F$. To see this, first note that since $T$ is a tree, $u$ and $v$ are in separate components of $G\backslash\{uv\}$. Fix terminal $r\in R\cap V[F]$, and without loss of generality $r$ is in the same component as $u$. So we can replace $uv$ with $rv$ to find a solution with no greater cost.}
    
    So we can assume that any edge $e\in E_W$ such that $E[T_e] \cap F \neq \emptyset$ has an endpoint in $F$. Without loss of generality, suppose the shortest path between two components is from $W_1$ to $W_2$, and let $e$ denote the edge of this path incident to $W_2$. For an arbitrary but fixed edge $f$ between $W_1$ and $W_2$ we define $W'\coloneqq W\cup \{f\} \backslash \{e\}$. Clearly, we can see $\sum_{e'\in E[T_f] } c(e') = 0 < \sum_{e'\in E[T_e]} c(e')$, so we have $\bnu_T(W') < \bnu_T(W)$, contradicting the minimality of $W$. We can rearrange the edges between the terminals of $F$ to be a star, as this will not affect $\bnu_T(W)$. So we assume that this holds on $W$ for any such zero cost region of $T$.

    For a maximally connected region of zero cost edges $F\subseteq E$, by an abuse of notation, we will simply refer to the induced star subgraph of $W$ as $F$, and denote its center by $s$. We assume without loss of generality that edges of $W$ incident to $F$ have endpoint at $s$. To see this, first note that $F$ is a connected subgraph of $W$, so any edges incident to $F$ cannot share an endpoint outside of $F$, otherwise we have a cycle in $W$. Furthermore, for any edge of $W$ incident to $F$ with endpoint not equal to $s$, we can change the endpoint of that edge to be $s$ and maintain the connectivity of $W$ since $F$ is connected. Edges changed in this way will have the same edges between their endpoints except for those in the region $F$, which is zero cost, so this does not increase $\bnu_T(W)$. 

    We assume for the sake of contradiction that the witness tree $W$ minimizing $\nu_T(W)$ is not a laminar witness tree, and that it has the minimum number of pairs of crossing edges. That is, there exist distinct leaves $r_1,r_2,r_3,r_4 \in R$  such that $e_1=r_1r_2, e_2=r_3r_4\in E_W$ are crossing.  We denote the path $T_{e_1}\cap T_{e_2}$ by $P$. We denote the shortest path from $P$ to $r_i$ by $P_i$.  

    Since $e_1$ and $e_2$ are crossing edges, one of $T_{r_1r_3}$ or $T_{r_1r_4}$ contains exactly one node of $P$. The same is true for $r_2$. Without loss of generality, let us that the paths $T_{r_1r_3}$ and $T_{r_2r_4}$ contain exactly one node of $P$. We consider by cases which component of $W\backslash \{e_1, e_2\}$ contains two of  $r_1, r_2, r_3$ and $r_4$.
    \begin{itemize}
        \item Case: $r_1$ and $r_3$ (or similarly, $r_2$ and $r_4$) are in the same component $W\backslash \{e_1, e_2\}$.  Note, that if $|V(P)| =1$, then we can assume that we are in this case without loss of generality. If $\sum_{e\in E[P_1]} c(e) + \sum_{e\in E[P_3]}c(e)=0$, then as we have shown above, $e_1$ and $e_2$ are assumed to share an endpoint, and are thus not crossing. So we have that $\sum_{e\in E[P_1]} c(e) + \sum_{e\in E[P_3]} c(e)>0$.
        Consider $W' \coloneqq  W\cup \{r_2r_3\}\backslash\{e_1\}$ and $W'' \coloneqq  W\cup \{r_1r_4\} \backslash \{e_2\}$. If $\bnu_T(W) - \bnu_T(W') > 0$, this contradicts the minimality of $\bnu_T(W)$. Therefore, we can see
        \begin{align*}
            0 &\leq  c(e) (\bnu_T(W') - \bnu_T(W) ) = \sum_{e\in E[P_3]} \frac{c(e)}{w(e)+1} -  \sum_{e\in E[P_1]} \frac{c(e)}{w(e)} \\
            & < \sum_{e\in E[P_3]} \frac{c(e)}{w(e)} -  \sum_{e\in E[P_1]} \frac{c(e)}{w(e)+1} = c(e) ( \bnu_T(W) - \bnu_T(W'') )
        \end{align*}
        Clearly, we have $\bnu_T(W'') < \bnu_T(W)$, contradicting the minimality of $\bnu_T(W)$. 
        
        \item Case: $r_2$ and $r_3$ (or similarly, $r_1$ and $r_4$) are in the same component of $W\backslash \{e_1, e_2\}$. In this case, consider $W' \coloneqq \{r_1r_3,r_2r_4\} \backslash \{e_1,e_2\}$. If $\sum_{e\in E[P]} c(e) = 0$, then clearly $\bnu_T(W') = \bnu_T(W)$, but $W'$ has one fewer crossing pair, contradicting the assumption that $W$ minimizes the number of such pairs, thus $\sum_{e\in E[P]} c(e) > 0$. Without loss of generality, we can assume that $|V(P)| >1$, because if $|V(P)|=1$ then we can reduce to the previous case by relabelling the nodes $r_1,r_2,r_3$, and $r_4$. Therefore, we can see the following
        \begin{align*}
            c(e)\left(\bnu_T(W') - \bnu_T(W) \right) \leq - \sum_{e\in E[P]} \frac{c(e)}{w(e)} < 0
        \end{align*}
    \end{itemize}
    Clearly, we have $\bnu_T(W') < \bnu_T(W)$, contradicting the minimality of $\bnu_T(W)$. 
\end{proof}

\subsection{ Proof of Theorem~\ref{thm:contractlam}}
\label{sec:laminarityproof}
\begin{proof}
    $\Rightarrow )$ Consider a witness tree $W = (R, E_W)$ of $T$ found by marking and contraction. Assume for the sake of contradiction that $W$ is not laminar. That is, assume there are distinct edges $e_1, e_2 \in E_W$ that are crossing. By the method of marking and contraction, for $i=1,2$, we know that the nodes of $T_{e_i}$ are contained precisely in two separate connected regions of marked edges, denoted $M_{i,1}$ and $M_{i,2}$.
    
    Therefore, the endpoints of $e_i$ are the unique leaves that belong to the connected regions containing $M_{i,1}$ and $M_{i,2}$ respectively. Therefore, if $T_{e_1}\cap T_{e_2} \neq \emptyset$, then $e_1$ and $e_2$ share an endpoint, contradicting our assumption.

    $\Leftarrow )$ Let $W = (R, E_W)$ be a laminar witness tree. Our goal is to find $W$ by marking and contraction. {\color{black}As a simplifying step we contract any node of $T$ that has degree $2$, as any two edges in $E_W$ that share a degree $2$ node between their endpoints must share another node. }
    For edge $f \in E$, we mark $f$ if there are distinct edges $e, e'\in E_W$ such that $f\in E[T_{e}\cap T_{e'}]$.

    First, we want to show that for any edge $e \in E_W$,  there is at most one edge of $T_e$ that is unmarked. Assume for contradiction that distinct edges $f_1, f_2 \in T_{e}$ are both unmarked. Of the three connected components of $T \backslash \{ f_1, f_2\}$, consider the component that is incident to both $f_1$ and $f_2$, which we denote $T'$. We assumed $T$ has no degree $2$ vertices, so  there is at least one leaf $r \in T'\cap R$, therefore, there is a minimal path in $W$ from $r$ to an endpoint of $e$. So there is an edge not equal to $e$ with at least one of $f_1$ or $f_2$ between its endpoints, and thus that edge is marked, which is a contradiction. 
    See Figure~\ref{fig:optlaminar}.(a) for an example.

    It remains to show that for every edge $e = rr' \in E_W$, there is at least one edge on the $T_e$ path that is unmarked. We assume for contradiction that that every edge in $E_W$ on the path $T_e$ is marked, and we enumerate the nodes of $T_e$ as $r=v_0, v_1, \dots, v_k = r'$ in order. The edge $v_0v_1$ is marked, so there is an edge with endpoint at $v_0$ not equal to $e$. Pick $e_i$ to be edge that maximizes $\{v_0,v_1, \dots, v_i\} \subset T_{e_i}$, denote its endpoints as $r$ and $r_i$. Since the edge $v_iv_{i+1}$ is marked, there is an edge $e' \in E_W$ such that $v_iv_{i+1} \in E[T_{e'}]$, where $e' \neq e, e_i$. Since $W$ is laminar, we know that $e'$ must share an endpoint with $e$ and with $e_i$. We picked $e_i$ to be the edge in $E_W$ that maximizes the set $\{v_0,v_1, \dots, v_i\} \subset T_{e_i}$, so $e'$ cannot have $r$ as an endpoint, and therefore must have $r'$ as an endpoint. Similarly, the second endpoint of $e'$ must be $r_i$ so $e'$ does not cross with $e_i$. Therefore, the edges $e,e_i,e' \in E_W$ form a cycle in $W$, contradicting the assumption that $W$ is a tree. See Figure~\ref{fig:optlaminar}.(b) for an example.
    
    Furthermore, every edge $e\in E_W$ has a one to one correspondence to an edge of $E$, which is the unique edge $T_e$ that is unmarked. It remains to show that the connected regions of marked edges each contain exactly one leaf, by contracting these marked regions, the resulting tree on the unmarked edges will be exactly $W$. First, consider the connected regions of marked edges of $T$. By the one to one correspondence between unmarked edges and $E_W$, we have $|R|-1$ unmarked edges, and thus $|R|$ connected regions of marked edges. 
    \begin{figure}[t]
    \begin{center}
        \begin{tabular}{c|c}
            \begin{tikzpicture}[scale=0.75]
                
                \tikzset{black dot/.style={draw=black, very thick, circle,minimum size=0pt, inner sep=1pt, outer sep=1pt,fill=black}}
                \tikzset{terminal/.style={draw=black,  thick,minimum size=0pt, inner sep=2.5pt, outer sep=1pt}}
                \tikzset{P node/.style={fill={rgb,255: red,20; green,154; blue,0}, draw={rgb,255: red,20; green,154; blue,0}, circle, minimum size=0pt,inner sep=1pt, outer sep=1pt}}
            
                \tikzstyle{witness edge}=[-, draw={rgb,255: red,195; green,0; blue,3}, very thick]
                \tikzstyle{T edges}=[-, very thick]
                \tikzstyle{new witness}=[-, draw={rgb,255: red,195; green,0; blue,3}, dashed]
                \tikzstyle{connected terminals}=[-, draw=black, dashed, very thick]
                \tikzstyle{P}=[-, draw={rgb,255: red,20; green,154; blue,0}, very thick]
                \tikzstyle{Marked Edges}=[-, draw={rgb,255: red,255; green,207; blue,14}, very thick]
                    
    		\node [style=black dot] (1) at (-9.5, 3.25) {};
    		\node [style=terminal] (2) at (-9, 2.5) {};
    		\node [style=black dot] (3) at (-8.25, 3.5) {};
    		\node [style=terminal] (5) at (-7.5, 2.25) {};
    		\node [style=black dot] (6) at (-7, 3.25) {};
    		\node [style=black dot] (7) at (-5.75, 2.25) {};
    		\node [style=black dot] (8) at (-6.5, 1.75) {};
    		\node [style=black dot] (9) at (-5, 1.5) {};
    		\node [style=terminal] (10) at (-5.5, 1) {};
    		\node [style=terminal] (11) at (-4.75, 0.5) {};
    		\node [style=terminal] (12) at (-7, 1.25) {};
    		\node [style=black dot] (13) at (-10.75, 2.5) {};
    		\node [style=black dot] (14) at (-10.25, 2) {};
    		\node [style=terminal] (15) at (-9.5, 1.5) {};
    		\node [style=black dot] (16) at (-11.5, 1.75) {};
    		\node [style=terminal] (17) at (-12, 0.75) {};
    		\node [style=terminal] (18) at (-11, 0.5) {};
    		\node [style=terminal] (19) at (-6, 1.25) {};
    		\node [style=terminal] (20) at (-10.5, 1.25) {};
    		\node [style=terminal] (21) at (-6.75, 2.25) {};
    		\node  (22) at (-10.3, 3.05) {$f_1$};
    		\node  (23) at (-6.1, 2.925) {$f_2$};
    		\node  (24) at (-8, 0.25) {$e$};
    		\node  (25) at (-8.5, 2.25) {$r$};
    		\node [style=terminal] (26) at (-8, 2.75) {};
                \node (31) at (-12, 3.5) {$(a)$};
                
    		\draw [style=T edges] (1) to (13);
    		\draw [style=T edges] (13) to (14);
    		\draw [style=T edges] (14) to (20);
    		\draw [style=T edges] (14) to (15);
    		\draw [style=P] (13) to (16);
    		\draw [style=T edges] (16) to (17);
    		\draw [style=P] (16) to (18);
    		\draw [style=T edges] (1) to (2);
    		\draw [style=P] (1) to (3);
    		\draw [style=P] (3) to (6);
    		\draw [style=T edges] (6) to (5);
    		\draw [style=T edges] (6) to (21);
    		\draw [style=T edges] (6) to (7);
    		\draw [style=T edges] (7) to (8);
    		\draw [style=P] (7) to (9);
    		\draw [style=T edges] (9) to (10);
    		\draw [style=P] (9) to (11);
    		\draw [style=T edges] (8) to (19);
    		\draw [style=T edges] (8) to (12);
    		\draw [style=witness edge, bend right, looseness=0.50] (18) to (11);
    		\draw [style=witness edge, in=15, out=-75, looseness=1.50] (2) to (18);
		      \draw [style=T edges] (3) to (26);

            \end{tikzpicture}
            &
            \begin{tikzpicture}[scale=0.75]
                
                \tikzset{black dot/.style={draw=black, very thick, circle,minimum size=0pt, inner sep=1pt, outer sep=1pt,fill=black}}
                \tikzset{terminal/.style={draw=black,  thick,minimum size=0pt, inner sep=2.5pt, outer sep=1pt}}
                \tikzset{P node/.style={fill={rgb,255: red,20; green,154; blue,0}, draw={rgb,255: red,20; green,154; blue,0}, circle, minimum size=0pt,inner sep=1pt, outer sep=1pt}}
            
                \tikzstyle{witness edge}=[-, draw={rgb,255: red,195; green,0; blue,3}, very thick]
                \tikzstyle{T edges}=[-, very thick]
                \tikzstyle{new witness}=[-, draw={rgb,255: red,195; green,0; blue,3}, dashed, very thick]
                \tikzstyle{connected terminals}=[-, draw=black, dashed, very thick]
                \tikzstyle{P}=[-, draw={rgb,255: red,20; green,154; blue,0}, very thick]
                
    		\node [style=black dot] (1) at (-9.5, 3.25) {};
    		\node [style=terminal] (2) at (-9, 2.5) {};
    		\node [style=black dot] (3) at (-8.25, 3.5) {};
    		\node [style=terminal] (5) at (-7.5, 2.25) {};
    		\node [style=black dot] (6) at (-7, 3.25) {};
    		\node [style=black dot] (7) at (-5.75, 2.25) {};
    		\node [style=black dot] (8) at (-6.5, 1.75) {};
    		\node [style=black dot] (9) at (-5, 1.5) {};
    		\node [style=terminal] (10) at (-5.5, 1) {};
    		\node [style=terminal] (11) at (-4.75, 0.5) {};
    		\node [style=terminal] (12) at (-7, 1.25) {};
    		\node [style=black dot] (13) at (-10.75, 2.5) {};
    		\node [style=black dot] (14) at (-10.25, 2) {};
    		\node [style=terminal] (15) at (-9.5, 1.5) {};
    		\node [style=black dot] (16) at (-11.5, 1.75) {};
    		\node [style=terminal] (17) at (-12, 0.75) {};
    		\node [style=terminal] (18) at (-11, 0.5) {};
    		\node [style=terminal] (19) at (-6, 1.25) {};
    		\node [style=terminal] (20) at (-10.5, 1.25) {};
    		\node [style=terminal] (21) at (-6.75, 2.25) {};
    		\node  (24) at (-8, 0.25) {$e$};
    		\node  (25) at (-8.575, 2.475) {$r_i$};
    		\node [style=terminal] (26) at (-8, 2.75) {};
    		\node  (27) at (-9.5, 3.5) {$v_i$};
    		\node  (28) at (-8.95, 0.95) {$e_i$};
    		\node  (29) at (-7.9, 0.975) {$e'$};
    		\node  (30) at (-11.25, 0.125) {$r$};
    		\node  (31) at (-4.75, 0.15) {$r'$};
      		\node  (32) at (-8.425, 3.75) {$v_{i+1}$};
                \node (31) at (-12, 3.5) {$(b)$};
    
                \draw [style=P] (1) to (13);
    		\draw [style=T edges] (13) to (14);
    		\draw [style=T edges] (14) to (20);
    		\draw [style=T edges] (14) to (15);
    		\draw [style=P] (13) to (16);
    		\draw [style=T edges] (16) to (17);
    		\draw [style=P] (16) to (18);
    		\draw [style=T edges] (1) to (2);
    		\draw [style=P] (1) to (3);
    		\draw [style=P] (3) to (6);
    		\draw [style=T edges] (6) to (5);
    		\draw [style=T edges] (6) to (21);
    		\draw [style=P] (6) to (7);
    		\draw [style=T edges] (7) to (8);
    		\draw [style=P, in=135, out=-45] (7) to (9);
    		\draw [style=T edges] (9) to (10);
    		\draw [style=P, in=105, out=-75] (9) to (11);
    		\draw [style=T edges] (8) to (19);
    		\draw [style=T edges] (8) to (12);
    		\draw [style=witness edge, bend right, looseness=0.50] (18) to (11);
    		\draw [style=witness edge, in=15, out=-75, looseness=1.50] (2) to (18);
    		\draw [style=T edges] (3) to (26);
    		\draw [style=witness edge, bend right] (2) to (11);
            \end{tikzpicture}
    
        \end{tabular}
    \end{center}
    \caption{ Returning to the tree $T$ from Figure~\ref{fig:laminarity} the green edges denote the edges marked on $T_e$, and the red edges denote edges of $W$. 
    Figure $(a)$: $f_1, f_2\in T_e$ are unmarked. The component of $T\backslash\{f_1, f_2\}$ has a terminal $r$ and there must be a path from $r$ to an endpoint of $e$ in $W$. So $f_1$ must be marked by definition.
    Figure $(b)$: Every edge of $T_e$ is marked. So taking $e_i$ the edge with endpoint $r$ that maximally intersects $T_e$, has endpoint at $r_i$. The edge $v_{i+1}$ is marked, so $e'$ must have endpoints at $r_i$ and $r'$ by laminarity since it shares $v_i$ with $e_i$ and $e$.
    }
    \label{fig:optlaminar}
\end{figure}
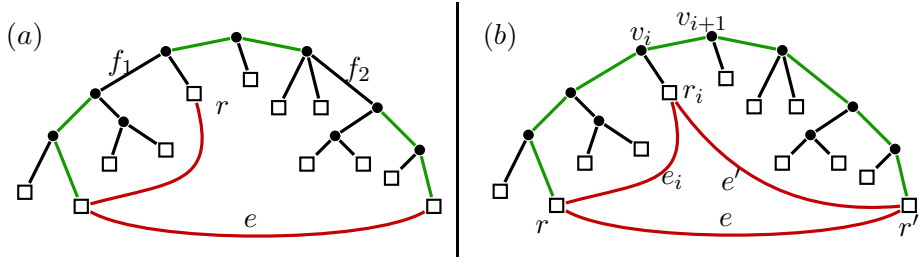

    We assume for contradiction that there is a maximal connected region of marked edges that does not contain a leaf, which we denote by $C$, noting that $C$ is itself a tree. Consider a leaf $v$ of $C$. Clearly, $v$ is not a leaf of $T$, and is incident to at least two unmarked edges $f_1, f_2\in E$, as $T$ is assumed to have no nodes of degree $2$. By the way we find these unmarked edges there are unique and distinct edges $e_1, e_2\in E_W$ such that $f_1 \in E[T_{e_1}]$ and $f_2\in E[T_{e_2}]$. By the laminarity of $W$, since $v \in T_{e_1} \cap T_{e_2}$, $e_1$ and $e_2$ must share an endpoint, and so there is a path from $v$ to a leaf $r\in R$ of marked edges. Since $C$ is a maximal region of marked edges we see that $r\in C$, contradicting the assumption that $C$ contains no leaves.
\end{proof}
    \section{ Proofs for Section~\ref{sec:approximationCA-Node}}
{\color{black}
In this section we provide the complete proofs required for Section~\ref{sec:approximationCA-Node}. In particular, we provide a complete proof of Lemma~\ref{lem:increase} in Section~\ref{sec:bounds}. A proof of Lemma~\ref{lem:case1} is found in Section~\ref{sec:case1}. And finally, we complete the proof of Lemma~\ref{lem:invariant} in Section~\ref{sec:othercases}.
}

The following observation will be a useful tool throughout this section.
\begin{observation}
\label{lem:child}
        Let \(u \in T_i\backslash r_i\) and \(k\ge 1\) be the number of its children, enumerated \(u_1, \dots, u_k\). Let \(W^{u_1},\dots,W^{u_k}\) be the witness trees of \(Q_{u_1}, \dots ,Q_{u_k}\). If \(u_m\) is the marked child of \(u\), then for all \(v\in P(u_m)\), \(w^u(v) = w^{u_m}(v) + k-1\)
\end{observation}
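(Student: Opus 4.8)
The plan is to expand $w^u(v)$ directly from definition~(\ref{eq:w}) and compare it edge-by-edge with $w^{u_m}(v)$, so that all contributions common to both cancel.

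First I would record the structural facts about the two trees. By construction $\overline{W}^u$ contains $\overline{W}^{u_m}$ verbatim as a subtree, together with the $k-1$ ``join'' edges $\ell(u_m)\ell(u_j)$ for $j\neq m$, and $W^u=\overline{W}^u+e^u$ with $e^u=\ell(u)\ell(a(u))$; likewise $W^{u_m}=\overline{W}^{u_m}+e^{u_m}$ with $e^{u_m}=\ell(u_m)\ell(a(u_m))$. A small but crucial point is that $e^u=e^{u_m}$: since $u_m$ is the marked child of $u$ we have $\ell(u)=\ell(u_m)$, so $u$ fails the defining condition for $a(u_m)$, and therefore the first strict ancestor of $u_m$ whose $\ell$-value differs from $\ell(u_m)$ is exactly $a(u)$; substituting gives $e^{u_m}=\ell(u_m)\ell(a(u))=e^u$.

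Next I would fix $v\in P(u_m)$ (so $v\in Q_{u_m}$) and split the edges of $W^u$ into three groups according to their path in $T[S^*]$. (i) An edge of some $\overline{W}^{u_j}$ with $j\neq m$ has both endpoints in $Q_{u_j}$, which is vertex-disjoint from $Q_{u_m}\ni v$, so its path avoids $v$. (ii) An edge of $\overline{W}^{u_m}$ contributes the same amount to $w^u(v)$ as to $w^{u_m}(v)$, since $\overline{W}^{u_m}$ lies inside both $W^u$ and $W^{u_m}$ and $T[S^*]$-paths do not depend on the witness tree. (iii) Each of the remaining $k$ edges --- the $k-1$ join edges and $e^u$ --- has $\ell(u_m)$ as one endpoint and its other endpoint strictly outside $Q_{u_m}$ (in some $Q_{u_j}$, or at $\ell(a(u))$, which hangs off a proper ancestor of $u$ and hence lies outside $Q_u$); since $P(u_m)$ is precisely the unique $T_i$-path from $u_m$ down to $\ell(u_m)$, such an edge's path must climb all of $P(u_m)$ and then exit $Q_{u_m}$ through $u$, so it contains every node of $P(u_m)$, in particular $v$. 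Writing $A$ for the contribution of the $\overline{W}^{u_m}$-edges, group (iii) contributes $(k-1)+1$, so $w^u(v)=A+(k-1)+1+\mathbbm{1}[v\in F]$; the analogous accounting for $W^{u_m}$ gives $w^{u_m}(v)=A+1+\mathbbm{1}[v\in F]$, using $e^{u_m}=e^u$ and that $\mathbbm{1}[v\in F]$ is intrinsic to $v$. Subtracting yields $w^u(v)-w^{u_m}(v)=k-1$.

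I expect the main obstacle --- more a matter of care than of difficulty --- to be item (iii): confirming that every listed edge is genuinely new relative to $\overline{W}^{u_m}$ (so nothing is double counted) and that its $T[S^*]$-path really traverses all of $P(u_m)$, including both its leaf endpoint $\ell(u_m)$ and its top node $u_m$. Both facts follow from the rooted structure of $T_i$ once one checks the small special configurations (e.g.\ $u_m$ a leaf, so $P(u_m)=\{u_m\}$ and $\overline{W}^{u_m}$ is empty, or $a(u)=r_i$), which I would verify explicitly.
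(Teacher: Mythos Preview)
Your proof is correct and follows the same approach as the paper's (one-sentence) justification, which merely recalls that $W^u$ consists of $\overline{W}^{u_1},\dots,\overline{W}^{u_k}$ together with the $k-1$ join edges and $e^u$; you have simply carried out the edge-by-edge comparison in full, including the useful explicit verification that $e^u=e^{u_m}$, which the paper leaves implicit.
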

To see this, recall that if \(u_m\) is the marked child of \(u\), then \(W^u\) is equal to \(\overline{W}^{u_1}, \dots, \overline{W}^{u_k}\)  plus the edges \(\ell(u_m)\ell(u_j)\) \(j\neq m\), and \(e^u\). 
\subsection{  Proof of Lemma~\ref{lem:increase}}
\label{sec:bounds}
Recall that \(u\) has no final node children, and \(C_{1}^{u_j} = \min_{j\in [k]} C^{u_j}_1\) so \(u_1\) is the marked child of \(u\). We restate Lemma~\ref{lem:increase}  here.

\lemIncrease*
\begin{proof}
    Recall \(W^u\) is the union of \(\overline{W}^{u_1}, \dots, \overline{W}^{u_k}\)  plus the edges \(e^u\), and \(\ell(u_1)\ell(u_j)\) for $j=2, \dots, k$. 
    \begin{enumerate}[label=(\alph*)]
        \item None of the trees \(\overline{W}^{u_1}, \dots, \overline{W}^{u_k}\) contain \(u\), thus the edges \(\ell(u_1)\ell(u_2), \dots, \ell(u_1)\ell(u_k)\), and \(e^u\) are the only edges that contribute to \(w^u(u)\). Thus \(w^u(u) = k\).
        \item The only edges in \(W^u\) with endpoints in \(Q_{u_j}\) for \(j\in \{2,\dots, k\}\), are the  edges of \(W^{u_j}\). 
        \item This is shown with a similar argument to (b).
        \item Finally, we can see
        \begin{align*}
            &\sum_{v \in P(u_1)\setminus\ell(u_1)} H_{w^{u_1} (v)} + \sum_{j=1}^{k-1}C^{u_1}_{j} \\
            &= \sum_{v \in P(u_1)\setminus\ell(u_1)} H_{w^{u_1} (v)} + \sum_{j=1}^{k-1} \sum_{v\in P(u_1)\backslash \ell(u_1)} \big(H_{w^{u_1}(v)+j} - H_{w^{u_1}(v)+j-1}\big)\\
            &= \sum_{v \in P(u_1)\setminus\ell(u_1)} \left( H_{w^{u_1} (v)}  + \sum_{j=1}^{k-1}     \big(H_{w^{u_1}(v)+j} - H_{w^{u_1}(v)+j-1}\big)\right)\\
            &= \sum_{v \in P(u_1)\setminus\ell(u_1)} \left( H_{w^{u_1} (v)}  + H_{w^{u_1}(v)+k-1} - H_{w^{u_1}(v)} \right) \\
            &= \sum_{v \in P(u_1)\setminus\ell(u_1)} H_{w^{u_1}(v)+k-1} = \sum_{v \in P(u_1)\setminus\ell(u_1)} H_{w^u(v)} 
        \end{align*}
    \end{enumerate}
    Where the last equality above follows from Observation~\ref{lem:child}.            
\end{proof}

\subsection{ Proof of Lemma~\ref{lem:case1}}
\label{sec:case1}
{\color{black}
Before proving Lemma~\ref{lem:case1}, we will need the two following useful lemmas.

\begin{lemma}\label{lem:inequality2}
    Let $d\in\mathbb{Z}_{>0}$; then the following inequalities hold:
    \begin{enumerate}
        \item \(\frac{2}{d+1} + \frac{2}{d+2} - \frac{2}{d} \le \frac{2}{4} + \frac{2}{5} - \frac{2}{3} = \frac{7}{30}\)
        \item \(\frac{2}{d+1} + \frac{2}{d+2} + \frac{2}{d+3} - \frac{3}{d}\le  \frac{2}{5} + \frac{2}{6} + \frac{2}{7} - \frac{3}{4} = \frac{113}{420}\) 
    \end{enumerate}
\end{lemma}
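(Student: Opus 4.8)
\textbf{Plan of proof for Lemma~\ref{lem:inequality2}.}
The plan is to treat each of the two inequalities as a claim that a certain function of the positive integer $d$ attains its maximum at $d=3$ (resp. $d=4$). In both cases the quantity on the left is a sum of terms of the form $2/(d+j)$ minus a term of the form $m/d$; the natural strategy is to show the left-hand side is monotone \emph{decreasing} in $d$ over $\mathbb{Z}_{>0}$, so that evaluating at the smallest admissible value of $d$ gives the maximum, and that value is exactly the explicitly computed rational number on the right.

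For part (1), I would define $g(d) = \frac{2}{d+1} + \frac{2}{d+2} - \frac{2}{d}$ and show $g(d+1) \le g(d)$ for all $d \ge 1$. Computing the difference $g(d) - g(d+1)$ telescopes nicely: it equals $\frac{2}{d+3} - \frac{2}{d+1} - \frac{2}{d+1} + \frac{2}{d} = \frac{2}{d} - \frac{2}{d+1} - \left(\frac{2}{d+1} - \frac{2}{d+3}\right)$, so one must verify $\frac{2}{d} - \frac{2}{d+1} \ge \frac{2}{d+1} - \frac{2}{d+3}$, i.e. $\frac{1}{d(d+1)} \ge \frac{1}{(d+1)(d+3)}$ after clearing the factor $2$ appropriately; this reduces to $(d+1)(d+3) \ge d(d+1)$, which is immediate since $d+3 > d$. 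Hence $g$ is decreasing, and since the smallest $d$ here is $d=3$ (because the surrounding use in Lemma~\ref{lem:inequality2}'s application involves $d\geq 3$, or one simply checks that for $d\in\{1,2\}$ the inequality holds trivially by direct substitution), we get $g(d)\le g(3) = \frac{2}{4}+\frac{2}{5}-\frac{2}{3}=\frac{7}{30}$. Actually, since the statement is for all $d\in\mathbb{Z}_{>0}$, I should just check directly that $g(1) = 1 + \frac{2}{3} - 2 = -\frac{1}{3} < \frac{7}{30}$ and $g(2) = \frac{2}{3}+\frac{1}{2}-1 = \frac{1}{6}<\frac{7}{30}$, and then invoke monotonicity for $d\ge 3$.

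For part (2), I would proceed identically with $h(d) = \frac{2}{d+1}+\frac{2}{d+2}+\frac{2}{d+3}-\frac{3}{d}$, showing $h(d)-h(d+1) \ge 0$. The telescoping gives $h(d)-h(d+1) = \frac{3}{d+1}-\frac{3}{d} + \frac{2}{d+1}-\frac{2}{d+4} = \frac{3}{d}-\frac{3}{d+1} - \left(\frac{2}{d+1}-\frac{2}{d+4}\right)$ — wait, I should recompute carefully, pairing $-\frac{3}{d}$ against $-\frac{3}{d+1}$ and the shifted positive terms — but the upshot is a comparison of the form $\frac{3}{d(d+1)} \ge \frac{6}{(d+1)(d+4)}$, equivalently $(d+1)(d+4) \ge 2d(d+1)$, i.e. $d+4 \ge 2d$, i.e. $d \le 4$. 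So $h$ is decreasing only for $d \le 4$; for $d \ge 4$ it could increase, so monotonicity alone does not immediately pin the max at $d=4$. The fix is to note $h(d) < \frac{2}{d+1}+\frac{2}{d+2}+\frac{2}{d+3}$, and for large $d$ this is small, while the claimed bound $\frac{113}{420}\approx 0.269$; so one checks $h$ on a finite range $d\in\{1,\dots,7\}$ by direct substitution, verifies all are $\le \frac{113}{420}$ with the maximum at $d=4$, and for $d\ge 8$ bounds $h(d) \le \frac{6}{9} - \frac{3}{8} < \frac{113}{420}$ crudely (since $\frac{2}{d+1}\le\frac{2}{9}$ etc. gives $h(d)<\frac{6}{9}=\frac{2}{3}$, which is too weak — so instead use $h(d)<\frac{2}{d+1}+\frac{2}{d+1}+\frac{2}{d+1}-\frac{3}{d}<0$ for $d$ large since $\frac{6}{d+1}<\frac{3}{d}\iff 2d<d+1$ fails; better: $h(d)\le \frac{6}{d+1}-\frac{3}{d} = \frac{6d-3(d+1)}{d(d+1)}=\frac{3d-3}{d(d+1)} = \frac{3(d-1)}{d(d+1)}$, which for $d\ge 8$ is $\le \frac{3\cdot 7}{72}=\frac{21}{72}=\frac{7}{24}\approx 0.2917$ — still slightly above $\frac{113}{420}\approx0.269$, so push to $d\ge 12$: $\frac{3\cdot 11}{156}=\frac{33}{156}\approx 0.2115<\frac{113}{420}$, and check $d\in\{1,\dots,11\}$ by hand).

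\textbf{Main obstacle.} The genuine difficulty is entirely in part (2): the left-hand side of inequality (2) is \emph{not} monotone decreasing on all of $\mathbb{Z}_{>0}$, so the "evaluate at the endpoint" heuristic needs the supplementary crude upper bound $h(d) \le \frac{3(d-1)}{d(d+1)}$ to rule out a second peak at large $d$, combined with a short finite case check for small $d$. Neither step is deep, but care is needed to choose the cutoff correctly so that the crude bound falls below $\frac{113}{420}$; $d\ge 12$ comfortably works, leaving the finite verification for $d\in\{1,\dots,11\}$. Part (1) is clean: genuine monotonicity plus a two-case base check. Everything else is routine arithmetic with common denominators.
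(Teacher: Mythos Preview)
Your plan has the right shape, but there are arithmetic slips in both parts that lead you astray, and in part~(2) the slip sends you on an unnecessary detour.

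\textbf{Part (1).} When you simplify $\frac{2}{d+1}-\frac{2}{d+3}$ you get $\frac{4}{(d+1)(d+3)}$, not $\frac{2}{(d+1)(d+3)}$; so after dividing by $2$ the comparison is $\frac{1}{d(d+1)}\ge \frac{2}{(d+1)(d+3)}$, i.e.\ $d+3\ge 2d$, i.e.\ $d\le 3$, not the trivially true $d+3>d$ you wrote. In other words $g$ is \emph{not} monotone on all of $\mathbb{Z}_{>0}$: it increases on $\{1,2,3\}$ and decreases from $d=3$ onward (in fact $g(3)=g(4)=\tfrac{7}{30}$). Your fallback of checking $d=1,2$ by hand and invoking monotonicity for $d\ge 3$ is exactly right once the computation is corrected.

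\textbf{Part (2).} Here the sign goes the wrong way. Writing $h(d)-h(d+1)=\frac{6}{(d+1)(d+4)}-\frac{3}{d(d+1)}$, this is $\ge 0$ iff $\frac{2}{d+4}\ge\frac{1}{d}$, i.e.\ $d\ge 4$. So $h$ is \emph{increasing} on $\{1,2,3,4\}$ and \emph{decreasing} on $\{4,5,\dots\}$ (again with a tie $h(4)=h(5)=\tfrac{113}{420}$), the reverse of what you concluded. The function is therefore unimodal with its maximum at $d=4$, and no crude tail bound or finite check up to $d=11$ is needed.

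The paper's proof is exactly this unimodality argument in both parts: compute the forward difference $f(d+1)-f(d)$, observe it is positive for small $d$ and nonpositive thereafter, and read off the peak. Once your arithmetic is fixed your approach coincides with it.
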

\begin{proof}
    \begin{enumerate}
        \item Let \(f(d) \coloneqq \frac{2}{d+1}+\frac{2}{d+2}-\frac{2}{d}\). Then 
        \begin{align*}
            f(d+1)-f(d)=\frac{2}{d+3}-\frac{4}{d+1}+\frac{2}{d}
        \end{align*}
        One can easily compute that for \(d=1,2\), we have \(f(d+1) - f(d) > 0\), and for \(d \geq 3\), we have \(f(d+1) - f(d) \le 0\). Therefore \(f(d) \le f(3) = \frac{7}{30}\).
        
        \item Let \(g(d) \coloneqq \frac{2}{d+1}+\frac{2}{d+2}+\frac{2}{d+3}-\frac{3}{d}\), then we have
        \begin{align*}
            g(d+1)-g(d)=\frac{2}{d+4} + \frac{3}{d} - \frac{5}{d+1}
        \end{align*}
        One can easily compute that for \(d<4\), we have \(g(d+1)-g(d) > 0\), and for \(d\geq 4\) we have \(g(d+1)-g(d) \le 0\). Therefore \(g(d)\le g(4) = \frac{113}{420}\). 
    \end{enumerate}
\end{proof}
\begin{lemma}\label{lem:inequality4}
    \(d, k\in \mathbb{Z}_{> 0}\). Then $\frac{2}{d+k}-\frac{1}{d}<\frac{1}{5k}$.
\end{lemma}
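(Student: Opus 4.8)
The plan is to clear denominators and reduce the claim to an elementary polynomial inequality, handling one sign issue with a short case split. First I would rewrite the left-hand side over a common denominator:
\[
\frac{2}{d+k} - \frac{1}{d} = \frac{2d - (d+k)}{d(d+k)} = \frac{d-k}{d(d+k)}.
\]
Since $d,k > 0$ the denominator $d(d+k)$ is strictly positive, so the sign of the whole expression is governed by $d-k$.

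If $d \le k$, then $d-k \le 0$, so the left-hand side is non-positive while $\frac{1}{5k} > 0$, and the inequality holds trivially. If $d > k$, then $d-k > 0$, and the target inequality $\frac{d-k}{d(d+k)} < \frac{1}{5k}$ is equivalent — after multiplying both sides by the positive quantity $5k\,d(d+k)$ — to
\[
5k(d-k) < d(d+k), \qquad \text{i.e.} \qquad d^2 - 4kd + 5k^2 > 0.
\]
I would finish by completing the square:
\[
d^2 - 4kd + 5k^2 = (d-2k)^2 + k^2 \ge k^2 \ge 1 > 0,
\]
using $k \in \mathbb{Z}_{>0}$ (in fact $k>0$ suffices, and $d$ need not be an integer).

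The only point requiring care — and the nearest thing to an obstacle, though it is minor — is the direction of the inequality when cross-multiplying by $d(d+k)$: this is exactly why the case split on the sign of $d-k$ is made, so that in the substantive case all quantities are strictly positive and the manipulation is fully reversible. Once that is set up, the remaining step is the one-line completion of the square.
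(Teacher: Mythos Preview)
Your proof is correct and follows essentially the same route as the paper: rewrite the left-hand side as $\frac{d-k}{d(d+k)}$, reduce to $d^2-4kd+5k^2>0$, and complete the square as $(d-2k)^2+k^2>0$. The only difference is that the paper omits your case split, which is in fact unnecessary: since $5k\,d(d+k)>0$ for all $d,k>0$, multiplying both sides of $\frac{d-k}{d(d+k)}<\frac{1}{5k}$ by this quantity preserves the inequality direction regardless of the sign of $d-k$.
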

\begin{proof}
    One has:
    \begin{align*}
        &\frac{2}{d+k}-\frac{1}{d}=\frac{d-k}{d(d+k)}
    \end{align*}
    To complete the proof it suffice to show that $5k(d-k)<d(d+k)$. Observe that: 
    \begin{align*}
        &d(d+k)-5k(d-k)=d^2+(2k)^2-4dk+k^2=(d-2k)^2+k^2>0
    \end{align*}
\end{proof}

\begin{lemma}
\label{lem:case1}
    Let \(k_1\le k \in \mathbb{Z}_{>0}\) and \(d \ge 2\). Let \(\delta = \frac{97}{420}\), and \(\phi 
    = 1.86 - \frac{1}{2100}\). Let \(\beta(k)\) be equal to \(0\) for \(k=0,\dots,8\), and \(\frac{1}{3}-\delta\) for \(k\ge 9\).
    Then the following inequality holds:
    \begin{align*}
        -(k-1)\delta + \sum_{j=1}^{k-1}\left(\frac{2}{d+j} - \frac{1}{d}\right) + H_{k+1} \le \phi - \beta(k)
    \end{align*}
\end{lemma}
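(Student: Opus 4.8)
The plan is to fix $k$ and view the left-hand side as a function of $d$, reducing everything to a finite check. First I would observe that the sum $\sum_{j=1}^{k-1}\bigl(\tfrac{2}{d+j}-\tfrac1d\bigr)$ is decreasing in $d$ for each fixed $k$: indeed each summand $\tfrac{2}{d+j}-\tfrac1d = \tfrac{d-j}{d(d+j)}$, and a short computation (or termwise, as in Lemma~\ref{lem:inequality2}) shows that the whole sum is maximized at the smallest admissible value $d=2$. Since the remaining terms $-(k-1)\delta$ and $H_{k+1}$ do not depend on $d$, it suffices to prove the inequality at $d=2$, i.e.
\begin{align*}
    -(k-1)\delta + \sum_{j=1}^{k-1}\Bigl(\frac{2}{2+j}-\frac12\Bigr) + H_{k+1} \le \phi - \beta(k).
\end{align*}
The monotonicity-in-$d$ step should be justified carefully: one writes $F_k(d):=\sum_{j=1}^{k-1}\bigl(\tfrac{2}{d+j}-\tfrac1d\bigr)$ and checks $F_k(d+1)-F_k(d)\le 0$ for all $d\ge 2$; the cases $k\le 4$ can be handled by the explicit computations in Lemma~\ref{lem:inequality2}, and for larger $k$ one telescopes $F_k(d+1)-F_k(d) = \tfrac{2}{d+k} - \tfrac{2}{d+1} + (k-1)\bigl(\tfrac1d - \tfrac1{d+1}\bigr)$ and bounds it.

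Next I would handle the resulting one-variable inequality in $k$. For the range $1\le k\le 8$ (where $\beta(k)=0$), I would simply evaluate both sides numerically: the left-hand side at $d=2$ is $-(k-1)\cdot\tfrac{97}{420} + \sum_{j=1}^{k-1}\bigl(\tfrac{2}{j+2}-\tfrac12\bigr) + H_{k+1}$, a finite list of eight rational numbers, each of which one verifies is at most $\phi = 1.86 - \tfrac1{2100} = \tfrac{3905}{2100}$. For $k\ge 9$ (where $\beta(k)=\tfrac13-\delta$, so $\phi-\beta(k) = \phi - \tfrac13 + \tfrac{97}{420}$), the difficulty is that there are infinitely many $k$, so I need an asymptotic argument. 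Here I would use the bound from Lemma~\ref{lem:increase}-style reasoning or, more directly, estimate $\sum_{j=1}^{k-1}\tfrac{2}{2+j} \le 2(H_{k+1}-1) - 1$ and combine with $H_{k+1}$, so the left-hand side is at most roughly $3H_{k+1} - \tfrac{k-1}{2} - (k-1)\delta - \text{const}$; since $\delta + \tfrac12 > 0$ and $H_{k+1}$ grows only logarithmically, the dominant $-(k-1)(\delta+\tfrac12)$ term drives the expression down, so after checking $k=9$ (and perhaps $k=10,11$ by hand) the inequality holds for all larger $k$ by showing the left-hand side is decreasing in $k$ in that regime. Concretely, one shows $G(k+1)-G(k) = \tfrac{2}{k+2} - \tfrac12 - \delta + \tfrac{1}{k+2} = \tfrac{3}{k+2} - \tfrac12 - \delta < 0$ for $k\ge 9$, where $G(k)$ denotes the left-hand side at $d=2$; this is immediate since $\tfrac{3}{11} < \tfrac12$.

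The main obstacle I anticipate is organizing the two separate reductions cleanly — the reduction in $d$ to $d=2$, and then the reduction in $k$ to a finite check plus a monotone tail — and making sure the constant $\phi$ and the definition of $\beta$ are threaded through correctly, since the right-hand side jumps at $k=9$. The actual arithmetic is routine: every individual inequality is between explicit rationals. I would present the $d$-monotonicity as a preliminary claim, then dispatch $1\le k\le 8$ by a table of values, then prove the $k\ge 9$ case by exhibiting the sign of the discrete difference $G(k+1)-G(k)$ together with the single base-case evaluation at $k=9$.
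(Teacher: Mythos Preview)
Your reduction to $d=2$ is incorrect: the function $F_k(d)=\sum_{j=1}^{k-1}\bigl(\tfrac{2}{d+j}-\tfrac1d\bigr)$ is \emph{not} maximized at $d=2$. For instance, at $k=3$ one has $F_3(2)=\tfrac{2}{3}+\tfrac{2}{4}-1=\tfrac16$, while $F_3(3)=\tfrac{2}{4}+\tfrac{2}{5}-\tfrac{2}{3}=\tfrac{7}{30}>\tfrac16$; for $k=4$ the maximum is attained at $d=4$, where $F_4(4)=\tfrac{113}{420}$, much larger than $F_4(2)=\tfrac{1}{15}$. Your own discrete-derivative formula $F_k(d+1)-F_k(d)=\tfrac{2}{d+k}-\tfrac{2}{d+1}+(k-1)\bigl(\tfrac1d-\tfrac1{d+1}\bigr)$ already reveals this: for small $d$ the positive last term dominates. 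This matters decisively, because the inequality is \emph{tight} at $(k,d)=(4,4)$: the left-hand side there equals exactly $\phi=\tfrac{781}{420}$. Verifying only $d=2$ gives you a comfortable margin that is illusory.

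The paper's argument does not fix $d$. Instead it analyzes the discrete difference in $k$ while keeping $d$ arbitrary: it uses the uniform bound $\tfrac{2}{d+k}-\tfrac1d<\tfrac{1}{5k}$ (Lemma~\ref{lem:inequality4}) to show $f(k+1)-f(k)<0$ for all $k\ge 4$ and all $d\ge 1$, and separately checks that $f(k+1)-f(k)>0$ for $k\in\{1,2\}$. This reduces the problem to verifying $f(3)\le 0$ and $f(4)\le 0$ for all $d\ge 2$; these are handled by Lemma~\ref{lem:inequality2}, which locates the maximizing $d$ (namely $d=3$ and $d=4$, respectively) and evaluates there. To repair your argument you would need to replace the $d=2$ reduction with a genuine maximization over $d$ for each relevant $k$, which is essentially what the paper does.
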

\begin{proof}
    We can define the terms of the desired inequality to be equal to
    \[
        f(k) = \sum_{i=1}^{k-1} \left( \frac{2}{d+i} - \frac{1}{d} \right) + H_{k+1} - (k-1)\delta -\phi +\beta(k).
    \] Thus, if we show that \(f(k) \le 0\) we have proven the claim. Observe that 
    \[
        f(k+1) - f(k) = \frac{2}{d+k} - \frac{1}{d} + \frac{1}{k+2} - \delta + \beta(k+1)-\beta(k).
    \]
    Observe that using Lemma~\ref{lem:inequality4} \(f(k+1) - f(k)< \frac{1}{5k} + \frac{1}{k+2} - \delta+\beta(k+1)-\beta(k)\). Therefore for $k\ge4$, $f(k+1)-f(k)<0$.

    Furthermore observe that if $k\in \{1,2\}$, then \(f(k+1) - f(k)=\frac{2}{d+k} - \frac{1}{d} + \frac{1}{k+2} - \delta + \ge \frac{1}{k+2}-\delta>0 \)
    
    Therefore, it suffices to prove $f(k)\le 0$ only for $k\in \{3,4\}$.
    
    \begin{itemize}
        \item if \(k=3\), then by Lemma~\ref{lem:inequality2} we have \(f(3)\) is equal to
        \begin{align*}
            \frac{2}{d+1} + \frac{2}{d+2} - \frac{2}{d} + H_4 - 2\delta \le \frac{7}{30} + H_4 - 2\delta <1.855  < \phi
        \end{align*}
        
        \item if \(k=4\), then by Lemma~\ref{lem:inequality2} we have \(f(4)\) is equal to 
        \begin{align*}
            \frac{2}{d+1} + \frac{2}{d+2} + \frac{2}{d+3} - \frac{3}{d} + H_5 - 3\delta \leq \frac{113}{420} + H_5 - 3\delta = \phi
        \end{align*} 
    \end{itemize}
\end{proof}
}

\subsection{ Remaining Cases for Proof of Lemma~\ref{lem:invariant}}
\label{sec:othercases}
This section includes the remaining cases for the proof of Lemma~\ref{lem:invariant}. {\color{black}First, we will need the following lemma, which will be helpful in proving important inequalities for the remaining cases.
\begin{lemma}
\label{lem:case2}
    Let \(k_1\le k \in \mathbb{Z}_{>0}\). Let \(\delta = \frac{97}{420}\), and \(\phi 
    = 1.86 - \frac{1}{2100}\). Let \(\beta(k)\) be equal to \(0\) for \(k=0,\dots,8\), and \(\frac{1}{3}-\delta\) for \(k\ge 9\).
    Then the following inequalities hold:
    \begin{enumerate}[label=(\alph*)]
        \item \((k-1)H_2 + 2H_{k+1} \le (k+1)\phi - \beta(k) - \delta\); 
        \item  \( - (k-k_1-1)\delta + k_1H_2+H_{k+1}+\sum_{j=1}^{k-1} \left(\frac{2}{8+j} - \frac{1}{8}\right) + \frac{k_1}{8}  < (k_1+1)\phi-\beta(k)\);
        \item \(- (k-1)\delta +  H_{k+1} + k_1\phi + \sum_{j=1}^{k-1} \frac{1}{16+j}  < (k_1+1)\phi -\beta(k)\).
    \end{enumerate}
\end{lemma}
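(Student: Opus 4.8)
The three inequalities in Lemma~\ref{lem:case2} are purely numerical statements about harmonic numbers and the fixed constants $\phi=1.86-\tfrac{1}{2100}$, $\delta=\tfrac{97}{420}$, and the plan is to treat all three the same way: rewrite each as ``$S(k)\ge 0$'' (or $>0$) for an explicit single-variable slack function $S$, compute the forward difference $S(k+1)-S(k)$ in closed form — the harmonic sums telescope into reciprocals — determine its sign pattern (negative for a handful of small $k$, eventually positive, with one irregularity at $k=8$ where $\beta$ jumps by $\tfrac13-\delta$), and conclude that $S$ attains its minimum over $\mathbb{Z}_{>0}$ at one of finitely many explicit values of $k$, which I would then verify by exact rational arithmetic. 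Inequality (b) carries a second parameter $k_1$, so the first step there is to eliminate it: the slack in (b) increases in $k_1$ by the constant $\phi-\delta-H_2-\tfrac18=\tfrac{1}{280}>0$, hence the worst case is $k_1=1$.

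For (a) I would set $f(k):=(k+1)\phi-\beta(k)-\delta-(k-1)H_2-2H_{k+1}$ and show $f(k)\ge 0$. For $k\ne 8$, $f(k+1)-f(k)=\phi-H_2-\tfrac{2}{k+2}$, which is $<0$ exactly for $k\le 3$ and $>0$ for $k\ge 4$, and $f(9)-f(8)=\phi-(\tfrac13-\delta)-H_2-\tfrac15>0$ as well; so $f$ decreases on $\{1,\dots,4\}$ and increases afterwards, and its minimum is $f(4)$. The key point is that $f(4)=5\phi-\delta-3H_2-2H_5=0$ exactly (with $5\cdot 1.86=9.3$, $3H_2=\tfrac92$, $2H_5=\tfrac{137}{30}$, everything summing to $0$ over denominator $420$), which gives (a), tight at $k=4$.

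For (c) the $k_1\phi$ terms cancel, so it reduces to the $k_1$-free inequality $h(k):=\phi-\beta(k)+(k-1)\delta-H_{k+1}-(H_{k+15}-H_{16})>0$. Here $h(k+1)-h(k)=\delta-\tfrac{1}{k+2}-\tfrac{1}{k+16}$ for $k\ne 8$, which is $<0$ for $k\le 3$ and $>0$ for $4\le k\le 7$, whereas the jump forces $h(9)-h(8)=\tfrac{1}{35}-\tfrac1{24}<0$, after which the difference is positive for every $k\ge 9$. Thus there are two candidate minima, $k=4$ and $k=9$, and I would check $h(4)>0$ and $h(9)>0$ directly. For (b), after fixing $k_1=1$, let $g(k)$ be the resulting slack; then $g(k+1)-g(k)=\delta+\tfrac18-\tfrac{1}{k+2}-\tfrac{2}{k+8}$ for $k\ne 8$, negative for $k\le 3$, positive for $k\ge 4$, and $g(9)-g(8)=\tfrac{1}{35}>0$, so the minimum is $g(4)=2\phi+2\delta-\tfrac32-H_5-2H_{11}+2H_8+\tfrac14>0$, again checked with fractions.

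The main obstacle is not conceptual but arithmetic precision: the constants $\phi,\delta$ and the thresholds $8,16$ appear to have been chosen so that these inequalities are essentially tight — (a) is an equality at $k=4$, and (b), (c) hold at their minima by margins of only a few hundredths — so the final checks must be done with exact rationals, not decimal approximations, and every monotonicity step has to separately account for the discontinuity of $\beta$ at $k=8$ (which, in (c), is precisely what produces the second candidate minimum at $k=9$).
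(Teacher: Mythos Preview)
Your proposal is correct and follows essentially the same approach as the paper: define a slack function, compute the forward difference (which telescopes harmonic sums to a few reciprocals), deduce the minimum occurs at a small finite set of $k$-values, and verify those by exact arithmetic. The only cosmetic differences are that the paper works with the negated function (showing it is $\le 0$), in (b) simply drops the $k_1$-term as negative rather than pinning $k_1=1$, and is less explicit than you are about the $\beta$-jump at $k=8$; your treatment of that jump and your observation that (a) is exactly tight at $k=4$ are in fact slightly cleaner than the paper's write-up.
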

\begin{proof}
    \begin{enumerate}[label=(\alph*)]
        \item We reorganize the terms of the inequality so all are on the left side, and define \(f(k) \coloneqq (k-1)H_2 + 2H_{k+1} - (k + 1)\phi + \beta(k) + \delta \). We will show that \(f(k) \le 0\). First, note that \(f(k+1) - f(k) = (H_2 - \phi) + \frac{2}{k+2}+\beta(k+1)-\beta(k)\) from which it is clear that \(f(k+1) - f(k) >0\) if and only if \(k<4\). Therefore \(f(k) \le f(4)=0\), and the claim is proven. 

        \item Observe that 
        \begin{align*}
            &k_1H_2+H_{k+1}+\sum_{i=1}^{k-1} \left( \frac{2}{8+i} - \frac{1}{8} \right) + \frac{k_1}{8}- (k-k_1-1)\delta - (k_1+1)\phi + \beta(k)\\
            =&k_1(H_2 + \frac{1}{8} + \delta - \phi) + H_{k+1} + \sum_{i=1}^{k-1} \left(\frac{2}{8+i} - \frac{1}{8}\right) - (k-1)\delta - \phi + \beta(k)
        \end{align*}
        Since \(H_2 + \frac{1}{8} + \delta < \phi\) we have:
        \begin{align*}
            <& H_{k+1} + \sum_{i=1}^{k-1} \left(\frac{2}{8+i} - \frac{1}{8} \right) - (k-1)\delta - \phi+\beta(k)
        \end{align*}
        We show that \( \sum_{i=1}^{k-1} \left(\frac{2}{8+i} - \frac{1}{8} \right) + H_{k+1}  - (k-1)\delta +\beta(k) -\phi<0\). 
        
        Now let \(f(k) \coloneqq  \sum_{i=1}^{k-1} \left(\frac{2}{8+i} - \frac{1}{8} \right) + H_{k+1} - (k-1)\delta - \phi +\beta(k)\), and consider \( f(k+1) - f(k) = \frac{1}{k+2} + \frac{2}{8+k} -\frac{1}{8}- \delta + \beta(k+1) - \beta(k)\). Observe that \( f(k+1) - f(k)\) is positive if and only if  \(k < 4\). Therefore 
        \[
            f(k) \le  f(4) = -\frac{1109}{27720} < 0
        \] 
        \item 
        We can see
        \begin{align*}
            & - (k-1)\delta +  H_{k+1} + k_1\phi + \sum_{j=1}^{k-1} \frac{1}{16+j} - (k_1+1)\phi + \beta(k)\\
            & = - (k-1)\delta +H_{k+1} + \sum_{j=1}^{k-1}\frac{1}{16+j} - \phi+\beta(k)
        \end{align*}
        
        We show that \(f(k) = - (k-1)\delta +H_{k+1} + \sum_{j=1}^{k-1} \frac{1}{16+j} - \phi + \beta(k) < 0\). Note that \(f(k+1) - f(k) = \frac{1}{k+2} + \frac{1}{16+k} - \delta +\beta(k+1)-\beta(k)\) is negative if and only if $k\ge4$ and $k\neq8$. Therefore $f(k)$ is upper-bounded by
        \begin{align*}
             \max\{&f(4),f(9)\} \\
             &= \max\{H_5 + \sum_{j=1}^{3} \frac{1}{16+j}      - 3\delta - \phi, 
             H_{10} + \sum_{j=1}^{8} \frac{1}{16+j} - 8\delta - \phi+\beta(9)\}\\
            &\approx -0.102036 <0.
        \end{align*} 
    \end{enumerate}
\end{proof}
}

\paragraph{ Case(ii): \(u\) has a final child.} We note the following.
\begin{align*}
    h_{W^u}(Q_u) 
    =\sum^k_{j=1}h_{W^{u}}(Q_{u_j}) + H_{w^u(u)} 
\end{align*}

Let \(\ell \coloneqq \ell(u_m)\). 
By Lemma~\ref{lem:increase}.(a) we can see \(H_{w^u(u)} = H_k\),  and \(H_{w^{u_j}(u_j)} = H_2\) for \(j=1, \dots, k_1\).  By Lemma~\ref{lem:increase}.(b), we can see \(h_{W^u}(Q_{u_j}) = h_{W^{u_j}}(Q_{u_j})\) for \(j \neq m\), and by Lemma~\ref{lem:increase}.(c) and (d) we can see \(h_{W^{u}}(Q_{u_m}) = h_{W^{u_m}}(Q_{u_m})  + \sum_{j=1}^{k-1}C_j^{u_m} + H_{w^u(\ell)} - H_{w^{u_m}(\ell)}\). Therefore:
\begin{align*}
    h_{W^u}(Q_u) 
    = \sum_{j>k_1} h_{W^{u_j}} (Q_{u_j}) + k_1H_2 + \sum_{j=1}^{k-1}C_j^{u_m} + H_k + H_{w^u(\ell)} - H_{w^{u_m}(\ell)}
\end{align*}
By Algorithm~\ref{alg:computing_w} we mark a final child \(u_m\) of \(u\) depending on the value of \(\min_{j\in\{k_1+1,\dots, k\}} C_1^{u_j}\). We consider these cases.

\paragraph{ Case (ii).(a)} If \(k_1=k\) or if \(\min_{j\in\{k_1+1,\dots, k\}} C_1^{u_j} \geq \phi - \delta - H_2\), final node \(u_1 = u_m\) is the marked child of \(u\) according to Algorithm~\ref{alg:computing_w}. 

Since \(u_1\) is final, \(C_j^{u_1} = 0\)
and, \(h_{W^{u_1}}(Q_{u_1}) = H_2\). Finally, applying Observation~\ref{lem:child} to \(Q_{u_1}\), we see \(h_{W^u}(Q_{u_1}) = H_{k+1}\). Therefore:
\begin{align*}
     h_{W^u}(Q_u)=& \sum_{j>k_1} h_{W^{u_j}} (Q_{u_j}) + (k_1-1) H_2 + H_k + H_{k+1}
\end{align*}
We apply our inductive hypothesis on \(Q_{u_{k_1+1}}, \dots, Q_{u_k}\), and use \(\beta(j) \ge 0\) for all \(j\ge 0\):
\begin{align*}
    h_{W^u}(Q_u)&\leq \sum_{j > k_1}\left( \phi|Q_{u_j}| - C^{u_j}_1 - \delta \right) +  (k_1 - 1)H_2  + H_k + H_{k+1} \\
    =& \phi\left( |Q_u| - k_1 - 1 \right)   - \sum_{j>k_1} C^{u_j}_1 - (k - k_1)\delta  + (k_1 -1)H_2 + H_k + H_{k+1} 
\end{align*}
Applying the assumption that \(\min_{j\in\{k_1+1,\dots, k\}} C_1^{u_j} \geq \phi - \delta - H_2\):
\begin{align*}
    &\le  \phi\sum_{j>k_1} |Q_{u_j}| - (k-k_1)(\phi - \delta - H_2)  - (k - k_1)\delta + (k_1 -1)H_2 + H_k + H_{k+1} \\
    & = \phi(|Q_u| -k-1)  + (k-1)H_2 + 2H_{k+1} - \frac{1}{k+1}
\end{align*}

Using Lemma~\ref{lem:case2}.(a), and the fact that \(C_1^u = \frac{1}{k+1}\), we have
\begin{align*}
    \leq &  \phi\left( |Q_u| - k - 1 \right) - \delta + (k+1)\phi -\beta(k) - \frac{1}{k+1} = \phi|Q_u| - C_1^u - \delta -\beta(k).
\end{align*}

\paragraph{ Case (ii).(b)}  In this case we assume \(\min_{j\in\{k_1+1, \dots, k\}} C_1^{u_j} <  \phi-H_2-\delta\) and, by  Algorithm~\ref{alg:computing_w}, we mark some child \(u_m\) for \(k_1+1 \le m \le k\). Without loss of generality we will assume that \(m=k\). We let \(d_x\) denote the degree of a non-final node \(x\) in \(T_i\). Let \(d_{u_k} \coloneqq d\).

We now consider by cases if the marked child of \(u_k\), denoted \(v\), is a final node.
\paragraph{ Case (ii).(b).i: \(v\) is a final node.}
Since \(v\) is final, we have \(\ell = v\).
By Lemma~\ref{lem:increase}.(a), \(H_{w^{u_k}(u_k)} = H_{d-1}\). By Observation~\ref{lem:child} we have \(H_{w^u(u_k)} = H_{k+d-2}\).
Since \(v\) is a final node we know \(C_j^{u_k} = \frac{1}{w^{u_k}(u_k) +j} = \frac{1}{d+j-1}\). Therefore,
\begin{align*}
    h_{W^u}(Q_u) =  \sum_{j>k_1} h_{W^{u_j}}(Q_{u_j}) + k_1H_2  + \sum_{j=1}^{k-1} \frac{1}{d+j-1} + H_k + H_{w^u(v)} - H_{w^{u_k}(v)}
\end{align*}
Since \(v\) is final, \(H_{w^v(v)} = H_2\). By Observation~\ref{lem:child} we have, {\color{black}\(H_{w^{u_k}(v)} = H_{d}\)}, and \(H_{w^u(v)} = H_{d+ k - 1}\).
 Therefore, 
\begin{align*}
    &h_{W^u}(Q_u) = \sum_{j>k_1} h_{W^{u_j}}(Q_{u_j}) + k_1H_2 + H_k + \sum_{j=1}^{k-1}  \frac{1}{d+j-1} + \sum_{j=1}^{k-1}\frac{1}{d+j} \\
    & = \sum_{j>k_1} h_{W^{u_j}}(Q_{u_j}) + k_1H_2 + H_k + \sum_{j=1}^{k-1} \frac{2}{d+j} - \frac{1}{d+k-1} +\frac{1}{d}
\end{align*}
Observe that since \(C^{u_k}_1 = \frac{1}{d} < \phi - H_2 - \delta < 0.1286 < \frac{1}{7}\), we have \(d \geq 8\). 
\begin{align*}
    h_{W^u}(Q_u) \leq \sum_{j>k_1} h_{W^{u_j}}(Q_{u_j}) + k_1H_2 + H_k  + \sum_{j=1}^{k-1} \frac{2}{8+j} - \frac{1}{d+k-1} +\frac{1}{8}
\end{align*}
We apply our inductive hypothesis on \(Q_{u_{k_1}+1}, \dots, Q_{u_k}\), and use \(\beta(j) \ge 0\) for all \(j \ge 0\).
\begin{align*}
    \leq \sum_{j>k_1}(\phi |Q_{u_j}| - \delta - C^{u_j}_1 ) + k_1H_2 + H_k + \sum_{j=1}^{k-1} \frac{2}{8+j} - \frac{1}{d+k-1} +\frac{1}{8}
\end{align*}
We assumed that \(C_1^{u_k} = \min_{j\in \{k_1+1, \dots, k\} } C_1^{u_j}\). Since the marked child of \(u_k\) is a final node we know \(C_1^{u_k} = \frac{1}{w^{u_k}(u_k) +1} = \frac{1}{d}\). Therefore:
\begin{align*}
    \le& \sum_{j>k_1}(\phi|Q_{u_j}|-\delta -\frac{1}{8}) + k_1H_2 + H_k  + \sum_{j=1}^{k-1} \frac{2}{8+j} - \frac{1}{d+k-1} +\frac{1}{8}\\
    =& \sum_{j>k_1}\phi|Q_{u_j}| - (k-k_1)\delta  + k_1H_2 + H_k + \sum_{j=1}^{k-1}\left(\frac{2}{8+j} - \frac{1}{8} \right) - \frac{1}{d+k-1} +\frac{k_1}{8}
\end{align*}
Using Lemma~\ref{lem:case2}.(b), we see
\begin{align*}
    &<\sum_{j>k_1}\phi|Q_{u_j}|-\frac{1}{k+1} - \frac{1}{d+k-1}+(k_1+1)\phi -\delta -\beta(k) \\
    &=\phi|Q_u|  -\frac{1}{k+1} - \frac{1}{d+k-1} -\delta - \beta(k)\\ 
    &= \phi|Q_u| - \frac{1}{w^u(u)+1} - \frac{1}{w^u(u_k)+1} -\delta- \beta(k) = \phi|Q_u| - C_1^{u} - \delta  - \beta(k)
\end{align*}
Where the second equality follows from  Lemma~\ref{lem:increase}.(a) and Observation~\ref{lem:child}. And the claim is proven.

\paragraph{ Case: (ii).(b).ii: v is not a final node.}
{\color{black}
In order to complete the proof in this case  we make use of the following lemma.
\begin{lemma}\label{lem:inequality3},
    Let \(2\le x, y \in \mathbb{Z}_{> 0}\).  Let \(\delta = \frac{97}{420}\), and \(\phi = 1.86 - \frac{1}{2100}\). 
    
    If \(\frac{1}{x} + \frac{1}{x+y-2} < \phi - H_2 - \delta\). Then \(x + y \ge 18\). 
\end{lemma}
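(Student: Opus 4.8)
The approach is to turn the hypothesis into a clean rational inequality by evaluating the constant $\phi - H_2 - \delta$ exactly, and then to use only the elementary fact that $x+y-2 \ge x$, which holds because $y \ge 2$.

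The first step is to compute the constant. Putting everything over the denominator $2100$, we have $\phi = 1.86 - \tfrac{1}{2100} = \tfrac{3905}{2100}$, $H_2 = \tfrac32 = \tfrac{3150}{2100}$, and $\delta = \tfrac{97}{420} = \tfrac{485}{2100}$, so
\[
    \phi - H_2 - \delta \;=\; \frac{3905 - 3150 - 485}{2100} \;=\; \frac{270}{2100} \;=\; \frac{9}{70}.
\]
Hence the hypothesis becomes $\tfrac1x + \tfrac1{x+y-2} < \tfrac{9}{70}$.

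The second step is to conclude. Since $y \ge 2$ we have $x+y-2 \ge x$, so $\tfrac{1}{x+y-2} \le \tfrac1x$, and therefore
\[
    \frac{2}{x+y-2} \;\le\; \frac1x + \frac1{x+y-2} \;<\; \frac{9}{70}.
\]
This gives $x+y-2 > \tfrac{140}{9} > 15$. As $x,y \in \mathbb{Z}_{>0}$, the quantity $x+y-2$ is a positive integer, hence it is at least $16$, i.e.\ $x+y \ge 18$, as claimed.

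There is essentially no hard step here: the only things requiring care are the exact arithmetic giving $\phi - H_2 - \delta = \tfrac{9}{70}$, and the final integrality rounding ($\tfrac{140}{9} = 15.\overline5$ forces $x+y-2 \ge 16$). It is worth noting that the hypothesis $x \ge 2$ is not actually needed; only $y \ge 2$ is used (to guarantee $x+y-2 \ge x$ and $x+y-2 > 0$), and $x \ge 1$ is required merely so that $\tfrac1x$ is defined.
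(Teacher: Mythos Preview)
Your proof is correct and follows essentially the same approach as the paper: both arguments rest on the elementary inequality $\tfrac{1}{x} + \tfrac{1}{x+y-2} \ge \tfrac{2}{x+y-2}$ (valid since $y\ge 2$), combined with the numerical fact that $\phi - H_2 - \delta = \tfrac{9}{70} < \tfrac{2}{15}$. The only cosmetic differences are that you argue directly rather than by contradiction and compute the constant exactly as $\tfrac{9}{70}$, whereas the paper uses the decimal bound $0.1286$.
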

\begin{proof}
    Assume that \(x+y < 18\). Since \(x,y \ge 2\), then 
    \[
        \frac{1}{x}+\frac{1}{x+y-2}\ge  \frac{2}{x+y-2}\ge \frac{2}{15} = 0.1\bar{3}>0.1286>  \phi - H_2 - \delta 
    \]
    which is a contradiction.  
\end{proof}
}

Since \(v\) is not final, we know \(C_j^{u_k} \ge \frac{1}{w^{u_k}(u_k) + j} + \frac{1}{w^{u_k}(v) + j}\).  Therefore, \(\phi - H_2 - \delta > C_1^{u_k} \ge \frac{1}{d_{u_k}} + \frac{1}{d_{u_k} + d_{v} - 2}\). 
Applying Lemma~\ref{lem:inequality3} we see  \( d_{u_k} + d_{v} \ge 18\), and by Observation~\ref{lem:child} we see \(w^{u_k}(\ell) \ge d_{u_k} + d_{v} - 2 \ge 16\).
Therefore, \(H_{w^u(\ell)} - H_{w^{u_k}(\ell)} = \sum_{j=1}^{k-1} \frac{1}{w^{u_k}(\ell)+j} \le \sum_{j=1}^{k-1} \frac{1}{16+j}\).
\begin{align*}
    h_{W^u}(Q_u) \le  \sum_{j>k_1} h_{W^{u_j}}(Q_{u_j}) + k_1H_2  + \sum_{j=1}^{k-1} C^{u_k}_j + H_k + \sum_{j=1}^{k-1}\frac{1}{16+j}
\end{align*}

We apply the inductive hypothesis to \(Q_{u_{k_1+1}}, \dots, Q_{u_k}\), and that \(\beta(j) \ge 0\) for all \(j \ge 0\):
\begin{align*}
    \le \sum_{j>k_1}\left(\phi|Q_{u_j}| - C^{u_j}_1 - \delta\right) + k_1H_2 + \sum_{j=1}^{k-1} C^{u_k}_j + H_k 
    + \sum_{j=1}^{k-1}\frac{1}{16+j}
\end{align*}
We apply the assumption \(C_{1}^{u_k} = \min_{j\in\{k_1+1, \dots,k\}} C_1^{u_j}\):
\begin{align*}
    \le &\sum_{j>k_1}\left(\phi|Q_{u_j}| - \delta\right) - (k-k_1)C^{u_k}_1+ k_1H_2 + \sum_{j=1}^{k-1} C^{u_k}_j + H_k 
    + \sum_{j=1}^{k-1}\frac{1}{16+j}\\ 
    =&\sum_{j>k_1}(\phi|Q_{u_j}| - \delta) + C^{u_k}_1(k_1 - 1)   + k_1H_2  + \sum_{j=1}^{k-1} \left( C^{u_k}_j - C^{u_k}_1 \right) + H_k + \sum_{j=1}^{k-1}\frac{1}{16+j} 
\end{align*}
Using \(C_1^{u_k} < \phi -H_2 - \delta\):
\begin{align*}
    <& \sum_{j>k_1}(\phi|Q_{u_j}| - \delta)  + k_1(\phi - H_2 - \delta)  - C^{u_k}_1 + k_1H_2 \\
    &+ \sum_{j=1}^{k-1} \left( C^{u_k}_j -  C^{u_k}_1 \right) + H_k + \sum_{j=1}^{k-1}\frac{1}{16+j} \\
    =& \sum_{j>k_1}\phi|Q_{u_j}| +k_1\phi - k\delta - C_1^{u_k} + \sum_{j=1}^{k-1} \left( C^{u_k}_j -  C^{u_k}_1 \right) + H_k + \sum_{j=1}^{k-1} \frac{1}{16+j}
\end{align*}
Therefore, we can apply Lemma~\ref{lem:case2}.(c) to see the following
\begin{align*}
    & \le \phi(k_1 + 1 +  \sum_{j>k_1}|Q_{u_j}|) -\delta  - C^{u_k}_1 + \sum_{j=1}^{k-1}\left(C_{j}^{u_k} - C_{1}^{u_k}\right)  - \frac{1}{k+1} -\beta(k)
    \\&= \phi\vert Q_u \vert-\delta- C^{u_k}_1 + \sum_{j=1}^{k-1}\left(C_{j}^{u_k} - C_{1}^{u_k}\right)  - \frac{1}{k+1} -\beta(k)
\end{align*}
Where the equality above follows since $\sum_{j>k_1}|Q_{u_j}|=|Q_u|-k_1-1$.
We can apply \(C_j^{u_k} \le C_1^{u_k}\) to see the claim 
\begin{align*}
    &\leq \phi\vert Q_u \vert -\delta - C^{u_k}_k  + \sum_{j=1}^{k}\left(C_{j}^{u_k} - C_{1}^{u_k}\right) - \frac{1}{k+1} -\beta(k)\\
    &\leq \phi\vert Q_u \vert -\delta  - C^{u_k}_k - \frac{1}{k+1} -\beta(k) =  \phi\vert Q_u \vert -\delta - C_1^{u} -\beta(k).
\end{align*}

\section{   Proof of Lemma~\ref{lem:CentersOflengthOne}}
\label{apx:nodelower}
{\color{black}In this section we discuss the proof of Lemma~\ref{lem:CentersOflengthOne} to complete the arguments of Section~\ref{sec:nodelower}.
We will need the following useful lemma.
\begin{lemma}\label{claim:useful}
    Let $x\ge 7 $ be a positive integer. Then  $H_x+H_{2x+3}+H_{2x+2}-H_{2x}-H_{2x-1}>H_{10}$.
\end{lemma}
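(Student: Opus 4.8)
The statement to prove is the elementary inequality $H_x + H_{2x+3} + H_{2x+2} - H_{2x} - H_{2x-1} > H_{10}$ for every integer $x \ge 7$. The plan is to treat this as a single-variable monotonicity-plus-base-case argument. First I would rewrite the left-hand side in a cleaner form by collecting the harmonic-number differences. Since $H_{2x+3} - H_{2x-1} = \frac{1}{2x} + \frac{1}{2x+1} + \frac{1}{2x+2} + \frac{1}{2x+3}$ and $H_{2x+2} - H_{2x} = \frac{1}{2x+1} + \frac{1}{2x+2}$, the left-hand side equals
\[
    g(x) \coloneqq H_x + \frac{1}{2x} + \frac{2}{2x+1} + \frac{2}{2x+2} + \frac{1}{2x+3}.
\]
So it suffices to show $g(x) > H_{10}$ for all integers $x \ge 7$.

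Second, I would establish the base case $x = 7$ by direct computation: $g(7) = H_7 + \frac{1}{14} + \frac{2}{15} + \frac{2}{16} + \frac{1}{17}$, and one checks this exceeds $H_{10}$ (the gap is comfortably positive — roughly $H_7 \approx 2.5929$ plus about $0.4180$ against $H_{10} \approx 2.9290$, giving a slack of about $0.08$). Third, I would show $g$ is (eventually, hence for all $x \ge 7$) increasing in $x$, which finishes the argument. For this, note $g(x+1) - g(x) = \big(H_{x+1} - H_x\big) + \big[\text{change in the tail terms}\big] = \frac{1}{x+1} - \Theta(1/x)$ where the tail terms each shrink like $O(1/x^2)$ in their first differences; more precisely the dominant negative contribution to the tail change is on the order of $\frac{6}{2x} - \frac{6}{2x+2} = O(1/x^2)$, which is dominated by $\frac{1}{x+1}$ for all $x \ge 1$. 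I would make this rigorous either by a crude bound — $g(x+1) - g(x) \ge \frac{1}{x+1} - \frac{1}{2x} - \frac{2}{2x+1} - \frac{2}{2x+2} - \frac{1}{2x+3} + (\text{positive tail terms at } x+1) $, and then check this is positive for $x \ge 7$ — or, cleanly, by observing that $g(x) = H_x + r(x)$ where $r(x) = \frac{1}{2x} + \frac{2}{2x+1} + \frac{2}{2x+2} + \frac{1}{2x+3} > 0$ and $r(x) \to 0$, combined with the fact that $H_x \to \infty$; but since we need the bound for every $x\ge 7$ and not just asymptotically, the monotonicity route is preferable.

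The one subtlety — and the main (minor) obstacle — is that $r(x)$ is itself \emph{decreasing}, so $g(x+1) - g(x) = \frac{1}{x+1} + \big(r(x+1) - r(x)\big)$ involves a positive term minus a positive term, and one must check the positive term $\frac{1}{x+1}$ wins. Since $|r(x+1) - r(x)| \le r(x) \le r(7) < \frac{1}{14} + \frac{2}{15} + \frac{2}{16} + \frac{1}{17} < 0.42$, this naive bound is far too weak to beat $\frac{1}{x+1} \le \frac{1}{8}$. So I would instead telescope: for integers $x \ge 7$, $g(x) - g(7) = \big(H_x - H_7\big) + \big(r(x) - r(7)\big) = \sum_{k=8}^{x}\frac1k + \big(r(x) - r(7)\big)$, and $r(x) - r(7) > -r(7) > -0.42$ while $H_x - H_7 \ge 0$ with equality only at $x=7$; combined with the base-case slack $g(7) - H_{10} > 0$ this is still not immediate, so the cleanest finish is to note that $r(x) > r(\infty) = 0$ for all $x$, hence $g(x) \ge H_x > H_7 \ge \dots$ is not enough for $x$ slightly above $7$; therefore I would simply verify $g(x) > H_{10}$ directly for the finitely many $x \in \{7,8,9,10\}$ by the explicit estimate $g(x) \ge H_x + \frac{1}{2x+3}\cdot 6 \cdot \frac{1}{2}$ (crude), and then for $x \ge 11$ use $g(x) > H_x \ge H_{11} > H_{10}$, which holds since $H_{11} = H_{10} + \frac{1}{11} > H_{10}$ and $r(x) > 0$. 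This last clean split — finitely many explicit checks for $7 \le x \le 10$, and the trivial bound $g(x) > H_x \ge H_{11} > H_{10}$ for $x \ge 11$ — is the version I would write up.
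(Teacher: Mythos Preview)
Your final approach (verify $g(x)>H_{10}$ numerically for $x\in\{7,8,9,10\}$ and use $g(x)>H_x\ge H_{11}>H_{10}$ for $x\ge 11$) is correct, but it is not the route the paper takes, and your write-up wanders through several abandoned attempts before landing there.

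The paper's proof is a single clean estimate with no case split. After expanding, as you did, the left-hand side into $H_x + \frac{1}{2x} + \frac{2}{2x+1} + \frac{2}{2x+2} + \frac{1}{2x+3}$, the paper groups the six extra fractions into three consecutive pairs and uses the elementary bound that two adjacent reciprocals near $2x$ sum to more than the single reciprocal near $x$; concretely,
\[
\Big(\tfrac{1}{2x}+\tfrac{1}{2x+1}\Big)+\Big(\tfrac{1}{2x+1}+\tfrac{1}{2x+2}\Big)+\Big(\tfrac{1}{2x+2}+\tfrac{1}{2x+3}\Big) > \tfrac{1}{x+1}+\tfrac{1}{x+2}+\tfrac{1}{x+3},
\]
so the whole expression exceeds $H_{x+3}\ge H_{10}$ for every $x\ge 7$. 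This avoids all numerical checks and all monotonicity analysis; it also makes transparent why the hypothesis is exactly $x\ge 7$. Your split into four base cases works, but the pairing trick is both shorter and more informative. If you keep your version, trim the exploratory discussion of monotonicity and telescoping and just present the two-line finish.
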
 
\begin{proof}
    We have:
    \begin{align*}
        &H_x + H_{2x+3} - H_{2x} + H_{2x+2} - H_{2x-1}\\
        =&H_x + \left( \frac{1}{2x+3}  +\frac{1}{2x+2} \right) + \left( \frac{1}{2x+1} + \frac{1}{2x+2} \right) + \left( \frac{1}{2x+1} + \frac{1}{2x} \right)\\
        >&H_x+\frac{1}{x+3}+\frac{1}{x+2}+\frac{1}{x+1}=H_{x+3}\ge H_{10}.
    \end{align*}
    And the claim is proven.
\end{proof}
}

\label{sec:CentersOflengthOne}
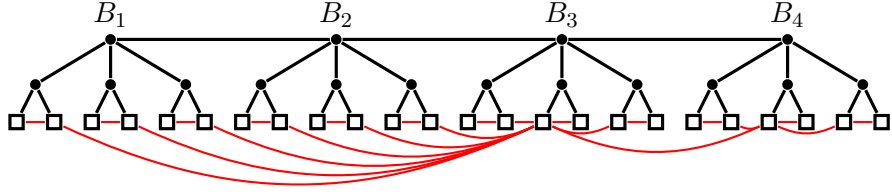
\begin{figure}[t]
    \begin{center}
        \begin{tikzpicture}[scale=0.5]
        
            \tikzset{terminal/.style={draw=black, very thick,minimum size=0pt, inner sep=2.5pt, outer sep=1pt}}
            
            \begin{scope}[every node/.style={terminal}]
                \node (t111) at (0, 0){};
                \node (t112) at (1, 0){};
                
                \node (t121) at (2, 0){};
                \node (t122) at (3, 0){};
                
                \node (t131) at (4, 0){};
                \node (t132) at (5, 0){};

                \node (t211) at (6, 0){};
                \node (t212) at (7, 0){};
                
                \node (t221) at (8, 0){};
                \node (t222) at (9, 0){};
                
                \node (t231) at (10, 0){};
                \node (t232) at (11, 0){};

                \node (t311) at (12, 0){};
                \node (t312) at (13, 0){};

                \node (t321) at (14, 0){};
                \node (t322) at (15, 0){};
        
                \node (t331) at (16, 0){};
                \node (t332) at (17, 0){};

                \node (t411) at (18, 0){};
                \node (t412) at (19, 0){};
                
                \node (t421) at (20, 0){};
                \node (t422) at (21, 0){};
                
                \node (t431) at (22, 0){};
                \node (t432) at (23, 0){};
            \end{scope}
            
            \tikzset{vertex/.style={draw=black, very thick, circle,minimum size=0pt, inner sep=1pt, outer sep=1pt,fill=black}}
            \begin{scope}[every node/.style={vertex}]
                \node (r11) at (0.5, 1){};
                \node (r12) at (2.5, 1){};
                \node (r13) at (4.5, 1){};
                
                \node (r21) at (6.5, 1){};
                \node (r22) at (8.5, 1){};
                \node (r23) at (10.5,1){};
                
                \node (r31) at (12.5,1){};
                \node (r32) at (14.5,1){};
                \node (r33) at (16.5,1){};
                
                \node (r41) at (18.5,1){};
                \node (r42) at (20.5,1){};
                \node (r43) at (22.5,1){};
            \end{scope}

            
            \begin{scope}[every node/.style={vertex}]
                
                \node (s1) at (2.5, 2.2){};
                \node (s2) at (8.5, 2.2){};
                \node (s3) at (14.5,2.2){};
                \node (s4) at (20.5,2.2){};

                \draw[very thick] (s1) to (s2) to (s3) to (s4);

                \draw[very thick] (s1) to (r13);
                \draw[very thick] (r12) to (s1) to (r11);
                \draw[very thick] (t111) to (r11) to (t112);
                \draw[very thick] (t121) to (r12) to (t122);
                \draw[very thick] (t131) to (r13) to (t132);
                
                \draw[very thick] (r22) to (s2) to (r21);
                \draw[very thick] (s2) to (r23);
                \draw[very thick] (t211) to (r21) to (t212);
                \draw[very thick] (t221) to (r22) to (t222);
                \draw[very thick] (t231) to (r23) to (t232);

                \draw[very thick] (r32) to (s3) to (r31);
                \draw[very thick] (s3) to (r33);
                \draw[very thick] (t311) to (r31) to (t312);
                \draw[very thick] (t321) to (r32) to (t322);
                \draw[very thick] (t331) to (r33) to (t332);                
                
                \draw[very thick] (r42) to (s4) to (r41);
                \draw[very thick] (s4) to (r43);
                \draw[very thick] (t411) to (r41) to (t412);
                \draw[very thick] (t421) to (r42) to (t422);
                \draw[very thick] (t431) to (r43) to (t432);

                \draw [red, thick] (t111) to (t112);
                \draw [red, thick] (t121) to (t122);
                \draw [red, thick] (t131) to (t132);

                \draw [red, thick] (t211) to (t212);
                \draw [red, thick] (t221) to (t222);
                \draw [red, thick] (t231) to (t232);

                \draw [red, thick] (t311) to (t312);
                \draw [red, thick] (t321) to (t322);
                \draw [red, thick] (t331) to (t332);

                \draw [red, thick] (t411) to (t412);
                \draw [red, thick] (t421) to (t422);
                \draw [red, thick] (t431) to (t432);
                
                \draw [red, thick, bend right=25] (t112) to (t321);
                \draw [red, thick, bend right=25] (t122) to (t321);
                \draw [red, thick, bend right=25] (t132) to (t321);
                \draw [red, thick, bend right=25] (t212) to (t321);
                \draw [red, thick, bend right=25] (t222) to (t321);
                \draw [red, thick, bend right=25] (t232) to (t321);
                \draw [red, thick, bend right=25] (t412) to (t421);
                \draw [red, thick, bend right=25] (t421) to (t431);

                \draw [red, thick, bend right=0] (t312) to (t321);
                \draw [red, thick, bend right=25] (t321) to (t331);
                
                \draw[red, thick, bend right=25] (t321) to (t421);
            \end{scope}

            \node[above=1pt] () at (s1) {$B_1$};
            \node[above=1pt] () at (s2) {$B_2$};
            \node[above=1pt] () at (s3) {$B_3$};
            \node[above=1pt] () at (s4) {$B_4$};
        \end{tikzpicture}
    \end{center}
    \caption{
    Red edges form a laminar witness tree $W$ that is not optimal. In this case we have centers \(B_3\) and \(B_4\). Where \(B_3\) has  \(x_L^3=2\), \(x_R^3=0\), \(L_3 = 0\), and  \(R_3=1\), and \(B_4\) has  \(x_L^4=0\), \(x_R^4=0\), \(L_4=1\), and \(R_4=0\).}
    \label{fig:centerInstance}
\end{figure}
\begin{proof}[Proof of Lemma~\ref{lem:CentersOflengthOne}]
    We introduce some notation. For center \(B_i\) in \(\W\), we denote by \(x_L^i\) (resp. \(x_R^i\)) the number of subtrees, \(B_k\) for \(k<i\) (resp. \(k>i\)), such that the subgraph on \(W\) induced by \(r_{k1}^1, r_{k2}^1, r_{k3}^1\) is three singletons and the unique terminal adjacent to these is in \(B_i\). 
    Furthermore, we let \(L_i = 1\) (resp. \(R_i = 1\))  if  there is a center \(B_j\) in \(\W\) with \(j < i\) (resp. \( j> i\)) and equal to \(0\) if not. 
    
    Therefore, any center \(B_i\) in \(\W\) has exactly \(3x^i_L\) edges to subtrees \(B_{i-j}\) for \(j=1, \dots, x^i_L\), and exactly \(3x^i_R\) edges to subtrees \(B_{i+j}\) for \( j=1,\dots, x^i_R\), plus a single edge to a center with index less than \(i\) if \(L_i =1\), and a single edge to a center with index greater than \(i\) if \(R_i = 1\). So, by the laminarity of \(\W\), we can see that there are exactly \(3x^i_L + 3x^i_R + L_i  + R_i\) edges incident to \(B_i\) in \(\W\). 

    Let \(w\) be the vector imposed on the nodes of \(T\) by \(\W\). 
    Observe, for every \(1\le k \le x_R^i\), \(w(s_{i+k}) = 3(x_R^i - k + 1) + R_i\) and \(w(t_{(i+k)j}) = 2\) for \(j\in \{1,2,3\}\). Similarly, for every \(1\le k \le x_L^i\), \(w(s_{i-k}) = 3(x_L^i - k + 1) + R_i\) and \(w(t_{(i-k)j}) = 2\) for \(j\in \{1,2,3\}\). Finally, let \(r_{i2}^1\) be the unique terminal that these subtrees are adjacent to, then \(w(s_i) = 3x^i_L + 3x^i_R + L_i  + R_i+2\), \(w(t_{i2}) = 3x^i_L + 3x^i_R + L_i  + R_i + 3\), and \(w(t_{i1}) = w(t_{i3}) = 2\). (see Figure~\ref{fig:centerInstance} for an example)

    Consider a center $B_i$ in \(\W\) and let be $x^i_L$, $x^i_R$, $R_i$ and $L_i$ defined as above. 
    We will show that \(x^i_L + x^i_R = 0\). That is, we will show that for every \(i\in [q]\), \(B_i\) must be  a center. Assume that \(B_i\) is a center with \(x_L^i + x_R^i \ge 1\). 
    We can see that 
    \begin{align*}
        &\sum_{j = i-x_L^i}^{i + x_R^i} \sum_{v\in B_j} H_{w(v)} = \sum_{j=i-x_L^i}^{i-1} \sum_{v\in B_j} H_{w(v)} + \sum_{j = i+1}^{i+x_R^i}\sum_{v\in B_j} H_{w(v)} + \sum_{v\in B_i} H_{w(v)}\\
        =& \sum_{j=1}^{x_L^i}(3H_2 + H_{3j + L_i}) + \sum_{j=1}^{x_R^i}(3H_2 + H_{3j + R_i}) + 2H_2 + H_{3x^i_L + 3x^i_R +R_i +L_i + 2} \\ & + H_{3x^i_L + 3x^i_R +R_i +L_i + 3}
    \end{align*}

    Consider laminar witness tree \(W'\) that is equal to \(\W\) except for edges with endpoints in \(B_{i+j}\), for \(j=-1,\dots, -x^i_l\), and \(j=1,\dots,x^i_r\). We instead let these \(B_{i+j}\) be centers in \(W'\), with \(x_L^{i+j}=x_R^{i+j}=0\). Clearly, \(L_{i+j}=R_{i+j}=1\) for \(j\neq -x_L^i, x_R^i\), and it is clear \(L_{i-x_L^i} = L_i\), and \(R_{i+x_R^i} = R_i\). Let \(w'\) be the vector imposed on the nodes of \(T\) by \(W'\). Clearly the difference between \(\sum_{v\in T} H_{w(v)}\) and \(\sum_{v\in T} H_{w'(v)}\) is \(\sum_{j= i - x_L^i}^{i+x_R^i} \sum_{v\in B_j} H_{w(v)} - \sum_{v\in B_j}H_{w'(v)}\) which is equal to
    \begin{align*}
        &\sum_{j=1}^{x_L^i}(3H_2 + H_{3j + L_i}) + \sum_{j=1}^{x_R^i}(3H_2 + H_{3j + R_i}) + 2H_2+ H_{3x^i_L + 3x^i_R +R_i +L_i + 2} \\ &  + H_{3x^i_L + 3x^i_R +R_i +L_i + 3} - \bigg( (2x^i_L + 2x^i_R+2)H_2 +  (x^i_L + x^i_R -1)(H_4 + H_5)
        \\& + H_{3+L_i} + H_{4+L_i} + H_{3+R_i} + H_{4+R_i}  \bigg) 
    \end{align*}
    We let \(P(x_R^i, x_L^i, L_i, R_i)\) denote this difference. We will show that \(P(x_R^i, x_L^i, L_i, R_i) >0 \), for every \((x^i_L,x^i_R,L_i,R_i) \in \mathbb{Z}^4 \) such that \(x^i_L,x^i_R \ge 0\), \(x^i_L + x^i_R \ge 1\) and \(L_i, R_i\in \{0,1\}\), contradicting the assumption that \(\nu_T(\W) = \min_{W\in \mathcal{W}} \nu_T (W)\). We proceed by induction on \(x^i_R + x^i_L\).

    For our base case, we assume \(x_R^i = 1 \ge x_L^i\). We have the following cases for the values of \(x_L^i\):
    \begin{enumerate}
        \item Case: \(x_L^i = 0\). Then \(P(0,1, L_i, R_i)\) is equal to 
        \begin{align*}
            &5H_2 + H_{3+R_i} + H_{5+ R_i +L_i} + H_{6+ R_i +L_i} - 4H_2 -  H_{3+L_i} - H_{4+L_i} - H_{3+R_i} - H_{4+R_i}\\
            =& H_2 +  H_{5+ R_i +L_i} + H_{6 + R_i +L_i}-  H_{3+L_i} - H_{4+L_i} - H_{4+R_i}\\
            \ge & H_2 + H_{6+R_i} + H_{7+R_i} - H_{4} - H_{5} - H_{4+R_i}\\
            \ge & H_2 + H_6 + H_7 - H_4 - H_5 - H_4 = 13/140 > 0
        \end{align*}
        Where the first inequality follows since it is not hard to see that \(H_{5+R_i} + H_{6+R_i} - H_3 - H_4 > H_{6+R_i} + H_{7+R_i} - H_4 - H_5 > 0 \). The second inequality follows for a similar reason.

        \item Case: \(x_L^i = 1\). Then \(P(1,1, L_i, R_i)\) is equal to 
        \begin{align*}
            &8H_2 + H_{3+L_i} + H_{3+R_i} + H_{8+L_i+R_i} + H_{9+L_i+R_i} - 6H_2 - H_4 - H_5 -H_{3+L_i}\\& - H_{4+L_i} - H_{3+R_i} - H_{4+R_i}\\
            =& 2H_2 + H_{8+L_i+R_i} + H_{9+L_i+R_i} - H_4 - H_5- H_{4+L_i}- H_{4+R_i} \\
            \ge & 2H_2 + H_8 + H_9 - 3H_4 - H_5 = 17/2160 > 0
        \end{align*}
        Where the first inequality above follow easily by checking the values of \(L_i, R_i\in\{0,1\}\).
    \end{enumerate}

    Our inductive hypothesis is to assume the inequality holds for \(x^i_L + x^i_R= k \ge 1\). We will show the claim holds when \(x^i_L + x^i_R = k+1\). Since we showed the base case for \(x_R^i=1\) and \(x_L^i \in \{0,1\}\), we can assume \(\max\{x_R^i, x_L^i\} \ge 2\). Furthermore, we can assume without loss of generality that \(3x_R^i + R_i \ge 3x_L^i + L_i\), which implies \(x_R^i \ge x_L^i\). We will show that \(P(x_L^i, x_R^i, L_i, R_i) > P(x_L^i, x_R^i - 1, L_i, R_i)> 0\), by applying the inductive hypothesis to \(x_L^i + x_R^i - 1 = k\). We can see 
    \begin{align*}
        &P(x_L^i, x_R^i, L_i, R_i) - P(x_L^i, x_R^i - 1, L_i, R_i)\\
        =&H_2 + H_{3x_R^i +R_i} - H_4 - H_5 + H_{3x^i_L + 3x^i_R+L_i+R_i+2} - H_{3x^i_L + 3x^i_R+L_i + R_i-1} \\&+ H_{3x^i_L + 3x^i_R + L_i + R_i +3} - H_{3x^i_L + 3x^i_R + L_i + R_i} \\
        \ge & H_2 + H_{3x_R^i +R_i} - H_4 - H_5 + H_{6x_R^i+2R_i+2} - H_{6x_R^i+2R_i-1} + H_{6x_R^i + 2 R_i +3} - H_{6x_R^i + 2 R_i} 
    \end{align*}
    Where the inequality above follows since we can see that the following inequalities hold since \(3x_R^i + R_i \ge 3x_L^i + L_i\)
    \begin{align*}
        H_{3x^i_L + 3x^i_R+L_i+R_i+2} - H_{3x^i_L + 3x^i_R+L_i + R_i-1} &\ge H_{6x_R^i+2R_i+2} - H_{6x_R^i+2R_i-1}\\
        H_{3x^i_L + 3x^i_R + L_i + R_i +3} - H_{3x^i_L + 3x^i_R + L_i + R_i} &\ge H_{6x_R^i + 2 R_i +3} - H_{6x_R^i + 2 R_i}
    \end{align*}
    
    Similarly, since \(R_i \le 1\), we have 
    \begin{align*}
        & H_2 + H_{3x_R^i +R_i} - H_4 - H_5 + H_{6x_R^i+2R_i+2} - H_{6x_R^i+2R_i-1} + H_{6x_R^i + 2 R_i +3} - H_{6x_R^i + 2 R_i} \\
        \ge & H_2 + H_{3x_R^i +1} - H_4 - H_5 + H_{6x_R^i+4} - H_{6x_R^i+1} + H_{6x_R^i + 5} - H_{6x_R^i + 2} 
    \end{align*}
    Finally, by applying Lemma~\ref{claim:useful} (by setting \(x= 3x_R^i + 1\)) we have the following
    \begin{align*}
        &P(x_L^i, x_R^i, L_i, R_i) - P(x_L^i, x_R^i - 1, L_i, R_i)\\
        \ge & H_2 + H_{3x_R^i +1} - H_4 - H_5 + H_{6x_R^i+4} - H_{6x_R^i+1} + H_{6x_R^i + 5} - H_{6x_R^i + 2} \\
        > & H_2 - H_4- H_5 + H_{10}    = 157/2520 > 0
    \end{align*}
    which completes the proof. 
\end{proof}

    \section{     Proofs for Section~\ref{sec:clawupper}}
\label{apx:clawupper}
The goal of this section is to provide the complete proofs of Section~\ref{sec:clawupper}. In Lemma~\ref{lem:smallValuesofQ}, we show that our approximation factor holds for small values of $q$, and then we provide the proof of Lemma~\ref{lem:upperbound}, giving us the necessary ingredients to prove Theorem~\ref{thm:clawlowerbound}.
\subsection{     Upperbound for small Steiner-Claw Free instances }
\label{sec:smallValuesOfQ}
\begin{lemma}
\label{lem:smallValuesofQ}
    If $q<5$, then \(\gamma_{(G,R,c)} < \frac{991}{732}\).
\end{lemma}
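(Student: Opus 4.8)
The plan is to use the very special structure of $T$ in this regime. After the stated preprocessing, $T$ is a \emph{caterpillar}: a Steiner path $s_1,\dots,s_q$ with exactly one pendant terminal $r_j$ attached to each $s_j$, and $E^*=L\cup O$ with $L=\{s_jr_j\}$, $O=\{s_js_{j+1}\}$. For any witness tree $W=(R,E_W)$ on $\{r_1,\dots,r_q\}$ one reads off directly that $\bar w(s_jr_j)=\deg_W(r_j)$ (the edge $s_jr_j$ lies on the $r_a$--$r_b$ path of $T$ exactly when $j\in\{a,b\}$) and $\bar w(s_js_{j+1})=\lvert\{ab\in E_W:\min(a,b)\le j<\max(a,b)\}\rvert$, the number of witness edges crossing the cut $\{r_1,\dots,r_j\}\mid\{r_{j+1},\dots,r_q\}$; in particular $\bar w(s_1s_2)=\bar w(s_1r_1)=\deg_W(r_1)$ and $\bar w(s_{q-1}s_q)=\bar w(s_qr_q)=\deg_W(r_q)$. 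Exactly as in the proof of Lemma~\ref{lem:gamma_claw}, it suffices to exhibit a distribution $\mathcal D$ over witness trees of $T$ with $\mathbb E_{W\sim\mathcal D}[H_{\bar w(e)}]<\tfrac{991}{732}$ for every edge $e$: then $\mathbb E_{W\sim\mathcal D}[\bar\nu_T(W)]$ is a weighted average of these numbers, so it is $<\tfrac{991}{732}$, hence some witness tree of $T$ attains $\bar\nu_T(W)<\tfrac{991}{732}$, and since $T$ is an optimal Steiner tree this gives $\gamma_{(G,R,c)}<\tfrac{991}{732}$.

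For $q\le 2$ there is nothing to prove: the only witness tree is $r_1r_2$ (the empty tree if $q=1$), all $\bar w$-values are $1$, and $\bar\nu_T(W)=H_1=1$. For $q=3$ take $\mathcal D$ uniform over the three spanning trees $W_1,W_2,W_3$, where $W_i$ is the path on $\{r_1,r_2,r_3\}$ with center $r_i$. A direct check of the two formulas above shows that each of the five edges of $T$ has $\bar w=2$ under exactly one of $W_1,W_2,W_3$ and $\bar w=1$ under the other two, so $\mathbb E[H_{\bar w(e)}]=\tfrac13H_2+\tfrac23H_1=\tfrac76<\tfrac{991}{732}$ for every $e$.

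The remaining and main case is $q=4$. I take $\mathcal D$ supported on the monotone path $W_E=r_1r_2r_3r_4$ with probability $\tfrac{7}{10}$, the star centered at $r_1$ with probability $\tfrac{3}{20}$, and the star centered at $r_4$ with probability $\tfrac{3}{20}$; this is invariant under $r_i\mapsto r_{5-i}$, so the seven edges split into the orbits $\{s_1r_1,s_1s_2,s_3s_4,s_4r_4\}$, $\{s_2r_2,s_3r_3\}$, $\{s_2s_3\}$. Computing $\bar w$ on the three trees and averaging, one gets $\mathbb E[H_{\bar w(e)}]=\tfrac{17}{20}H_1+\tfrac{3}{20}H_3=\tfrac98$ on the first orbit, $\tfrac{7}{10}H_2+\tfrac{3}{10}H_1=\tfrac{27}{20}$ on $\{s_2r_2,s_3r_3\}$, and $\tfrac{7}{10}H_1+\tfrac{3}{10}H_2=\tfrac{23}{20}$ on $\{s_2s_3\}$. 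All three numbers are strictly below $\tfrac{991}{732}\approx 1.3538$ (the largest is $\tfrac{27}{20}=1.35$), which finishes the argument.

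The one delicate point is the central path-edge $s_2s_3$ in the case $q=4$: it satisfies $\bar w(s_2s_3)\ge 1$ always and $\bar w(s_2s_3)=2$ for \emph{every} star witness tree (indeed whenever two witness edges straddle the middle), so averaging over the four stars, or uniformly over all sixteen spanning trees, would give $\mathbb E[H_{\bar w(s_2s_3)}]=H_2=\tfrac32>\tfrac{991}{732}$ and fail. The fix is to put most of the mass on the monotone path $W_E$, which keeps $\bar w(s_2s_3)=1$, and just enough mass on the two end-stars to prevent the pendant edges $s_1r_1,s_4r_4$ (equivalently $s_1s_2,s_3s_4$) from being stuck at $\bar w=2$. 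After that the proof is only the evaluation of a few harmonic numbers and the slack to $\tfrac{991}{732}$ is comfortable, so no careful estimates are needed — the real content is choosing the weights $(\tfrac7{10},\tfrac3{20},\tfrac3{20})$.
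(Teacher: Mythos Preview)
Your proof is correct, but it follows a genuinely different route from the paper's.

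The paper splits on the cost ratio $\alpha=c(O)/c(L)$: for $\alpha\ge\tfrac12$ it takes the single (deterministic) path witness tree $r_1r_2\cdots r_q$, which gives $\bar w(e)=1$ on $O$ and $\bar w(e)\le 2$ on $L$, hence $\bar\nu_T(W)\le\tfrac{H_2+\alpha}{1+\alpha}\le\tfrac43$; for $\alpha<\tfrac12$ it takes a uniformly random star $W_\sigma$ centered at $r_\sigma$, bounding $\mathbb E[H_{\bar w(e)}]\le H_2$ on $O$ and $\mathbb E[H_{\bar w(e)}]\le\tfrac{H_3+3}{4}=\tfrac{29}{24}$ on $L$, hence $\mathbb E[\bar\nu_T(W)]\le\tfrac{29/24+\alpha H_2}{1+\alpha}<\tfrac{47}{36}$. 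In contrast, you design a single \emph{cost-oblivious} distribution (uniform over the three paths for $q=3$; a $\tfrac{7}{10}$--$\tfrac{3}{20}$--$\tfrac{3}{20}$ mixture of the monotone path and the two end-stars for $q=4$) so that $\mathbb E[H_{\bar w(e)}]<\tfrac{991}{732}$ holds \emph{edge by edge}, which makes the final averaging trivial and never refers to $\alpha$. Your approach is slightly more elegant in that a single construction works for all costs, and your explicit discussion of why the middle edge $s_2s_3$ rules out naive averaging over stars is a nice touch; the paper's version trades this for simpler primitives (pure path, pure uniform star) at the price of the $\alpha$ case split.
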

\begin{proof}
    We denote by \(L\subseteq E^*\) the edges of $T$ incident to a terminal, and by \(O = E^*\setminus L\) the edges of the path \(s_1, \dots, s_q\). Let \(\alpha \coloneqq c(O) / c(L)\). Note that \(c(E^*) = (1+\alpha)c(L) = \frac{1+\alpha}{\alpha}c(O)\).
    We distinguish two cases for the values of \(\alpha\):
    \begin{itemize}
        \item First assume $\alpha\ge \frac{1}{2}$. In this case we define $E_W$ as \(\{r_{i}r_{i+1} | 1\le i < q \}\). Observe that $w(e)$ is $1$ if $e=s_is_{i+1}$ for $1\le i < q$ and is at most $2$ if $e=s_ir_{i}$ $1\le i\le q$. Therefore:
        \begin{align*}
        \sum_{e\in E^*}c(e)H_{\bar w(e)}
            &\leq \frac{c(E^*)}{1+\alpha}H_2 + \frac{\alpha c(E^*)}{1+\alpha}H_1 
            =\frac{H_2 + \alpha}{1+ \alpha}c(E^*) \le \frac{4}{3}c(E^*).
        \end{align*}
    
        Therefore ${\bnu_T(W)}\le \frac{4}{3}< \frac{991}{732}$.
        \item Now assume $\alpha< \frac{1}{2}$. We uniformly at random select $1\le\sigma\le q$ and then we define \(E_W\) as \(\{r_\sigma  r_{i} | 1\le i \le q, i\neq \sigma \}\). 
        If $e=s_is_{i+1}$ for $1\le i<q$ then it's not hard to see $\mathbb{E}[H_{w(e)}]\le H_2$ since \(q<5\). 
        For $e=s_ir_{i}$, $\mathbb{E}[H_{w(e)}]=\frac{1}{q} H_{q-1} + \frac{q-1}{q} H_1\le \frac{H_3+3}{4}=29/24.$ Therefore:
        \begin{align*}
            \sum_{e\in E^*}c(e)H_{\bar w(e)}
            &\leq \frac{\frac{29}{24}c(E^*)}{1+\alpha} + \frac{\alpha c(E^*)}{1+\alpha}H_2
            =\frac{\frac{29}{24}+\alpha H_2}{1+\alpha} c(E^*)< \frac{47}{36}c(E^*).
        \end{align*}
        \end{itemize}
        Thus $\mathbb{E}[{\bnu_T(W)}]\le \frac{47}{36} < \frac{991}{732}$, which implies \(\gamma_{(G,R,c)} < \frac{991}{732}\). 
\end{proof}

\subsection{     Proof of Lemma~\ref{lem:upperbound}}
\label{sec:smallt}
\begin{proof}
    We denote
    \begin{align*}
         f(\alpha):=\frac{1}{\alpha + 1} \left(\frac{1}{t_\alpha} H_{t_\alpha+1} + 
         \frac{t_\alpha-1}{t_\alpha} + \alpha \left(\frac{1}{t_\alpha}  + 
         \frac{2}{t_\alpha} \sum_{i=2}^{\lceil \frac{t_\alpha}{2} 
         \rceil}H_{i} \right) \right).
    \end{align*}
    Suppose \(\alpha\in [0, 0.3\Bar{5}]\). Then by definition $t_\alpha=5$ and therefore we have
    \begin{align*}
        f(\alpha)=1.5\Bar{3}-\frac{1.5\Bar{3}- 1.29}{\alpha + 1} \le \frac{991}{732}.
    \end{align*}
    Suppose \(\alpha \in (0.3\Bar{5}, 1)\). In this case $t_\alpha=3$, thus
    \begin{align*}
        f(\alpha)= 1.\Bar{3}+\frac{1.36\Bar{1}-1.\Bar{3}}{\alpha+1} < \frac{991}{732}.
    \end{align*}
    Furthermore for \(\alpha \ge 1\), $t_\alpha=1$; so
    \begin{align*}
        f(\alpha) = 1+\frac{0.5}{\alpha+1}\le1.25.
    \end{align*}
\end{proof}
    \section{ Steiner-claw Free Lower Bound}
\label{sec:clawlower}
The goal of this section is to prove Theorem~\ref{thm:clawlowerbound}. {\color{black} We will need the following useful lemma.
\begin{lemma}
    \label{lem:useful6}
        For \(x\in \mathbb{Z}\) \(x\ge 3\), \(\alpha = 32/90\)
        \[
            \alpha\left( \frac{1}{x+1} + \frac{1}{x}\right)+ \frac{1}{2x+1} - \frac{1}{2x-2} > 0
        \]
    \end{lemma}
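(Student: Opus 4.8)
The plan is to clear denominators and reduce the claim to an elementary cubic polynomial inequality. First I would rewrite the negative part as a single fraction,
\[
\frac{1}{2x+1} - \frac{1}{2x-2} = \frac{(2x-2)-(2x+1)}{(2x+1)(2x-2)} = \frac{-3}{(2x+1)(2x-2)},
\]
and combine the two terms inside the $\alpha(\cdot)$ as $\frac{1}{x+1}+\frac{1}{x} = \frac{2x+1}{x(x+1)}$. Since $x\ge 3$ forces $2x-2\ge 4>0$, every denominator appearing is positive, so the inequality to be proved is equivalent, after multiplying through by $x(x+1)(2x+1)(2x-2)>0$ and inserting $\alpha=\tfrac{32}{90}=\tfrac{16}{45}$, to
\[
16\,(2x+1)^2(2x-2) \;>\; 135\,x(x+1).
\]

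Next I would expand both sides. A short computation gives $(2x+1)^2(2x-2) = 8x^3 - 6x - 2$, so the left-hand side equals $128x^3 - 96x - 32$, while the right-hand side equals $135x^2 + 135x$; the inequality therefore becomes
\[
g(x) \coloneqq 128x^3 - 135x^2 - 231x - 32 \;>\; 0 .
\]

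Finally I would verify $g(x)>0$ for every real $x\ge 3$ (in particular for all integers $x\ge 3$) by a crude bound: for $x\ge 3$ we have $128x^3-135x^2 = x^2(128x-135)\ge 249x^2$, hence $g(x)\ge 249x^2 - 231x - 32$, and the quadratic on the right has derivative $498x-231>0$ on $[3,\infty)$ and value $1516>0$ at $x=3$, so it is positive throughout that range. I do not expect any genuine obstacle here; the only point needing care is keeping track of the sign of $2x-2$ when clearing fractions, which is precisely where the hypothesis $x\ge 3$ is used.
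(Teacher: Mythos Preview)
Your argument is correct: the algebra is right, the sign bookkeeping is sound (indeed $x\ge 3$ is exactly what guarantees $2x-2>0$), and the final cubic bound is valid.

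The paper takes a different, more ``back-of-the-envelope'' route. Instead of clearing denominators and expanding, it chains three crude monotone estimates: replace $\tfrac{1}{x+1}+\tfrac{1}{x}$ by the smaller $\tfrac{2}{x+1}$, then use $2x-2\ge 4$ and $4(2x+1)\ge 7(x+1)$ (both consequences of $x\ge 3$) to bound $\tfrac{3}{(2x+1)(2x-2)}\le \tfrac{3}{7(x+1)}$, and finally observe $2\alpha>\tfrac{3}{7}$. Your approach trades those ad hoc inequalities for a single exact polynomial identity plus one easy tail bound; it is slightly longer to write out but entirely mechanical and leaves no room for a wrong estimate. The paper's approach is shorter and avoids any expansion, at the price of needing to spot the right sequence of relaxations.
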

\begin{proof}
    \begin{align*}
        &\alpha\left( \frac{1}{x+1} + \frac{1}{x}\right)+ \frac{1}{2x+1} - \frac{1}{2x-2} \ge \frac{2\alpha}{x+1} - \frac{3}{(2x+1)(2x-2)} \\&\ge \frac{2\alpha}{x+1}-\frac{3}{4(2x+1)}\ge \frac{2\alpha}{x+1}-\frac{3}{7(x+1)}> 0
    \end{align*}
    Where the second and the third inequality above follows since \(x\ge 3\). 
\end{proof}}
Consider a Steiner-Claw Free instance \((G= (R\cup S, E),c)\), where the Steiner nodes \(S\) consist of a path \(s_0,s_1, \dots, s_{q+1}\), and each \(s_i\in S\) is adjacent to exactly one terminal \(r_i\in R\). We let \(L\subseteq E\) denote the terminal incident edges and \(O = E\backslash L\) denote the edges between Steiner nodes. For \(e\in O\), let \(c(e) \coloneqq \frac{32}{90}\), and for \(e\in L\), let \(c(e) = 1\). Clearly, the optimal Steiner of such an instance is \(T = G\). 
Let $W^*$ be a witness tree that minimizes $\bnu_T(\W)$. Recall that we can assume $W^*$ to be laminar by Theorem~\ref{thm:edgelaminar}, with \(w\) the vector imposed on \(E\) by \(\W\).

Consider an arbitrary laminar witness tree \(W=(R,E_W)\). For terminal \(r\in R\), let \(d_r^W\) denote  the degree of \(r\) in \(W\). We know by laminarity that either, \(d^W_r > 1\) and \(r\) is adjacent to at least \(d^W_r-2\) terminals of degree \(1\), or \(r\) has degree \(1\). For \(i\in [q]\), if \(d^W_{r_i} >1\) we call \(r_i\) a \emph{center}, and we always call \(r_0\) and \(r_{q+1}\) centers. Note that, by laminarity of \(W\), the centers form a path in \(W\) in increasing order of their index
(note that this corresponds to the notion of center \(B_i\) subtrees found in Section~\ref{sec:nodelower}). Let \(\I(W)\subseteq \{0,\dots, q+1\}\) be the index set of the centers of \(W\), then we denote by \(\mathcal{P}(\I(W))\) the path of the centers in increasing order of index. For \(i\in \{0,\dots, q+1\}\), and  \(x_L^i, x_R^i \in \mathbb{Z}_{\ge 0}\), we define a \emph{section} \(W(r_i, x_L^i, x_R^i)\) as the star graph centered at \(r_i\) with leaves  \(r_{i+j}\) for \(j= -x_L^i, \dots, x_R^i\). Clearly, for a laminar witness tree \(W\) with center index set \(\I(W)\), there exist sections \(\{W(r_i,x_L^i,x_R^i)\}_{i\in \I(W)}\) such that \(W = \bigcup_{i\in \I(W)} W(r_i,x_L^i,x_R^i) \bigcup \mathcal{P}(\I(W))\), we say that these sections are \emph{maximal} sections such that \(W(r_i,x_L^i,x_R^i)\subseteq W\).

Given a section \(W(r_i, x_L^i, x_R^i)\)    , we define a corresponding subtree \(S(r_i, x_L^i, x_R^i) \subseteq T\) as the induced subtree on the nodes \(s_{i+j}\) for \(j= -x_L^I, \dots, x_R^i+1\) and terminals \(r_{i_j}\) for \(j=-x_L^i, \dots, x_R^i\) (if \(i=q+1\), then \(S(r_{q+1}, x_L^{q+1}, x_R^{q+1})\) obviously does not include node \(s_{q+2}\)). 
(see Figure~\ref{fig:clawlowerboundsections} for an example of a section where \(q=5\)).
Let the centers of \(W\) be indexed by \(\I(W)\), and \(W\) let contain the maximal sections \(\{W(r_i,x_L^i,x_R^i)\}_{i\in \I(W)}\). Then it is clear that we have \(T = \cup_{i\in \I(W)}S(r_i,x_L^i,x_R^i)\) and \(S(r_i,x_L^i,x_R^i)\cap S(r_j,x_L^j,x_R^j) = \emptyset\) for all \(i\neq j\in\I(W)\). 

\begin{figure}[t]
    \centering
        \begin{tikzpicture}[scale=0.99]
    \coordinate (s0) at (0,1);
    \coordinate (s1) at (2,1);
    \coordinate (s2) at (4,1);
    \coordinate (s3) at (6,1);
    \coordinate (s4) at (8,1);
    \coordinate (s5) at (10,1);
    \coordinate (s6) at (12,1);
    
    \coordinate (r0) at (0,0);
    \coordinate (r1) at (2,0);
    \coordinate (r2) at (4,0);
    \coordinate (r3) at (6,0);
    \coordinate (r4) at (8,0);
    \coordinate (r5) at (10,0);
    \coordinate (r6) at (12,0);
    
    \filldraw[color=black, very thick]
        (s0)circle(1.5pt)
        (s1)circle(1.5pt)
        (s2)circle(1.5pt)
        (s3)circle(1.5pt)
        (s4)circle(1.5pt)
        (s5)circle(1.5pt)
        (s6)circle(1.5pt);
    
    \draw[color=black,very thick]
        (s0) -- (s1) node[midway,above] {\(\alpha\)} 
        (s1) -- (s2) node[midway,above] {\(\alpha\)} 
        (s2) -- (s3) node[midway,above] {\(\alpha\)} 
        (s3) -- (s4) node[midway,above] {\(\alpha\)} 
        (s4) -- (s5) node[midway,above] {\(\alpha\)} 
        (s5) -- (s6) node[midway,above] {\(\alpha\)} 
        
        (r0) -- (s0) node[midway,right] {1} 
        (r1) -- (s1) node[midway,right] {1} 
        (r2) -- (s2) node[midway,right] {1} 
        (r3) -- (s3) node[midway,right] {1} 
        (r4) -- (s4) node[midway,right] {1} 
        (r5) -- (s5) node[midway,right] {1}
        (s6) -- (r6) node[midway,left] {1}
        ;
        
    \draw[color=blue, very thick]
        (s3) -- (s4) 
        (s4) -- (s5) 
        (s5) -- (s6) 
        (r3) -- (s3) 
        (r4) -- (s4) 
        (r5) -- (s5) 
        ;
    
    \draw[color=red, very thick]
        (r0) edge[bend right=15] (r2)
        (r1) edge (r2)
        ;
    \draw[color=red, very thick, dashed]
        (r2) edge[bend right=15](r4)
        (r3) edge (r4)
        (r4) edge (r5)
        (r4) edge[bend right=15] (r6);
        
    \draw[color=black,very thick]
        (r0) node[draw=black, fill=black!0]{}
        (r1) node[draw=black, fill=black!0]{}
        (r2) node[draw=black, fill=black!0]{}
        (r3) node[draw=black, fill=black!0]{}
        (r4) node[draw=black, fill=black!0]{}
        (r5) node[draw=black, fill=black!0]{}
        (r6) node[draw=black, fill=black!0]{};
        
\end{tikzpicture}
        \caption{
        Depiction of the lower bound instance with sections for witness tree \(W\) marked in red edges. \(q=5\). Centers \(r_0\), \(r_2\), \(r_4\) and \(r_6\). There are sections \(W(r_0,0,0)\), \(W(r_2,1,0)\), \(W(r_4,1,1)\), and \(W(r_6,0,0)\). 
        The section \(W(r_4,1,1)\) is the red dashed edges. The subtree \(S(r_4,1,1)\) is the blue edges.
        }
    \label{fig:clawlowerboundsections}
\end{figure}
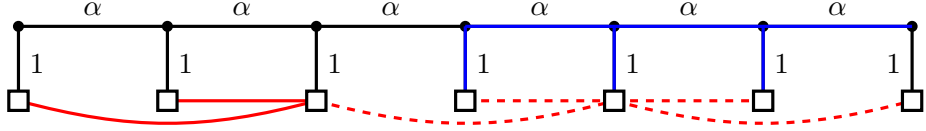

The following lemma will be useful in to allow us to replace sections of a witness tree and guarantee connectivity is maintained.

\begin{lemma}
\label{lem:sections}
    Consider sections \(W(r_i, x_L^i, x_R^i)\), \(W(r_j, x_L^j, x_R^j)\), and \(W(r_k, x_L^k, x_R^k)\), furthermore, let \(W(r_i, x_L^i, x_R^i) \subseteq W\) be a maximal section. Let the centers of \(W\) be indexed by \(\I(W)\). 
    Then 
    \[  
        W' = \bigcup_{\iota\in \I(W)\backslash\{i\}} W(r_\iota, x_L^\iota, x_R\iota) \bigcup W(r_j, x_L^j, x_R^j)\bigcup W(r_k, x_L^k, x_R^k)\bigcup \mathcal{P}(\I(W'))
    \]
    where \(\I(W') = \I(W) \cup\{j,k\}\backslash\{i\}\),  is a feasible witness tree if:\\
    1) \(r_j,r_k \in \{r_{i - x_L^i}, \dots, r_{i + x_R^i}\}\); 2) \(j+x_R^j +1 = k-x_L^k\); 3) \(i - x_L^i = j -x_L^j\), and; 4) \(i + x_R^i = k + x_R^k\)
\end{lemma}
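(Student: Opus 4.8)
Recall that a feasible witness tree is a spanning tree on the terminal set $R=\{r_0,\dots,r_{q+1}\}$, so it is enough to show that the graph $W'$ is connected and has exactly $|R|-1$ edges. The structural input I would use first is that, for any laminar witness tree, the index intervals $\{[\iota-x_L^\iota,\iota+x_R^\iota]:\iota\in\I(W)\}$ of the maximal sections partition $\{0,\dots,q+1\}$; this is precisely the statement $T=\bigcup_{\iota\in\I(W)}S(r_\iota,x_L^\iota,x_R^\iota)$ together with pairwise disjointness of the $S(\cdot)$, since $S(r_\iota,x_L^\iota,x_R^\iota)$ contains exactly the terminals $r_{\iota+t}$ with $t\in\{-x_L^\iota,\dots,x_R^\iota\}$. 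Two consequences: (i) $i$ is the unique element of $\I(W)$ lying in $[i-x_L^i,i+x_R^i]$; and (ii) in any laminar witness tree $W''$ no section edge $r_\iota r_{\iota+t}$ ($t\neq 0$) is a center-path edge, since its leaf endpoint has an index in $\iota$'s interval different from $\iota$, hence is not a center of $W''$. By (ii) the section edge sets and $\mathcal{P}(\I(W''))$ are pairwise disjoint, so $|E(W'')|=\sum_{\iota\in\I(W'')}(x_L^\iota+x_R^\iota)+|\I(W'')|-1$; applied to $W$ this equals $|R|-1$ because $W$ is a spanning tree.

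Next I would unpack hypotheses (2)--(4). By (3) the interval $[j-x_L^j,j+x_R^j]$ begins at $i-x_L^i$, by (4) the interval $[k-x_L^k,k+x_R^k]$ ends at $i+x_R^i$, and by (2) the former ends exactly one step before the latter begins; hence these two intervals are disjoint and their union is $[i-x_L^i,i+x_R^i]$. Comparing cardinalities gives $(x_L^j+x_R^j)+(x_L^k+x_R^k)=(x_L^i+x_R^i)-1$, and $j\neq k$ (each interval contains its own center). Since hypothesis (1) gives $\{j,k\}\subseteq[i-x_L^i,i+x_R^i]$, consequence (i) makes $\{j,k\}$ disjoint from $\I(W)\setminus\{i\}$, so $|\I(W')|=|\I(W)\setminus\{i\}|+|\{j,k\}|=|\I(W)|+1$. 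Finally, since the intervals of $\I(W')$ are obtained from those of $\I(W)$ by replacing the block $[i-x_L^i,i+x_R^i]$ with the two blocks $[j-x_L^j,j+x_R^j]$ and $[k-x_L^k,k+x_R^k]$, they still partition $\{0,\dots,q+1\}$.

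To conclude, each section $W(r_\iota,x_L^\iota,x_R^\iota)$ with $\iota\in\I(W')$ is a star, hence a connected graph spanning exactly the terminals whose indices form its block, and it meets $\mathcal{P}(\I(W'))$ at its center $r_\iota$; as $\mathcal{P}(\I(W'))$ is a path through one center of each block, the union $W'$ is connected and spans all of $R$. For the edge count, consequence (ii) together with the disjointness of the blocks shows again that the edges of the sections and of $\mathcal{P}(\I(W'))$ are pairwise distinct, so
\[
|E(W')|=\sum_{\iota\in\I(W)\setminus\{i\}}(x_L^\iota+x_R^\iota)+(x_L^j+x_R^j)+(x_L^k+x_R^k)+|\I(W')|-1;
\]
substituting the two identities from the previous paragraph and comparing with $|E(W)|=|R|-1$ gives $|E(W')|=|R|-1$. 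A connected graph on $|R|$ vertices with $|R|-1$ edges is a tree, so $W'$ is a feasible witness tree.

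There is no real obstacle here, only careful bookkeeping: the whole argument rests on the section intervals partitioning $\{0,\dots,q+1\}$ both before and after the swap, which is what prevents any edge from being double counted and lets a single center path thread all the stars; one should also check that the degenerate cases ($x_L^i+x_R^i=1$, or one of $j,k$ equal to $i$, so that some of the sections involved are trivial and edgeless) are covered --- they are, verbatim, by the counting above.
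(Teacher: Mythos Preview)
Your proof is correct. The paper's own proof is much terser: it simply argues that $W(r_j,x_L^j,x_R^j)\cup W(r_k,x_L^k,x_R^k)\cup\{r_jr_k\}$ is a tree on exactly the vertex set $\{r_{i-x_L^i},\dots,r_{i+x_R^i}\}$ of $W(r_i,x_L^i,x_R^i)$, checking (via conditions (2)--(4)) that the two new index intervals are disjoint and together cover the old one, and then leaves implicit the fact that swapping one subtree for another on the same vertices (and re-attaching it to the rest of the center path at $r_j,r_k$ instead of $r_i$) preserves tree-ness. Your argument instead works globally: you show the section intervals for $\I(W')$ still partition $\{0,\dots,q+1\}$, deduce that the section edge sets and the center-path edges are pairwise disjoint, count edges explicitly, and verify connectivity. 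Both approaches hinge on the same combinatorial fact (the interval partition), but yours spells out every step and handles the center-path bookkeeping (which the paper glosses over) cleanly; the paper's version is quicker to read once one sees why the local-replacement reduction suffices.
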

\begin{proof}
    To see this claim, we need to show that \(W(r_j, x_L^j, x_R^j) \cup W(r_k, x_L^k, x_R^k)\cup \{r_jr_k\}\) is a tree over the same nodes as \(W(r_i, x_L^i, x_R^i)\). Clearly, \(r_j\) and \(r_k\) together are adjacent to every \(r\in \{r_{j-x_L^j}, \dots, r_{j+x_R^j}\} \cup \{r_{k-x_L^k}, \dots, r_{k+x_R^k}\} = \{r_{i - x_L^i}, \dots, r_{i + x_R^i}\}\),  and, we can see that \(\{r_{j-x_L^j}, \dots, r_{j+x_R^j}\} \cap \{r_{k-x_L^k}, \dots, r_{k+x_R^k}\} = \emptyset\). 
\end{proof}

Recall that the edges between Steiner nodes are denoted \(O\) and have cost \(\alpha=\frac{32}{90}\), and the terminal incident edges are denoted \(L\) and have cost \(1\). To prove Theorem~\ref{thm:clawlowerbound} we will first prove some useful facts about the maximal sections \(W(r_i, x_L^i, x_R^i) \subseteq \W\).  We show the following useful lemma about the maximal sections of \(\W\) with centers \(r_0\) and \(r_{q+1}\).
\begin{lemma}   
    \(W(r_0,0,0)\) and \(W(r_{q+1},0,0)\) are maximal sections of \(\W\).
\end{lemma}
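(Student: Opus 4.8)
The plan is to prove the claim by a local exchange argument against the optimality of $\W$; I argue only for $r_0$, since the case of $r_{q+1}$ is the mirror image obtained by reversing the indexing of the Steiner path $s_0,\dots,s_{q+1}$. Since $r_0$ has no terminal to its left, the maximal section of $\W$ with centre $r_0$ must be $W(r_0,0,m)$ for some $m\ge 0$, and it suffices to rule out $m\ge 1$. So suppose $m\ge 1$. Because $W(r_0,0,m)$ is the maximal section at $r_0$ and the sections tile $T$, the section immediately to its right begins at $r_{m+1}$, so the centre following $r_0$ along $\mathcal{P}(\I(\W))$ is some $r_t$ with $t\ge m+1$; hence the witness edges of $\W$ having an endpoint in $\{r_0,\dots,r_m\}$ are exactly the star edges $r_0r_1,\dots,r_0r_m$ together with the single centre-path edge $r_0r_t$. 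In particular $\deg_{\W}(r_0)=m+1$ and $\deg_{\W}(r_1)=1$.

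I would then build a competitor $W'$ that shifts the star at the endpoint $r_0$ onto its neighbour $r_1$: apply Lemma~\ref{lem:sections} to the maximal section $W(r_0,0,m)$ (so $i=0$, $(x_L^i,x_R^i)=(0,m)$) with the two replacement sections $W(r_0,0,0)$ and $W(r_1,0,m-1)$ (that is, $j=0$, $(x_L^j,x_R^j)=(0,0)$ and $k=1$, $(x_L^k,x_R^k)=(0,m-1)$); the four hypotheses of the lemma are immediate once $m\ge1$. The resulting feasible witness tree $W'$ differs from $\W$ only in that $r_1$ becomes a centre, the star $r_0r_1,\dots,r_0r_m$ is replaced by the star $r_1r_2,\dots,r_1r_m$, and the centre-path edge $r_0r_t$ is replaced by the pair $r_0r_1$, $r_1r_t$; consequently $\deg_{W'}(r_0)=1$ and $\deg_{W'}(r_1)=m+1$, while every other vertex keeps its degree.

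The core step is to compare the vector $w$ of $\W$ with the vector $w'$ of $W'$, using the elementary identity that for any terminal edge $r_as_a$ one has $w(r_as_a)=\deg_{\W}(r_a)$, and likewise $w(s_0s_1)=\deg_{\W}(r_0)$, because deleting such an edge from $T$ isolates $r_a$ (respectively $r_0$), so exactly the witness edges incident to that terminal run across it; the same identities hold with $\W$ replaced by $W'$. Combining this with the observation that $r_0r_t$ and $r_1r_t$ have the same $T$-path from $s_{m+1}$ onward, a short check shows that $w$ and $w'$ agree on every edge of $T$ except $r_0s_0$ (value $m+1$ in $\W$, value $1$ in $W'$), $s_0s_1$ (value $m+1$ in $\W$, value $1$ in $W'$), and $r_1s_1$ (value $1$ in $\W$, value $m+1$ in $W'$). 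Since $c(r_0s_0)=c(r_1s_1)=1$ and $c(s_0s_1)=\alpha=\tfrac{32}{90}$, the contributions of $r_0s_0$ and $r_1s_1$ cancel and
\[
c(E)\bigl(\bnu_T(\W)-\bnu_T(W')\bigr)=\alpha\bigl(H_{m+1}-H_1\bigr)>0,
\]
because $m\ge1$. This contradicts the minimality of $\bnu_T(\W)$, forcing $m=0$; the reversed argument handles $r_{q+1}$.

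The one point I expect to need care is exactly this short check: one must verify that the only witness edge incident to $r_0$ (or to $r_1$) that is common to $\W$ and $W'$ is the edge $r_0r_1$, so that the degree identities pin down $w$ and $w'$ on $r_0s_0$, $s_0s_1$ and $r_1s_1$ with no leftover contribution, and that on the remaining edges of $S(r_0,0,m)$ the removed star/centre-path edges of $\W$ and the added ones of $W'$ contribute identically. Beyond that everything is routine bookkeeping; the only genuinely creative choice is the replacement prescribed by Lemma~\ref{lem:sections}.
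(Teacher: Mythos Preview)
Your proof is correct and follows essentially the same approach as the paper: you apply Lemma~\ref{lem:sections} with the same replacement sections $W(r_0,0,0)$ and $W(r_1,0,m-1)$, identify the same three edges $r_0s_0$, $s_0s_1$, $r_1s_1$ whose $w$-values change, and obtain the identical difference $\alpha(H_{m+1}-1)>0$ contradicting optimality. The only distinction is cosmetic---you phrase the computation via degrees of $r_0$ and $r_1$ in the witness trees, whereas the paper states the $w$-values directly---but the argument is the same.
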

\begin{proof}    
    Assume that maximal section \(W(r_0,0,x_R^0)\subseteq \W\) has \(x_R^0 > 0\). So \(r_0\) is adjacent to non-center terminals \(r_1,\dots, r_{x_R^0}\) in \(\W\). We apply Lemma~\ref{lem:sections} and consider witness tree 
    \begin{align*}
        W' \coloneqq \bigcup_{i\in \I(W)\backslash\{0\}} W(r_i,x_L^i,x_R^i) \bigcup W(r_0,0,0) \bigcup W(r_1,0,x_R^0-1)\bigcup \mathcal{P}(\I(W)\cup\{1\})
    \end{align*}
    Let \(w'\) be the vector imposed on \(E\) by \(W'\). It is clear that \(w(s_0r_0) = w(s_0s_1) = w'(s_1r_1) = x_R^0+1\), and \(w'(s_0r_0) = w'(s_0s_1) = w(s_1r_1) = 1\), and for all other \(e \in E\backslash\{s_0r_0, s_0s_1, s_1r_1\}\), \(w(e) = w'(e)\).
    Thus the difference between \(h_{\W}(T)\) and \(h_{W'}(T)\) is:
    \begin{align*}
        &H_{w(s_0r_0)} - H_{w'(s_0r_0)} + H_{w(s_1r_1)}- H_{w'(s_1r_1)} + \alpha(H_{w(s_0s_1)} - H_{w'(s_0s_1)}) \\
        =& H_{x_R^0+1} - 1 + 1 - H_{x_R^0+1} + \alpha(H_{x_R^0+1} - 1)  =  \alpha( H_{x_R^0+1} - 1) > 0
    \end{align*}
    Thus, \(\bnu_T(\W)\) can be reduced if \(x_R^0 > 0\), contradicting the assumption that \(\bnu_T(\W)\) is minimum. Demonstrating that \(x^{q+1}_L = 0\) in \(\W\) can be shown symmetrically. 
\end{proof}
For all future witness trees we consider we will assume that \(r_0\) and \(r_{q+1}\) are centers. Thus, we can see \(h_{\W}(S(r_0,0,0)) = c(s_0r_0)H_{w(s_0r_0)}+c(s_0s_1)H_{w(s_0s_1)} = 1 +\alpha\) and \(h_{\W}(S(r_{q+1},0,0)) = c(s_{q+1}r_{q+1}) H_{w(s_{q+1}r_{q+1})} = 1\).

We can now state a lemma that provides a general formula for \(h_W(S(r_i,x^i_L,x^i_R))\), \(i\in [q]\), where \(W(r_i,x_L^i,x_R^i)\subseteq W\) is a maximal section.
\begin{lemma}  
\label{lem:formula}
    Let \(W\) be a laminar witness tree. For \(i\in [q]\), let  \(W(r_i,x^i_L,x^i_R) \subseteq W\) be a maximal section. We have
    \begin{align*}
        h_W (S(r_i, x_L^i, x_R^i)) =\alpha \left( \sum_{j=2}^{x^i_L+1} H_{j} + \sum_{j=1}^{x^i_R+1} H_{j} \right) + x^i_L + x^i_R + H_{x^i_L + x^i_R +2} 
    \end{align*}
\end{lemma}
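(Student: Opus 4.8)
The plan is to compute $w(e)$ for every edge $e$ of the subtree $S(r_i,x_L^i,x_R^i)$ directly from the structure of the maximal section $W(r_i,x_L^i,x_R^i)$, and then sum $c(e)H_{w(e)}$ over those edges. Recall $S(r_i,x_L^i,x_R^i)$ consists of the path $s_{i-x_L^i},\dots,s_{i+x_R^i+1}$ together with the pendant terminals $r_{i-x_L^i},\dots,r_{i+x_R^i}$. The section $W(r_i,x_L^i,x_R^i)$ is the star centered at $r_i$ with leaves $r_{i-x_L^i},\dots,r_{i+x_R^i}$, and since it is \emph{maximal} in $W$, the only edges of $W$ crossing into this subtree are these star edges plus (possibly) two edges of the center-path $\mathcal{P}(\I(W))$ joining $r_i$ to the previous and next centers. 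The key observation — exactly as in the node lower bound of Section~\ref{sec:nodelower} — is that those two center-path edges leave $S(r_i,x_L^i,x_R^i)$ through its two extreme \emph{vertices} $s_{i-x_L^i}$ and $s_{i+x_R^i+1}$ (the terminal $r_i$ connects to neighbouring centers only via Steiner-path edges that exit the subtree), and therefore they do \emph{not} contribute to $w$ of any edge strictly inside $S(r_i,x_L^i,x_R^i)$; they only affect edges outside. So for the purpose of $h_W(S(r_i,x_L^i,x_R^i))$ we may pretend $W$ restricted here is just the star.

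First I would handle the terminal edges $e = s_{i+k}r_{i+k}$ for $k=-x_L^i,\dots,x_R^i$. For $k\neq 0$, the only $W$-edge whose $T$-path uses $s_{i+k}r_{i+k}$ as an interior edge is the star edge $r_ir_{i+k}$, so $w(s_{i+k}r_{i+k}) = 1$ (being internal to exactly one path — recall the convention of (\ref{eq:w}); here, after the terminal-removal preprocessing is \emph{not} in force since we are in the EWT/Steiner-claw setting, so actually $w$ just counts paths using the edge, and that count is $1$). For $k=0$, every star edge $r_ir_{i+k}$ with $k\neq 0$ uses $s_ir_i$ as an interior edge, so $w(s_ir_i) = x_L^i+x_R^i$. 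Each such terminal edge has cost $1$, contributing $\big(x_L^i+x_R^i\big)\cdot H_1 + H_{x_L^i+x_R^i}$... wait — here I must be careful whether the pendant edge $s_ir_i$ is counted: re-examining, $s_ir_i$ lies on the path $r_i$–$r_{i+k}$ for all $k\neq 0$, so indeed $w(s_ir_i)=x_L^i+x_R^i$, but I should double check against the displayed formula, which has an $H_{x_L^i+x_R^i+2}$ term and a linear term $x_L^i+x_R^i$. The linear term $x_L^i+x_R^i$ must come from the $x_L^i+x_R^i$ cost-$1$ terminal edges with $w=1$ (each contributing $H_1=1$), and the $H_{x_L^i+x_R^i+2}$ term must come from the center edge $s_ir_i$ \emph{once the center-path contribution is added back} — i.e. when $r_i$ is genuinely a center, the two center-path edges $r_{i'}r_i$ and $r_ir_{i''}$ also have $s_ir_i$ interior, pushing $w(s_ir_i)$ up to $x_L^i+x_R^i+2$. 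So the right bookkeeping is: the two center-path edges \emph{do} count for $s_ir_i$.

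Next I would handle the Steiner-path edges $e = s_{i+k}s_{i+k+1}$ for $k=-x_L^i,\dots,x_R^i$ (cost $\alpha$ each). The star edge $r_ir_{i+m}$ has $T$-path $r_{i+m}$–$s_{i+m}$–$\cdots$–$s_i$–$r_i$, so it uses $s_{i+k}s_{i+k+1}$ iff $i+k, i+k+1$ both lie between $i$ and $i+m$ inclusive. A straightforward count then gives, on the right side, $w(s_{i+k}s_{i+k+1}) = -k$ for $k=-x_L^i,\dots,-1$ (so values $x_L^i, x_L^i-1,\dots,1$) and $w(s_{i+k}s_{i+k+1}) = k+1$ for $k=0,\dots,x_R^i$ (values $1,2,\dots,x_R^i+1$) — plus, again, the center-path edge on the appropriate side adds $1$ to the extreme one. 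This yields $\alpha\big(\sum_{j=1}^{x_L^i}H_j + \sum_{j=1}^{x_R^i+1}H_j\big)$ before center-path adjustments; folding in the two center-path edges (one bumps the leftmost Steiner edge from $H_{x_L^i}$ to $H_{x_L^i+1}$, the other is already accounted through the right side, or symmetrically) turns $\sum_{j=1}^{x_L^i}H_j$ into $\sum_{j=2}^{x_L^i+1}H_j$. I expect the \textbf{main obstacle} to be getting these center-path adjustments exactly right and consistent with the $S(r_0,0,0)$, $S(r_{q+1},0,0)$ boundary cases computed just above the lemma: one must verify that the two center-path edges incident to $r_i$ in $\mathcal P(\I(W))$ contribute to precisely the edges $s_ir_i$, $s_{i-x_L^i-1}s_{i-x_L^i}$-adjacent region and $s_{i+x_R^i+1}s_{i+x_R^i+2}$-adjacent region in the way that produces the shifts $\sum_{j=1}^{x_L^i}H_j \to \sum_{j=2}^{x_L^i+1}H_j$ and $H_{x_L^i+x_R^i}\to H_{x_L^i+x_R^i+2}$. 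Once the per-edge values of $w$ are pinned down, summing $c(e)H_{w(e)}$ over all $2(x_L^i+x_R^i)+2$ edges of $S(r_i,x_L^i,x_R^i)$ is a routine collection of terms yielding exactly
\[
  \alpha\Big(\sum_{j=2}^{x_L^i+1}H_j + \sum_{j=1}^{x_R^i+1}H_j\Big) + x_L^i + x_R^i + H_{x_L^i+x_R^i+2},
\]
which is the claimed identity.
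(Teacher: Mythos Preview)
Your approach---compute $w(e)$ for each edge of $S(r_i,x_L^i,x_R^i)$ and sum---is exactly the paper's. But there is a concrete error in how you account for the two edges of $\mathcal{P}(\I(W))$ incident to $r_i$, and the mechanism you describe would not actually yield the formula you write down. You first assert that these center-path edges ``do not contribute to $w$ of any edge strictly inside $S(r_i,x_L^i,x_R^i)$'', and later that each ``adds $1$ to the extreme one''. Neither is right. The edge $r_{i'}r_i\in\mathcal{P}(\I(W))$ (with $i'$ the previous center) has $T$-path $r_{i'},s_{i'},\dots,s_i,r_i$, which traverses \emph{every} left Steiner edge $s_{i+j}s_{i+j+1}$ for $j=-x_L^i,\dots,-1$ as well as the pendant edge $s_ir_i$; symmetrically $r_ir_{i''}$ traverses every right Steiner edge (for $j=0,\dots,x_R^i$) and also $s_ir_i$. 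So the center-path edges add $+1$ to \emph{each} Steiner edge on the respective side and $+2$ to $s_ir_i$. That is precisely why the left sum shifts from $\sum_{j=1}^{x_L^i}H_j$ to $\sum_{j=2}^{x_L^i+1}H_j$: every value in the multiset $\{1,\dots,x_L^i\}$ increases by one. Your stated mechanism---bumping only the leftmost edge---would instead produce $\sum_{j=1}^{x_L^i-1}H_j+H_{x_L^i+1}$, which is not the same.

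With this corrected, the per-edge values (as in the paper's proof) are $w(s_{i+j}s_{i+j+1})=x_L^i+2+j$ for $j=-x_L^i,\dots,-1$ and $w(s_{i+j}s_{i+j+1})=x_R^i+1-j$ for $j=0,\dots,x_R^i$, while $w(s_ir_i)=x_L^i+x_R^i+2$ and $w(s_{i+k}r_{i+k})=1$ for $k\neq0$. (Your formulas $-k$ and $k+1$ happen to give the right multisets of values, but with the orientation along the path reversed.) Summing $c(e)H_{w(e)}$ then gives the stated identity directly.
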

\begin{proof}
    Let \(w\) be the vector imposed on \(E\) by \(W\). We first consider edge \(e\in L\cap S(r_i,x_L^i, x_R^i)\). Clearly, \(w(e) = x_L^i + x_R^i +2\) if \(e\) is incident to a center terminal, and \(w(e)=1\) otherwise.

    Now consider edge \(e= s_{i+j}s_{i+j+1}\in O\cap S(r_i,x_L^i,x_R^i)\). For \(j = 0,\dots,x_R^i\), we know that \(e\) is on the \(r_i\textrm{-} r_{i+k}\) path in \(T\) for \(k=j+1,\dots, x_R^i\), and the path between the endpoints of an edge in \(\mathcal{P}(\I(W))\). Since \(W\) is laminar, we know that that these are the only edges of \(W\) with \(e\) on the path between their endpoints in \(T\). Therefore, \(w(e) = x_R^i +1 - j\). Similarly, for \(j=-1,\dots, -x_L^i\), we can see that \(w(e) = x_L^i +2 + j\). Therefore,
    \begin{align*}
        \sum_{j=-1}^{-x_L^i} H_{w(s_{i+j}s_{i+j+1})} + \sum_{j=0}^{x_R^i} H_{w(s_{i+j}s_{i+j+1})} = \sum_{j=2}^{x_L^i+1} H_j + \sum_{j=1}^{x_R^i +1} H_j 
    \end{align*}    
\end{proof}
We now show that the center of every maximal section of \(\W\) is, in some sense, in the ``middle'' of its terminals. 
\begin{lemma}
    \label{lem:sectionmid}
    Let \(W(r_i,x^i_L,x^i_R) \subseteq \W\) be a maximal section. Then \(|x^i_L-x^i_R| \leq 1\). 
\end{lemma}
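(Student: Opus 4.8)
The plan is to prove this by a local \emph{recentering} exchange, contradicting the minimality of $\W$. Suppose, for contradiction, that some maximal section $W(r_i,x^i_L,x^i_R)\subseteq\W$ is unbalanced; by the left--right symmetry of the instance it suffices to treat the case $x^i_L\ge x^i_R+2$. Since the sections with centers $r_0$ and $r_{q+1}$ are $W(r_0,0,0)$ and $W(r_{q+1},0,0)$, we must have $i\in[q]$, and $x^i_L\ge 2$. The leftmost leaf of this section is $r_{i-x^i_L}$; its index cannot be $0$ (as $r_0$ is a center, hence not a leaf of another section), so $i-x^i_L\ge 1$. In particular $r_{i-1}$ exists, is a non-center terminal, $s_{i-1}$ lies inside the subtree $S(r_i,x^i_L,x^i_R)$, and $i\ge 3$ so that $i-1\in[q]$.

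Next I would form a new laminar witness tree $W'$ from $\W$ by replacing, inside $\W$, the maximal section $W(r_i,x^i_L,x^i_R)$ with $W(r_{i-1},x^i_L-1,x^i_R+1)$, and rerouting the (at most two) edges of $\mathcal{P}(\I(\W))$ incident to $r_i$ so that they become incident to $r_{i-1}$ instead. Both stars span exactly $\{r_{i-x^i_L},\dots,r_{i+x^i_R}\}$, and one checks directly that $S(r_{i-1},x^i_L-1,x^i_R+1)=S(r_i,x^i_L,x^i_R)$ as subgraphs of $T$; hence, by the same bookkeeping as in Lemma~\ref{lem:sections}, $W'$ is again a witness tree, it is laminar, $r_{i-1}$ is a center of degree $\ge 2$ in $W'$ while $r_i$ becomes a degree-$1$ leaf, and the vector $w$ imposed on $E$ changes only on edges of $S(r_i,x^i_L,x^i_R)$.

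It then remains to compare costs. Applying Lemma~\ref{lem:formula} to both sections,
\begin{align*}
    h_\W\!\left(S(r_i,x^i_L,x^i_R)\right) &= \alpha\left(\sum_{j=2}^{x^i_L+1}H_j+\sum_{j=1}^{x^i_R+1}H_j\right)+x^i_L+x^i_R+H_{x^i_L+x^i_R+2},\\
    h_{W'}\!\left(S(r_{i-1},x^i_L-1,x^i_R+1)\right) &= \alpha\left(\sum_{j=2}^{x^i_L}H_j+\sum_{j=1}^{x^i_R+2}H_j\right)+x^i_L+x^i_R+H_{x^i_L+x^i_R+2},
\end{align*}
so that $h_{W'}-h_\W=\alpha\left(H_{x^i_R+2}-H_{x^i_L+1}\right)<0$, since $x^i_L+1\ge x^i_R+3>x^i_R+2$. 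Because $T$ is the disjoint union of the subtrees $S(r_\iota,\cdot,\cdot)$ over the centers of the relevant witness tree, $\bnu_T(\cdot)$ equals $\tfrac{1}{c(E)}$ times the sum of these $h$-values, and only this one summand changed; hence $\bnu_T(W')<\bnu_T(\W)$, contradicting the minimality of $\W$. The case $x^i_R\ge x^i_L+2$ is symmetric, recentering to $r_{i+1}$.

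The only genuine obstacle is the verification in the second step: that the recentered object $W'$ is a legitimate laminar witness tree, and that rerouting the center-path edges does not disturb $w$ on edges outside $S(r_i,x^i_L,x^i_R)$. This is exactly the structural content already packaged in Lemmas~\ref{lem:sections} and~\ref{lem:formula}, and once those are invoked the arithmetic above is immediate.
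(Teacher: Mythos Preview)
Your proof is correct and follows essentially the same approach as the paper: a local recentering exchange (shifting the center by one toward the heavier side) combined with Lemma~\ref{lem:formula} to compute the cost difference $\alpha(H_{x^i_R+2}-H_{x^i_L+1})$, yielding a strict improvement and a contradiction. The paper's proof is the mirror image (it assumes $x^i_R>x^i_L+1$ and recenters to $r_{i+1}$), and is otherwise identical.
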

\begin{proof}
    Assume there is a maximal section \(W(r_I, x_L^i, x_R^i) \subseteq \W\) such that \(|x_L^i - x_R^i| >1\). Without loss of generality we assume that \(x_R^i > x_L^i+1\), the other case can be handled similarly. Consider witness tree that removes the section \(W(r_i,x_L^i,x_R^i)\) from \(\W\) and adds the section \(W(r_{i+1}, x_L^i+1, x_R^i-1) \) in its place, \(W' \coloneqq \bigcup_{\iota\in \I(W)\backslash\{i\}}  W(r_\iota, x_L^\iota, x_R^\iota) \bigcup W(r_{i+1}, x_L^i+1, x_R^i-1) \bigcup \mathcal{P}(\I(W)\cup\{i+1\}\backslash{i}) \). By Lemma~\ref{lem:formula} we have 
    \begin{align*}
        h_{W'}(S(r_{i+1},x_L^i+1, x_R^i-1)) = \alpha \left( \sum_{j=2}^{x_L^i+2} H_j + \sum_{j=1}^{x_R^i}H_j\right) + x_L^i + x_R^i + H_{x_L^i + x_R^i +2}
    \end{align*}
    Therefore, the difference between \(h_{\W} (S_{r_i},x_L^i, x_R^i)\) and \(h_{W'}(S(r_{i+1},x_L^i+1, x_R^i-1))\) is
    \begin{align*}
        h_{\W}(S(r_i,x_L^i, &x_R^i))- h_{W'}(S(r_{i+1},x_L^i+1, x_R^i-1))\\
        &= \alpha\left( \sum_{j=2}^{x^i_L+1} H_{j} + \sum_{j=1}^{x^i_R+1} H_{j} -\sum_{j=2}^{x_L^i+2} H_j - \sum_{j=1}^{x_R^i}H_j\right)\\
        &= \alpha(H_{x_R^i +1} - H_{x_L^i+2} ) > \alpha (H_{x_L^i+2} - H_{x_L^i + 2}) = 0
    \end{align*}
    Therefore, \(\bnu_T(W') < \bnu_T(\W)\), contradicting our assumption on \(\W\). 
\end{proof}
For every maximal section \(W(r_i, x_L^i,x_R^i) \subseteq \W \) we can assume without loss of generality that \(x_R^i \ge x_L^i\). To see this, suppose \(x_R^i < x_L^i\), by Lemma~\ref{lem:sectionmid}, \(x_R^i +1 = x_L^i\). Consider the witness tree \(W'\coloneqq \bigcup_{\iota \in \I(W)\backslash\{i\}} W(r_\iota, x_L^\iota, x_R^\iota) \cup \W(r_{i+1}, x_L^i-1, x_R^i+1) \bigcup \mathcal{P}(\I(W)\cup \{i+1\}\backslash\{i\})\). By Lemma~\ref{lem:formula}, we see that \(\bnu_T(\W) = \bnu_T(W')\), so can consider \(W'\) instead of \(\W\).

\begin{lemma}
\label{lem:t6}
    Let \(\W(r_i, x^i_L, x^i_R) \subseteq \W\) be a maximal section. Then \(x^i_L + x^i_R + 1 \leq 5\).
\end{lemma}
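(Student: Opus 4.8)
The plan is to mimic the structure of Lemmas~\ref{lem:sectionmid} and the preceding paragraphs: assume for contradiction that some maximal section $W^*(r_i,x_L^i,x_R^i)\subseteq W^*$ has $x_L^i+x_R^i+1 \geq 6$, and exhibit a different laminar witness tree with strictly smaller $\bnu_T$. By the normalization already established (Lemma~\ref{lem:sectionmid} and the subsequent paragraph), we may assume $x_R^i \geq x_L^i$ and $|x_L^i-x_R^i|\le 1$, so $x_L^i+x_R^i \geq 5$ forces $x_L^i \geq 2$, $x_R^i\geq 2$. The idea is to split the section $W^*(r_i,x_L^i,x_R^i)$ into two shorter consecutive sections, using Lemma~\ref{lem:sections}: pick indices $j,k$ with $i-x_L^i = j-x_L^j$, $i+x_R^i = k+x_R^k$, $j+x_R^j+1 = k-x_L^k$, so that $W' := \bigcup_{\iota\in\I(W^*)\setminus\{i\}} W^*(r_\iota,x_L^\iota,x_R^\iota)\cup W^*(r_j,x_L^j,x_R^j)\cup W^*(r_k,x_L^k,x_R^k)\cup\mathcal P(\I(W^*)\cup\{j,k\}\setminus\{i\})$ is a feasible laminar witness tree. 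Concretely I would cut the section roughly in half, e.g. replace a section spanning $m=x_L^i+x_R^i+2$ Steiner nodes with two sections spanning $\lceil m/2\rceil$ and $\lfloor m/2\rfloor$ of them, choosing each new center in the middle of its block (legitimate by the same normalization used for $W^*$ itself).

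The key step is the numerical comparison: using Lemma~\ref{lem:formula}, compute $h_{W^*}(S(r_i,x_L^i,x_R^i))$ and $h_{W'}(S(r_j,\cdot,\cdot)) + h_{W'}(S(r_k,\cdot,\cdot))$ (note the path edges of $\mathcal P$ contribute the same on both sides outside these subtrees, and $S(r_i,x_L^i,x_R^i) = S(r_j,x_L^j,x_R^j)\cup S(r_k,x_L^k,x_R^k)$ as edge sets of $T$), and show the difference is positive whenever the section is too long. Writing $a=x_L^i+x_R^i$, the left side is $\alpha(\sum_{j=2}^{x_L^i+1}H_j + \sum_{j=1}^{x_R^i+1}H_j) + a + H_{a+2}$; the right side has one extra "$+1$" unit cost (two centers instead of one), two harmonic-of-degree terms $H_{\cdot+2}$ for the two shorter sections instead of one $H_{a+2}$, and the $\alpha$-weighted sums over the (now shorter) pairs of Steiner-node paths, plus the cost $\alpha H_1$ of the new path edge $s_{k}\cdot$ between the two sub-blocks wait — more precisely, the middle edge of $O$ that previously had weight $1$ on the $r_i$-$r_{i\pm x}$ path now sits on the $\mathcal P$ path between the two new centers and still has weight contributing $H_1=1$; one must track this carefully. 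The upshot is an inequality of the form: the $\alpha$-terms strictly decrease by a controlled amount (since splitting a long star's path into two short stars' paths replaces large harmonic summands by small ones, as in the proof of Lemma~\ref{lem:sectionmid}), the $H_{a+2}$ term is replaced by two much smaller $H$ terms, and this gain beats the fixed loss of one unit cost plus $\alpha H_1 = \alpha$, precisely once $a+2 \geq 6$.

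I expect the main obstacle to be the bookkeeping of exactly which edges change weight under the split — in particular correctly accounting for the single "break" edge of $O$ between the two new blocks and for the behaviour of the endpoints $r_0, r_{q+1}$ cases — and then verifying the resulting one-variable inequality for all $a \geq 4$ (equivalently $x\geq$ some threshold), which is where an auxiliary estimate analogous to Lemma~\ref{lem:useful6} or Lemma~\ref{claim:useful} will be invoked: one reduces to showing something like $\alpha(H_{x+1}+H_x) + H_{x} - H_{2x} - (\text{small}) > 0$ for $x$ beyond a threshold, using $\alpha=\tfrac{32}{90}$ and the fact that $H_{2x}-H_x = \sum_{j=x+1}^{2x}\frac1j < \sum_{j=x+1}^{2x}\frac1{x} = 1$ together with $\alpha\cdot\frac{2}{x+1}$ estimates. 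The boundary value $a+2 = 5$ (i.e. $t_\alpha = 5$) should be exactly the break-even point, matching the constant $t_\alpha=5$ chosen for the $\alpha = 32/90$ regime in the upper-bound construction, which is a good consistency check that the threshold is tight.
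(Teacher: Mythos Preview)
Your overall strategy---split a section that is too long into two shorter sections and show $\bnu_T$ strictly decreases---is exactly what the paper does, and your identification of Lemma~\ref{lem:sections}, Lemma~\ref{lem:formula}, and Lemma~\ref{lem:useful6} as the relevant tools is correct. The main difference is the choice of split. You propose cutting the section roughly in half; the paper instead peels off a \emph{fixed} size-three section $W(r_{i+x_R^i-1},1,1)$ from the right end, leaving $W(r_{i-1},x_L^i-1,x_R^i-2)$ as the remainder. With this split the paper sets
\[
P(x_L^i,x_R^i)\;=\;\alpha\bigl(H_{x_L^i+1}+H_{x_R^i}+H_{x_R^i+1}-1-2H_2\bigr)+1+H_{x_L^i+x_R^i+2}-H_{x_L^i+x_R^i-1}-H_4,
\]
checks the base cases $(x_L,x_R)\in\{(2,3),(3,3)\}$ by hand, and then proves monotonicity
\[
P(x_L,x_R)-P(x_L,x_R-1)\;=\;\alpha\Bigl(\tfrac{1}{x_R+1}+\tfrac{1}{x_R}\Bigr)+\tfrac{1}{x_L+x_R+2}-\tfrac{1}{x_L+x_R-1}\;\ge\;\alpha\Bigl(\tfrac{1}{x_R+1}+\tfrac{1}{x_R}\Bigr)+\tfrac{1}{2x_R+1}-\tfrac{1}{2x_R-2}\;>\;0
\]
using $x_L\ge x_R-1$, which is \emph{exactly} the statement of Lemma~\ref{lem:useful6}. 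The advantage of the asymmetric peel-off is that only one of the two resulting pieces varies with the section size, so the inductive comparison is a clean one-variable difference; your halving split would have both pieces grow with $a$, making the monotonicity argument messier (though still doable). Your guessed form of the final inequality (``something like $\alpha(H_{x+1}+H_x)+H_x-H_{2x}-\cdots$'') is not what actually arises, and the bookkeeping worry about the ``break'' edge is already absorbed by Lemma~\ref{lem:formula}, since the subtrees $S(r_j,\cdot,\cdot)$ and $S(r_k,\cdot,\cdot)$ partition the edges of $S(r_i,x_L^i,x_R^i)$.
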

\begin{proof}
    We assume for the sake of contradiction that \(x_L^i + x_R^i + 1 \ge 6\). By Lemma~\ref{lem:sections}, consider witness tree \(W'\) that removes the section \(W(r_i, x_L^i, x_R^i)\) from \(\W\) and replaces it with sections \(W(r_{i-1},x_L^i-1, x_R^i-2)\) and \(W(r_{i+x_R^i-1}, 1,1)\). Let \(\I(W') = \I(W) \cup  \{i-1,i+x_R^i-1\}\backslash\{i\}\). That is, \(W'\) is equal to 
    \[
        \bigcup_{\iota \in \I(W)\backslash\{i\}} W(r_\iota, x_L^\iota, x_R^\iota) \cup W(r_{i-1},x_L^i-1, x_R^i-2) \cup W(r_{i+x_R^i-1}, 1,1) \bigcup \mathcal{P}(\I(W'))
    \] 
    By Lemma~\ref{lem:formula}, we can see that
    \begin{align*}
        h_{W'} (S(r_{i-1},& x_L^i-1, x_R^i-2)) + h_{W'}(S(r_{i+x_R^i-1}, 1, 1))\\
        & = \alpha\left( 2H_2 + H_1 + \sum_{j=2}^{x_L^i}H_j + \sum_{j=1}^{x_R^i-1}H_j \right) + x_L^i + x_R^i -1 + H_{x_L^i + x_R^i-1} + H_4
    \end{align*}
    Therefore, the difference between \(h_{\W}(T)\) and \(h_{W'}(T)\) is
    \begin{align*}
        &\alpha \left( \sum_{j=2}^{x^i_L+1} H_{j} + \sum_{j=1}^{x^i_R+1} H_{j} -\left( 2H_2 + H_1 + \sum_{j=2}^{x_L^i}H_j + \sum_{j=1}^{x_R^i-1}H_j \right)\right)\\
        & + x_L^i + x_R^i + H_{x_L^i + x_R^i +2} - (x_L^i + x_R^i -1 + H_{x_L^i + x_R^i-1} + H_4)\\
        =& \alpha (H_{x_L^i+1} + H_{x_R^i} + H_{x_R^i +1} - 1 - 2H_2) + 1+ H_{x_L^i + x_R^i+2} - H_{x_L^i + x_R^i-1} -H_4
    \end{align*}
    We denote this above difference by \(P(x_L^i, x_R^i)\). We will show that \(P(x_L^i, x_R^i) > 0\) for all \(x_L^i + x_R^i> 5\), contradicting the assumption that \(\W\) minimizes \(\bnu_T(\W)\). We proceed by induction on \(x_L^i + x_R^i\). Recall that we assume \(|x_L^i - x_R^i| \le 1\).

    For our base case, we assume \(x_R^i = 3 \ge x_L^i\). Consider the following cases for the value of \(x_L^i\).
    \begin{enumerate}
        \item Case: \(x_L^i = 2\).
        \begin{align*}
            P(2,3) = \alpha\left( 2H_{3} + H_4 - 1 - 2H_2 \right) + 1 + H_{7} - 2H_{4} = 61/1260 > 0
        \end{align*}
        \item Case: \(x_L^i =3\).
        \begin{align*}
            P(3,3) = \alpha\left( 2H_{4} + H_3 - 1 - 2H_2 \right) + 1 + H_{8} - H_{5} - H_{4} = 157/2520 > 0
        \end{align*}
    \end{enumerate}
    So our base case holds. 

    Our inductive hypothesis is to assume the inequality holds for \(x_L^i + x_R^i = k \ge 6\). We will show the claim holds when \(x_L^i + x_R^i = k+1\). Since we showed the base case for \(x_R^i = 3\) and \(x_L^i \in \{2,3\}\), we can assume that \(\max\{x_R^i, x_L^i\}\ge 4\). We will show that \(P(x_L^i, x_R^i) > P(x_L^i, x_R^i-1)\), and by the inductive hypothesis, this will show that \(P(x_L^i, x_R^i) > 0\) and the claim will be proven.    

    The difference between \(P(x_L^i, x_R^i)\) and \(P(x_L^i, x_R^i-1)\) is
    \begin{align*}
        &\alpha(H_{x_R^i+1} - H_{x_R^i-1}) + H_{x_L^i + x_R^i+2} - H_{x_L^i + x_R^i-1} - H_{x_L^i + x_R^i+1} + H_{x_L^i + x_R^i-2}\\
        =& \alpha \left( \frac{1}{x_R^i +1} + \frac{1}{x_R^i} \right) + \frac{1}{x_L^i+x_R^i +2}- \frac{1}{x_L^i+x_R^i -1}\\
        \ge &\alpha \left( \frac{1}{x_R^i +1} + \frac{1}{x_R^i} \right) +  \frac{1}{2x_R^i +1}- \frac{1}{2x_R^i -2}
    \end{align*}
    Where the last inequality follows since \(x_L^i \ge x_R^i -1\). Applying Lemma~\ref{lem:useful6} we see the difference is strictly positive, and thus the claim holds 
\end{proof}

With Lemma~\ref{lem:t6}, we can see that for any maximal section \(W(r_i,x_L^i,x_R^i)\subseteq \W\), we have \(x_L^i+x_R^i+1\in \{1,2,3,4,5\}\). We consider the value of \(\sum_{e\in S(r_i,x_L^i,x_R^i)}\frac{c(e)H_{w(e)}}{c(S(r_i,x_L^i,x_R^i))} \), for each case of \(x_L^i+x_R^i+1\), where \(c(S(r_i,x_L^i,x_R^i)) = \sum_{e\in S(r_i,x_L^i,x_R^i)} c(e)\)
\begin{enumerate}
    \item  \(\frac{h_{\W}(S(r_i,0,0))}{\alpha+1} = \frac{H_2 + \alpha}{\alpha+1} = \frac{167}{122} \approx 1.3688\)
    \item  \(\frac{h_{\W}(S(r_i,0,1))}{2(\alpha+1)} = \frac{\alpha(H(2)+1) + H(3)+1}{2(\alpha+1)} = \frac{335}{244} \approx 1.373\)
    \item  \(\frac{h_{\W}(S(r_i,1,1))}{3(\alpha+1)} = \frac{\alpha(2H(2)+1) + H(4)+2}{3(\alpha+1)} = \frac{991}{732} \approx 1.3538\)
    \item  \(\frac{h_{\W}(S(r_i,2,1))}{4(\alpha+1)} = \frac{\alpha(H(3)+2H(2)+1)+H(5)+3}{4(\alpha+1)} = \frac{3793}{2928} \approx 1.3568\)
    \item  \(\frac{h_{\W}(S(r_i,2,2))}{5(\alpha+1)} = \frac{\alpha(2H(3)+2H(2)+1)+H(6)+4}{5(\alpha+1)} = \frac{991}{732}\approx 1.3538\)
\end{enumerate}
Let \(E_1 = E \setminus ( S(r_0,0,0)\cup S(r_{q+1},0,0))\), it is clear that \( \frac{\sum_{e\in E_1}c(e)H_{w(e)}}{\sum_{e\in E_1} c(e)} \ge \frac{991}{732}\). Thus, since \(\frac{\sum_{e\in E_1} c(e)}{\sum_{e\in E}c(e)} = \frac{q(1+\alpha)}{(q+2)(1+\alpha) - \alpha}> \frac{q}{q+2}\), we can see, for \(q> \frac{2}{\varepsilon}\), that \(\bnu_T(\W) \ge \frac{991}{732} (1- \varepsilon)\).
\end{appendix}
\end{document}